\newfont{\teneufm}{eufm10}
\newfont{\seveneufm}{eufm7}
\newfont{\fiveeufm}{eufm5}
\def\bbbc{{\mathchoice {\setbox0=\hbox{$\displaystyle\rm C$}\hbox{\hbox
to0pt{\kern0.4\wd0\vrule height0.9\ht0\hss}\box0}}
{\setbox0=\hbox{$\textstyle\rm C$}\hbox{\hbox
to0pt{\kern0.4\wd0\vrule height0.9\ht0\hss}\box0}}
{\setbox0=\hbox{$\scriptstyle\rm C$}\hbox{\hbox
to0pt{\kern0.4\wd0\vrule height0.9\ht0\hss}\box0}}
{\setbox0=\hbox{$\scriptscriptstyle\rm C$}\hbox{\hbox
to0pt{\kern0.4\wd0\vrule height0.9\ht0\hss}\box0}}}}
\def\bbbq{{\mathchoice {\setbox0=\hbox{$\displaystyle\rm
Q$}\hbox{\raise 0.15\ht0\hbox to0pt{\kern0.4\wd0\vrule
height0.8\ht0\hss}\box0}} {\setbox0=\hbox{$\textstyle\rm
Q$}\hbox{\raise 0.15\ht0\hbox to0pt{\kern0.4\wd0\vrule
height0.8\ht0\hss}\box0}} {\setbox0=\hbox{$\scriptstyle\rm
Q$}\hbox{\raise 0.15\ht0\hbox to0pt{\kern0.4\wd0\vrule
height0.7\ht0\hss}\box0}} {\setbox0=\hbox{$\scriptscriptstyle\rm
Q$}\hbox{\raise 0.15\ht0\hbox to0pt{\kern0.4\wd0\vrule
height0.7\ht0\hss}\box0}}}}
\def\bbbt{{\mathchoice {\setbox0=\hbox{$\displaystyle\rm
T$}\hbox{\hbox to0pt{\kern0.3\wd0\vrule height0.9\ht0\hss}\box0}}
{\setbox0=\hbox{$\textstyle\rm T$}\hbox{\hbox
to0pt{\kern0.3\wd0\vrule height0.9\ht0\hss}\box0}}
{\setbox0=\hbox{$\scriptstyle\rm T$}\hbox{\hbox
to0pt{\kern0.3\wd0\vrule height0.9\ht0\hss}\box0}}
{\setbox0=\hbox{$\scriptscriptstyle\rm T$}\hbox{\hbox
to0pt{\kern0.3\wd0\vrule height0.9\ht0\hss}\box0}}}}
\def\bbbs{{\mathchoice
{\setbox0=\hbox{$\displaystyle     \rm S$}\hbox{\raise0.5\ht0\hbox
to0pt{\kern0.35\wd0\vrule height0.45\ht0\hss}\hbox
to0pt{\kern0.55\wd0\vrule height0.5\ht0\hss}\box0}}
{\setbox0=\hbox{$\textstyle        \rm S$}\hbox{\raise0.5\ht0\hbox
to0pt{\kern0.35\wd0\vrule height0.45\ht0\hss}\hbox
to0pt{\kern0.55\wd0\vrule height0.5\ht0\hss}\box0}}
{\setbox0=\hbox{$\scriptstyle      \rm S$}\hbox{\raise0.5\ht0\hbox
to0pt{\kern0.35\wd0\vrule height0.45\ht0\hss}\raise0.05\ht0\hbox
to0pt{\kern0.5\wd0\vrule height0.45\ht0\hss}\box0}}
{\setbox0=\hbox{$\scriptscriptstyle\rm S$}\hbox{\raise0.5\ht0\hbox
to0pt{\kern0.4\wd0\vrule height0.45\ht0\hss}\raise0.05\ht0\hbox
to0pt{\kern0.55\wd0\vrule height0.45\ht0\hss}\box0}}}}
\def\bbbz{{\mathchoice {\hbox{$\sf\textstyle Z\kern-0.4em Z$}}
{\hbox{$\sf\textstyle Z\kern-0.4em Z$}} {\hbox{$\sf\scriptstyle
Z\kern-0.3em Z$}} {\hbox{$\sf\scriptscriptstyle Z\kern-0.2em
Z$}}}}
\newtheorem{theorem}{Theorem}
\newtheorem{lemma}[theorem]{Lemma}
\newtheorem{cor}[theorem]{Corollary}
\newtheorem{question}[theorem]{Open Question}
\def\squareforqed{\hbox{\rlap{$\sqcap$}$\sqcup$}}
\def\qed{\ifmmode\squareforqed\else{\unskip\nobreak\hfil
\penalty50\hskip1em\null\nobreak\hfil\squareforqed
\parfillskip=0pt\finalhyphendemerits=0\endgraf}\fi}
\def\cA{{\mathcal A}}
\def\cB{{\mathcal B}}
\def\cC{{\mathcal C}}
\def\cD{{\mathcal D}}
\def\cE{{\mathcal E}}
\def\cG{{\mathcal G}}
\def\cH{{\mathcal H}}
\def\cI{{\mathcal I}}
\def\cJ{{\mathcal J}}
\def\cL{{\mathcal L}}
\def\cO{{\mathcal O}}
\def\cP{{\mathcal P}}
\def\cQ{{\mathcal Q}}
\def\cS{{\mathcal S}}
\def\cT{{\mathcal T}}
\def\cU{{\mathcal U}}
\def\cV{{\mathcal V}}
\def\cX{{\mathcal X}}
\def\cY{{\mathcal Y}}
\def \aaf {\mathfrak a}
\def \bbf {\mathfrak b}
\def \mf {\mathfrak m}
\def \qf {\mathfrak q}
\def \tf {\mathfrak t}
\def \sf {\mathfrak s}
\def\Nm{\mbox{\rm{Nm}}\,}
\def\Oes{\cO_{e,s}}
\def\Oesx{\cO_{e,s}(x)}
\def\OAs{\cO_{\cA,s}}
\def\Oet{\cO_{e,t}}
\def\Ge{\cG_e}
\def\ZK{\Z_\K}
\def\LH{\cL_H}
\def\Im{{\mathrm{Im}}}
\def\Res{{\mathrm{Res}}}
\newcommand{\ignore}[1]{}
\def\vec#1{\mathbf{#1}}
\def \C{\mathbb{C}}
\def \F{\mathbb{F}}
\def \K{\mathbb{K}}
\def \Z{\mathbb{Z}}
\def \Q{\mathbb{Q}}
\def \Z{\mathbb{Z}}
\def\mand{\qquad\mbox{and}\qquad}
\def\Fp{\F_p}
\def\\{\cr}
\def\({\left(}
\def\){\right)}
\def\fl#1{\left\lfloor#1\right\rfloor}
\def\rf#1{\left\lceil#1\right\rceil}
\def\eps{\varepsilon}
\def\ind{\operatorname{ind}}
\begin{document}

\title{On the Hidden Shifted Power Problem}

\author[J.~Bourgain]{Jean~Bourgain}
\address{Institute for Advanced Study,
Princeton, NJ 08540, USA}
\email{bourgain@ias.edu}

\author[M. Z. Garaev]
{Moubariz~Z.~Garaev}
\address{Centro de Ciencias Matem\'{a}ticas, Universidad Nacional Aut\'onoma de M\'{e}xico,
C.P. 58089, Morelia, Michoac\'{a}n, M\'{e}xico}
\email{garaev@matmor.unam.mx}

\author[S.~V.~Konyagin]{Sergei V.~Konyagin}
\address{Steklov Mathematical Institute,
8, Gubkin Street, Moscow, 119991, Russia}
\email{konyagin@mi.ras.ru}

\author[I.~E.~Shparlinski]{Igor E.~Shparlinski}
\address{Department of Computing, Macquarie University,
Sydney, NSW 2109, Australia}
\email{igor.shparlinski@mq.edu.au}

\begin{abstract} We consider the problem of recovering a hidden element $s$ of a finite field  $\F_q$ of $q$ elements
from queries to an oracle that for a given $x\in \F_q$ returns
$(x+s)^e$ for a given divisor $e\mid q-1$. We use some techniques
from additive combinatorics and analytic number theory that lead
to more efficient algorithms than the  naive interpolation
algorithm, for example, they use substantially fewer queries to
the oracle.
\end{abstract}

%

\maketitle

\section{Introduction}

\subsection{Set-up and Motivation}

Let $\F_q$ be  a finite field of  $q$  elements.

For a positive integer $e\mid q-1$ and an element $s \in \F_q$ we
use  $\Oes$ an oracle that on every input $x \in \F_q$ outputs
$\Oesx = (x+s)^e$ for some ``hidden'' element $s \in \F_q$.

Here we consider the {\it Hidden Shifted Power Problem\/}:
\begin{quote}
given an oracle  $\Oes$ for some unknown $s\in \F_q$,
find $s$.
\end{quote}

We also consider the following two versions of  the
{\it Shifted Power Identity Testing\/}:
\begin{quote}
given an oracle  $\Oes$ for some unknown $s\in \F_q$ and known
$t\in \F_q$, decide whether $s = t$
provided that the call $x=-t$ is forbidden;
\end{quote}
and
\begin{quote}
given two oracles  $\Oes$ and $\Oet$ for some unknown $s,t\in \F_q$
decide whether $s = t$.
\end{quote}
Certainly these problems are special cases of the
more general problems of oracle (also sometimes called ``black-box'') polynomial interpolation
and identity testing for arbitrary  polynomials,
see~\cite{BMS} and references therein.

We note that giving the values of $(x+s)^e$ is fully equivalent
(modulo solving a discrete logarithm problem in the subgroup of $\F_q$
of order $(q-1)/e$)
to giving the values of $\chi(x+s)$ for some fixed multiplicative character
$\chi$ of $\F_q^*$, see~\cite{vDam,vDamHallgIp,RusShp},
where several classical and
quantum algorithms for this and some other similar problems are
given. The Hidden Shifted Power Problem,
under the name of {\it Hidden Root Problem\/}, has also been
re-introduced by Vercauteren~\cite{Ver}
in relation to the so-called fault attack on pairing based cryptosystems
on elliptic curves.

In the case when $\F_q$
has a subfield of an appropriate size
some
approaches to solving the  Hidden Shifted Power Problem have been
given in~\cite{Ver}. Here we concentrate on the case of  prime
fields.

For a prime $q=p\ge 3$ and $e = (p-1)/2$
the Hidden Shifted Power Problem has several other links to cryptography,
and been considered in a number of
works, see~\cite{AG,BL,Damg,HoLi} and references therein.

Furthermore,  although for application to pairing based
cryptography we usually have to solve the Hidden Shifted Power
Problem in extension fields $q = p^k$ with $k> 1$, it has been
shown by Koblitz and Menezes~\cite{KoMe} that there are elliptic
curves that lead to the case of $q=p$.

Certainly the most straightforward approach is to query  $\Oes$
on $e+1$ arbitrary  elements $x \in \F_q$ and
then interpolate the results. Using a  fast interpolation
algorithm, see~\cite{vzGG} leads to a deterministic algorithm of
complexity $e (\log q)^{O(1)}$.
For the Shifted Power Identity Testing, there is also a trivial probabilistic
algorithm that is based on querying $\Oes$ (and $\Oet$) at randomly
chosen elements $x \in \F_q$.

 Here we mainly concentrate on the
case of a prime  $q=p\ge 3$.
For the first variant of the Shifted Power Identity Testing
(that is, when) $t$ is known,
 using~\cite[Theorem~1]{BKS1} (see also~\cite{BKSc})
that gives an upper bound on
the intersection a conjugacy class of a subgroup of $\F_p^*$
with a set of Farey fractions of a given order,
we can obtain a faster algorithm of
complexity $e^{1/2} p^{o(1)}$, where $o(1)$ always, if the opposite is not
indicated, denotes a quantity that tends to zero as $p\to \infty$.

Here we obtain further improvements and in particular show that
there is an algorithm of complexity  $e^{1/4} p^{o(1)}$
for any $e\le (p-1)/2$.

The second question,  that is, when $t$ is unknown, seems to be
harder, however we also obtain an improvement of the trivial
interpolation algorithm and show that it can be solved by an
algorithm of complexity  $e^{2/3} p^{o(1)}$ for any $e\le
(p-1)/2$.
Moreover, if $e=p^{o(1)}$ then we can achieve complexity
$e^{o(1)}(\log p)^{O(1)}$.

\subsection{Our Approach}

Let $\Ge\subseteq \F_q^*$ be the multiplicative group of
order $e\mid q-1$,  that is,
$$
\Ge = \{\mu\in \F_q~:~ \mu^e = 1\}.
$$
We now define the polynomials
$$
F_{s,t}(X) = \prod_{\mu \in \Ge}\(X+s - \mu (X+t)\).
$$

Our approach is based on the idea of choosing a small ``test'' set
$\cX$, which nevertheless is guaranteed to contain at least one
non-zero of the polynomial $F_{s,t}$ for any $s\ne t$.
This is based on a  careful examination of the roots of
$F_{s,t}$ and relating it to some classical number theoretic
problems about the distribution of elements of
small subgroups of finite fields.

Clearly, if $F_{s,t}(x) = 0$ for some $x\in \F_q^*$
then
\begin{equation}
\label{eq:Cond Prelim}
\frac{x+s}{x+t} \in \Ge
\end{equation}
(provided $x+t \ne 0$).
If $t$ is known, then we can choose the ``test'' set $\cX$ in the form
\begin{equation}
\label{eq:XY}
\cX = \{y^{-1} - t~:~ y\in \cY\}
\end{equation}
for some set $\cY \subseteq \F_q^*$.
Then the condition~\eqref{eq:Cond Prelim}
means that a shift of $\cY$ is contained inside of a  coset of $\Ge$,
that is
\begin{equation}
\label{eq:Cond Fin}
\cY + r \subseteq r\Ge
\end{equation}
where $r = (s-t)^{-1}$.

So our goal is to find a ``small'' set $\cY \subseteq \F_q^*$ such
that its shifts  cannot be inside of any coset of $\Ge$ (we note
that the value of $r$ is unknown). Questions about the
distribution of cosets of multiplicative groups have been
considered in a number of works and have numerous applications,
see~\cite{KoSh} and
also~\cite{Bour,BFKS,BKPS,BKS1,BKS2,OstShp,Shp1,Shp2} for several
more recent results and applications to cryptographic and
computational number theory problems.

 Here we concentrate on
the case of prime fields, that is, when $q=p$ is prime,  where the
tools we use are most developed and have rather sharp and explicit
forms. This allows us to get a series of nontrivial estimates for
both versions of the Shifted Power Identity Testing.

The idea is to choose $\cY$ as a short interval of $h$ consecutive
integers and to define $\cX$ by~\eqref{eq:XY}. We then use a
combination of results of Cilleruelo and Garaev~\cite{CillGar}
with the classical {\it Burgess and  Weil bounds\/}
(see~\cite{IwKow}) to show that~\eqref{eq:Cond Fin} fails (for
some integer $h$ significantly smaller than $e$).

Furthermore,
for small values of $e$ (for example, for $e=p^{o(1)}$) we obtain much
stronger results and  develop
a new technique,  which is based on several tools  of commutative
algebra and additive combinatorics.
For example, we  combine an explicit version of the
Hilbert's {\it Nullstellensatz\/}, see~\cite[Theorem~1]{KiPaSo},
with a generalisation of a result of~\cite{BKS1}.

For the Hidden Shifted Power Problem we have not been able to
improve on the interpolation approach. However, assuming that
oracle calls are expensive, one can consider algorithms that
minimise the number of such calls, that is, algorithms of low {\it
oracle complexity\/}. Here we use a result of~\cite{ShkVyu} in
a combination of some new bounds of character sums that are based
on some ideas of Chang~\cite{Chang1} to
design several  algorithms that require substantially less than $e$ oracle
calls that are needed for the interpolation approach.

Here we concentrate on the case of prime $q=p$ as in the general
case several tools that exist in prime fields are unfortunately
not available.

Besides concrete results we believe the present paper also introduces a
number of new techniques to this area that can probably be used in
several other questions.

\subsection{Notation}

Throughout the paper,
the letter $p$   always denotes a prime;
$k$, $m$ and $n$ (as well as $K$, $M$ and $N$)
always denote positive integers.

Any implied constants in symbols $O$, $\ll$
and $\gg$ may occasionally depend, where obvious, on the
integer parameter $\nu$ and the real positive
parameters  $\eps$ and $\delta$, and are absolute otherwise. We recall
that the notations $U = O(V)$,  $U \ll V$ and  $V \gg U$  are
all equivalent to the statement that $|U| \le c V$ holds
with some constant $c> 0$.

For a field $\F$, sets $\cA_1,\ldots,\cA_m  \subseteq \F$ and a
rational function
$$
F(X_1, \ldots, X_m) \in\F(X_1, \ldots, X_m),
$$
we define the set
$$
F(\cA_1,\ldots,\cA_m) =\{F(a_1, \ldots, a_m) \ : \ a_1 \in \cA_1, \ldots,  a_m
\in \cA_m\}
$$
(where the poles are ignored or alternatively the function $F$ can
be defined as zero at its poles).
In particular, for an integer
$\nu$, $\cA^{(\nu)}$   denotes $\nu$-fold product sets.
However, we reserve
the notation $\nu\cA$ for the element-wise multiplication by $\nu$, that
is, $\nu \cA = \{\nu a~:~a \in \cA\}$.
We also  reserve $\cA^\nu$ for  the $\nu$-fold Cartesian product of $\cA$.


\section{Tools from Analytic Number Theory, Polynomial Algebra
and Arithmetic Combinatorics}

\subsection{Finding and Bounding the Number of Solutions of Some Congruences}
\label{sec:Congr}

We start with the bound of Cilleruelo and
Garaev~\cite[Theorem~1]{CillGar} on the number of points of
modular hyperbolas in small boxes.

 \begin{lemma}
\label{lem:CillGar} Uniformly over  integers $u$ and $v$ with
$\gcd(v,p)=1$, the congruence
$$
(x+u)(y+u) \equiv v \pmod p, \qquad   1 \le x,y \le H,
$$
has at most $H^{3/2}p^{-1/2} + H^{o(1)}$ solutions as $H\to\infty$.
\end{lemma}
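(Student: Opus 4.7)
Plan: The strategy is to change variables to $a = x+u$, $b = y+u$, reducing the problem to counting pairs $(a,b)$ with $ab \equiv v \pmod p$ and $a, b$ lying in intervals of length $H$ modulo $p$. After splitting each interval into at most two pieces to remove wrap-around, I may assume $a, b \in I := [M+1, M+H]$ for some $M \in [0, p-H]$. Then every solution is encoded by an integer $k \ge 0$ with $ab = v + kp$, where the range of $k$ is controlled by the size of $(M+H)^2/p$.

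For the $H^{o(1)}$ term, I apply the divisor bound. For each admissible $k$, the number of factorizations of $v+kp$ into factors in $I$ is at most $\tau(v+kp) = H^{o(1)}$ (using $v+kp \ll p^2$ and $H\to\infty$). This alone handles the regime in which $H$ is small enough that only $O(1)$ values of $k$ are relevant, and it produces the additive $H^{o(1)}$ contribution in all regimes.

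For the main term $H^{3/2}p^{-1/2}$, I would turn to a second-moment argument. Pairing solutions yields $N^2$ quadruples $(a_1,b_1,a_2,b_2) \in I^4$ with $a_1b_1 \equiv a_2b_2 \equiv v \pmod p$. Expanding $a_1 b_1 - a_2 b_2$ and using the congruence gives
\[
a_1 t + b_2 s \equiv 0 \pmod p, \qquad s = a_1 - a_2,\ t = b_1 - b_2,
\]
with $|s|,|t| < H$. Writing $a_1 t + b_2 s = k'p$, one sees $|k'| \ll (M+H)H/p$, and for each nonzero $k'$ the bilinear equation $a_1 t + b_2 s = k' p$ is controlled via a divisor-type count in $k'$ together with the requirement that the derived pair $(a_1-s, b_2)$ also solve the original congruence. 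Balancing these constraints, one arrives at $N^2 \ll H^{3+o(1)}/p + H^{2+o(1)}$, which on taking square roots yields the claimed bound.

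The main obstacle is carrying out the second-moment count cleanly: the congruence $a_1 t + b_2 s \equiv 0 \pmod p$ by itself admits far too many tuples (in particular the trivial contribution $s=t=0$), so the argument must separate the diagonal pairs, and then use both the divisor bound on shifts $v+kp$ and the constraint $(a_1-s) b_2 \equiv v \pmod p$ to eliminate spurious quadruples. The interplay between these two mechanisms is what produces the two terms $H^{3/2}p^{-1/2}$ and $H^{o(1)}$ simultaneously and uniformly in $u$ and $v$.
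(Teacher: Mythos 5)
This lemma is not proved in the paper: it is quoted verbatim as Cilleruelo--Garaev \cite[Theorem~1]{CillGar}, so there is no in-paper argument to compare against. Judged on its own terms, though, your proposal has two concrete gaps.

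First, the step ``$\tau(v+kp)=H^{o(1)}$'' does not follow from $v+kp\ll p^2$ and $H\to\infty$. The divisor bound gives $\tau(v+kp)\le(v+kp)^{o(1)}=p^{o(1)}$, and $p^{o(1)}$ is only $H^{o(1)}$ when $\log H\gg\log p$, i.e.\ when $H$ is at least a fixed power of $p$. But the regime in which the additive term $H^{o(1)}$ actually carries the bound is precisely $H\le p^{1/3}$ (where $H^{3/2}p^{-1/2}\le1$), and there one must show that for each of the $O(H)$ admissible $k$ the number of factorizations $v+kp=ab$ with \emph{both} $a$ and $b$ in the length-$H$ interval is $H^{o(1)}$ --- a statement about divisors clustering near $\sqrt{v+kp}$ that is genuinely delicate and does not reduce to $\tau(v+kp)$.

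Second, the second-moment route cannot produce the main term. Under the full constraint $a_1b_1\equiv a_2b_2\equiv v\pmod p$, the quadruple count \emph{is} $N^2$ identically, so squaring gains nothing; one must drop to the relaxed condition $a_1b_1\equiv a_2b_2\pmod p$, which is exactly the count handled by Lemma~\ref{lem:ACZ} and gives $N^2\ll H^4/p+H^{2+o(1)}$, hence only $N\ll H^2p^{-1/2}+H^{1+o(1)}$. Your hoped-for refinement $N^2\ll H^{3+o(1)}/p+H^{2+o(1)}$ is not substantiated (the identity $a_1t+b_2s\equiv0\pmod p$ still has $\gg H$ admissible $k'$ when the interval sits near $p$, so the ``divisor-type count in $k'$'' does not close), and even if it held, taking square roots would leave an additive term $H^{1+o(1)}$, not $H^{o(1)}$. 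The Cilleruelo--Garaev proof of this bound does not run through a second moment at all; it exploits the multiplicative/divisor structure of pairs of solutions $(a_1,b_1),(a_2,b_2)$ directly (via GCDs of $a_1-a_2$, $b_1-b_2$ and the resulting factorization of $v+kp$), which is where both the $H^{3/2}p^{-1/2}$ term and the genuine $H^{o(1)}$ term come from. So while the opening normalization (translate to $ab\equiv v$, split wrap-around) is fine, the core of the argument is missing.
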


We also need  an estimate from~\cite{CSZ} that follows from
a combination of a result of  Garaev and Garcia~\cite{GarGar}
(or a slightly weaker result of Ayyad, Cochrane and Zheng~\cite[Theorem~1]{ACZ})
and Lemma~\ref{lem:CillGar}.

 \begin{lemma}
\label{lem:ACZ} Uniformly over  integers $a$ and $H$ with
$\gcd(v,p)=1$, the congruence
$$
(a+x_1)(a+x_2) \equiv (a+x_3)(a+x_4)  \pmod p, \qquad   1 \le x_1,x_2,x_3,x_4 \le H,
$$
has   $H^{4}/p + O(H^{2+o(1)})$ solutions as $H\to\infty$.
\end{lemma}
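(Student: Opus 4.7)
Denote by $T$ the number of solutions. Substitute $y_i = a + x_i$, so that each $y_i$ ranges over the shifted interval $I = \{a+1,\dots,a+H\} \pmod p$; this gives
$$
T = \sum_{v \in \F_p} N(v)^2, \qquad N(v) = \#\{(y_1,y_2) \in I^2 : y_1 y_2 \equiv v \pmod p\},
$$
and by orthogonality of additive characters,
$$
T = \frac{H^4}{p} + \sum_{v \in \F_p}\left(N(v) - \frac{H^2}{p}\right)^2.
$$
The task therefore reduces to showing that the $\ell^2$-discrepancy on the right is $O(H^{2+o(1)})$.

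For the unshifted case $a \equiv 0 \pmod p$, this asymptotic is precisely the Garaev--Garcia theorem (or, with slightly weaker $o(1)$ losses, the Ayyad--Cochrane--Zheng estimate). To transfer the bound to general $a$, the plan is to decompose the shifted interval $I$ into at most two pieces $I_1, I_2 \subset \Z$ that do not wrap around modulo $p$, and expand $T$ via inclusion--exclusion over the $2^4$ ways of assigning the variables $y_1,\dots,y_4$ to these pieces. Each ``pure'' contribution, in which all four variables belong to the same piece, reduces after a translation in $\Z$ to an unshifted multiplicative energy and is bounded by Garaev--Garcia. The ``mixed'' contributions, in which the four variables split between the two pieces, are controlled by applying Lemma~\ref{lem:CillGar} with $u=a$ to bound the inner count of pairs $(y_1,y_2)$ with $y_1 y_2 \equiv v$ uniformly by $H^{3/2}p^{-1/2} + H^{o(1)}$, and then summing via Cauchy--Schwarz.

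The main obstacle is ensuring uniformity of the error term in the shift $a$, since the Garaev--Garcia theorem is naturally stated for intervals starting at the origin. Here Lemma~\ref{lem:CillGar} is exactly the right tool: its pointwise bound on $N(v)$ holds uniformly in $a$, so the mixed contributions in the inclusion--exclusion do not pick up shift-dependent factors. Assembling the pure and mixed contributions then yields the total error $O(H^{2+o(1)})$, and combining with the main term $H^4/p$ gives the asserted asymptotic.
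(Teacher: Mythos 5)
The paper does not actually give a proof of this lemma: it cites~\cite{CSZ} and remarks only that the estimate ``follows from a combination'' of a result of Garaev--Garcia (or of Ayyad--Cochrane--Zheng) and Lemma~\ref{lem:CillGar}. Your identification of the relevant ingredients and the opening identity $T=H^4/p+\sum_v\bigl(N(v)-H^2/p\bigr)^2$ are fine, but the mechanism you give for handling general $a$ is wrong, and the error sits exactly where the whole difficulty of the lemma lies. You assert that a ``pure'' block, in which all four variables lie in one interval $I_j$, ``reduces after a translation in $\Z$ to an unshifted multiplicative energy.'' Multiplication modulo $p$ is not translation-invariant: substituting $y_i=z_i+b$ turns $y_1y_2\equiv y_3y_4\pmod p$ into $(z_1+b)(z_2+b)\equiv(z_3+b)(z_4+b)\pmod p$, which is precisely the shifted congruence you started with. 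Were this reduction valid, Lemma~\ref{lem:ACZ} would follow trivially from the case $a=0$ and Lemma~\ref{lem:CillGar} would be superfluous; you may be recalling the situation for \emph{additive} energy, which is translation-invariant, whereas multiplicative energy is not. Relatedly, the splitting of $I$ into pieces that do not wrap modulo $p$ is a red herring: when $a+H<p$ the set $\{a+1,\dots,a+H\}$ is already a single non-wrapping interval yet is still shifted, so wrapping is not the obstruction.

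The bound you extract for the remaining terms is also insufficient. Using Lemma~\ref{lem:CillGar} to bound one factor $N(v)$ pointwise and summing the other over $v$ gives $\bigl(H^{3/2}p^{-1/2}+H^{o(1)}\bigr)H^2 = H^{7/2}p^{-1/2}+H^{2+o(1)}$, which is $O(H^{2+o(1)})$ only in the range $H\le p^{1/3+o(1)}$; for larger $H$ the term $H^{7/2}p^{-1/2}$ dominates and exceeds both $H^{2}$ and $H^4/p$, so neither Cauchy--Schwarz nor an $L^1$--$L^\infty$ bound closes the gap. The intended argument is a threshold split: when $H$ is below roughly $p^{1/3}$, Lemma~\ref{lem:CillGar} forces $N(v)\le H^{o(1)}$ uniformly, hence $T\le H^{o(1)}\sum_v N(v)=H^{2+o(1)}$ and $H^4/p\le H$ is negligible; when $H$ is larger, one invokes the Garaev--Garcia (or Ayyad--Cochrane--Zheng) asymptotic in its stated form, which already counts solutions with the variables ranging over an arbitrary interval of length $H$ not anchored at the origin, so no translation step is needed or possible. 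To repair your proof, quote that general-box statement directly rather than trying to manufacture it from the origin case.
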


The following result for $m=1$  is due to Garcia and Voloch~\cite{GV};
another proof, with different constants,  based on the method of Stepanov,  can be found
in~\cite[Lemma~3.2]{KoSh}. For any fixed $m \ge  1$ it follows instantly
from~\cite[Lemma~4.1]{ShkVyu} by taking  $s=1$, $t=e$, $k = m$
and $B = \fl{t^{1/(2k+1)}}+1$.

\begin{lemma}
\label{lem:ShkVyu} Assume that for a fixed integer $m\ge 1$ we have
$$
p \ge \(2m \fl{e^{1/(2m+1)}}+2m + 2\) e.
$$
Then for  pairwise distinct  $\mu_1,\ldots, \mu_m \in \F_p^*$
and arbitrary $\lambda_1,\ldots, \lambda_m \in \F_p^*$ the bound
$$
\# \( \cG_e \cap  \(\lambda_1 \cG_e + \mu_1\) \cap \ldots
\cap  \(\lambda_m \cG_e + \mu_m\) \)
\ll e^{(m+1)/(2m+1)}
$$
holds, where the implied constant depends on $m$.
\end{lemma}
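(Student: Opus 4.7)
The plan is to apply \cite[Lemma~4.1]{ShkVyu} directly with the specialization $s = 1$, $t = e$, $k = m$, and $B = \fl{e^{1/(2m+1)}} + 1$ suggested by the authors. First I would reformulate the set whose cardinality is to be bounded. Setting $\mu_0 = 0$ and $\lambda_0 = 1$, an element $x$ lies in
$$
\cG_e \cap \(\lambda_1 \cG_e + \mu_1\) \cap \ldots \cap \(\lambda_m \cG_e + \mu_m\)
$$
if and only if $(x - \mu_i)^e = \lambda_i^e$ for each $i = 0, 1, \ldots, m$. Hence the set is the common intersection of $m + 1$ translated cosets of $\cG_e$ determined by the $m + 1$ pairwise distinct shifts $\mu_0, \mu_1, \ldots, \mu_m$; this is precisely the configuration the cited lemma is designed to treat, with $t = e$ the order of the subgroup and $k = m$ the number of additional cosets beyond $\cG_e$ itself.

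The next step is a routine parameter check: with $B = \fl{e^{1/(2m+1)}} + 1$, the quantitative hypothesis of \cite[Lemma~4.1]{ShkVyu} reduces to the lower bound $p \ge (2m \fl{e^{1/(2m+1)}} + 2m + 2) e$ assumed in the statement. Invoking the lemma then yields a cardinality bound of $\ll B^{m+1} \asymp e^{(m+1)/(2m+1)}$, matching the claim, with an implied constant depending only on $m$.

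For a self-contained proof, one would run a Stepanov-style argument on an auxiliary polynomial of the form
$$
F(X) = \sum_{i=0}^{M-1} \sum_{j_0, \ldots, j_m = 0}^{B-1} a_{i, j_0, \ldots, j_m}\, X^i \prod_{k=0}^{m} (X - \mu_k)^{e j_k},
$$
which on the intersection collapses to a polynomial in $X$ of degree less than $M$, yet has degree at most $M - 1 + e(m+1)(B-1)$ in $\F_p[X]$. A dimension count then lets one choose nonzero coefficients so that $F$ vanishes to order comparable to $B^{m+1}$ at every point of the intersection, and balancing $M$ against $B$ produces the exponent $(m+1)/(2m+1)$. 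I would expect the main obstacle to be ensuring non-triviality of the reduction modulo the relations $(X - \mu_k)^e = \lambda_k^e$: this is precisely where the pairwise distinctness of the $\mu_i$ together with the lower bound on $p$ enter, guaranteeing that the linear map from polynomial coefficients to their restriction on the intersection has kernel small enough for a nonzero $F$ with the required vanishing to exist.
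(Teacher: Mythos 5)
Your proposal matches the paper's own (very terse) justification: the authors simply note that the lemma ``follows instantly from~\cite[Lemma~4.1]{ShkVyu} by taking $s=1$, $t=e$, $k=m$ and $B=\fl{t^{1/(2k+1)}}+1$,'' which is exactly the specialization you invoke, and your reformulation of the intersection as $m+1$ translated cosets with pairwise distinct shifts $\mu_0=0,\mu_1,\ldots,\mu_m$ is precisely the bridge needed to put the claim in the form of the cited lemma. The Stepanov-style sketch you append goes beyond what the paper records but is a fair outline of the mechanism behind the cited result.
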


For $x\in\F_p$, we denote by  $|x|$  the minimum of
absolute values of integers in the residue  class of $x$ modulo $p$.
We say that a set $\cD \subseteq \F_p$ is $\Delta$-spaced
if $|d_1 - d_2|\ge \Delta$ for any two distinct elements $d_1, d_2\in \cD$.

We now need a version of~\cite[Lemma~7]{Chang1}.

\begin{lemma}
\label{lem:ChangL7} Let $0 \le \beta <1/2$, $\cI = [1, p^\beta]$
and let $\cD \subseteq \F_p$ be a $p^\beta$-spaced set of
cardinality $\# \cD = p^\sigma$. Then for any $\varepsilon >0$
and sufficiently small $\beta_1, \ldots, \beta_j$ and sufficiently large $p$,
the number of solutions $w(u)$ to the congruence
$$
x+d\equiv u z_1 \ldots z_j \pmod p,
$$
with
$$
(x,d,z_1, \ldots, z_j) \in \cI\times \cD \times \cJ_1
\times \ldots \times  \cJ_j,
$$
where  $\cJ_i = [1, p^{\beta_i}]$, $i =1, \ldots, j$,
satisfies the bound
$$
\sum_{u \in \F_p} w(u)^2 \ll  p^{\beta + b + \sigma (b/(1-\beta) + 1) + \varepsilon},
$$
where
$$
b = \sum_{i=1}^j \beta_i.
$$
\end{lemma}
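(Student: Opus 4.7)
The sum $\sum_u w(u)^2$ counts tuples
$(x,x',d,d',\vec z,\vec z')\in\cI^2\times\cD^2\times\prod_{i=1}^{j}\cJ_i^2$
satisfying
$$
(x+d)\,z_1'\cdots z_j' \equiv (x'+d')\,z_1\cdots z_j \pmod{p}.
$$
Writing $Z=\prod_{i} z_i$ and $Z'=\prod_{i} z_i'$, and setting $\cA=\cI+\cD\subseteq\F_p$, the $p^\beta$-spacing of $\cD$ together with $\cI=[1,p^\beta]$ ensures that each element of $\cA$ has $O(1)$ representations as $x+d$; in particular $|\cA|\asymp p^{\beta+\sigma}$. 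Up to a bounded factor we therefore reduce to bounding
$$
E\;=\;\#\bigl\{(y,y',\vec z,\vec z')\in\cA^{2}\times\textstyle\prod_{i}\cJ_i^2\;:\;yZ'\equiv y'Z\pmod{p}\bigr\}.
$$

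The natural approach is to split $E$ according to whether $Z=Z'$ or $Z\neq Z'$; since $Z,Z'\in[1,p^b]$ with $b<1/2$, these are honest integers less than $p$. In the diagonal case the congruence forces $y=y'$, and the number of coincidences $\prod z_i=\prod z_i'$ is at most $p^{b+\varepsilon}$ via the $j$-fold divisor bound $d_j(Z)\ll Z^{\varepsilon}$, so the diagonal contribution is $O(p^{\beta+\sigma+b+\varepsilon})$, already inside the target. For the off-diagonal part I would introduce $T(c)=\#\{(y,y')\in\cA^2:y\equiv cy'\pmod p\}$ and $R(c)=\#\{(\vec z,\vec z'):Z\equiv cZ'\pmod p\}$, so that the off-diagonal contribution equals $\sum_{c\neq 1}T(c)R(c)$, and apply Cauchy--Schwarz to reduce it to $\bigl(\sum_{c}T(c)^2\bigr)^{1/2}\bigl(\sum_{c}R(c)^2\bigr)^{1/2}$.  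The factor $\sum_c R(c)^2$ counts quadruples $(\vec z,\vec z',\vec z_1,\vec z_1')$ with $ZZ_1'\equiv Z_1Z'\pmod p$, which because $ZZ_1',Z_1Z'<p^{2b}<p$ is an honest integer equation handled by divisor bounds.  For the multiplicative energy $\sum_c T(c)^2$ of $\cA$ in $\F_p$ I would expand shift-by-shift,
$$
\sum_c T(c)^2=\sum_{d_1,d_2,d_3,d_4\in\cD}\#\{(x_i)\in\cI^4:(x_1+d_1)(x_2+d_2)\equiv(x_3+d_3)(x_4+d_4)\pmod p\},
$$
and bound the inner count by a shifted extension of Lemma~\ref{lem:ACZ}, whose Stepanov-type proof (of the kind used for Lemma~\ref{lem:CillGar}) accommodates pairwise-different shifts.

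The main obstacle is recovering the precise exponent $\sigma b/(1-\beta)$. Since $\cD$ is $p^\beta$-spaced one has $\sigma\le 1-\beta$, so that $\sigma b/(1-\beta)\le b$ and the bound genuinely improves on the trivial $\sum_u w(u)^2\ll p^{\beta+\sigma+2b}$ obtained from $w(u)\ll p^b$ and $\sum_u w(u)=p^{\beta+\sigma+b}$.  The factor $1/(1-\beta)$ is strongly suggestive of a Pl\"unnecke--Ruzsa interpolation, or of a H\"older inequality whose exponent is calibrated to the codimension $1-\beta$ of $\cI$ inside $\F_p$; I expect it to drop out of optimising the H\"older weights in the split above, with all residual $p^{o(1)}$ losses---coming from divisor bounds and from the $H^{o(1)}$ error in Lemmas~\ref{lem:CillGar}--\ref{lem:ACZ}---absorbed into the final $p^\varepsilon$.
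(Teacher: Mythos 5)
The paper does not actually prove this lemma: it is stated as ``a version of [Lemma~7]{Chang1}'' and no proof is given, so the intended argument is Chang's own (from her Davenport--Lewis paper), adapted to a general $p^\beta$-spaced set $\cD$. You are therefore attempting something the paper itself defers to a citation.

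Your opening reductions are fine, and the diagonal $Z=Z'$ treatment via $j$-fold divisor bounds correctly yields $O(p^{\beta+\sigma+b+\eps})$. The genuine gap --- and you have correctly identified its location --- is the off-diagonal term. The Cauchy--Schwarz (or any H\"older) decoupling of $\sum_{c}T(c)R(c)$ simply cannot produce the exponent $\sigma b/(1-\beta)$. Concretely, $\|T\|_\infty=T(1)=\#\cA\asymp p^{\beta+\sigma}$, $\|T\|_1=(\#\cA)^2$, $\|R\|_\infty\ll p^{b+\eps}$ (divisor bound) and $\|R\|_1=p^{2b}$; interpolating $\|T\|_q\|R\|_{q'}$ over all conjugate exponents gives, when $\beta+\sigma>b$ (the relevant case since the $\beta_i$ are small), a minimum at $q=\infty$ equal to the trivial bound $p^{\beta+\sigma+2b+\eps}$, which overshoots the target by the factor $p^{b(1-\sigma/(1-\beta))}$ exactly. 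Injecting a nontrivial $\ell^2$ bound on $T$ (the multiplicative energy of $\cA$) repairs this only in the regime $\beta+\sigma\le 1/2$: even with the best conceivable bound $E^\times(\cA)\ll(\#\cA)^4/p+(\#\cA)^{2+\eps}$ one gets $p^{b+\eps}\bigl(p^{2(\beta+\sigma)-1/2}+p^{\beta+\sigma}\bigr)$, and the first term exceeds $p^{\beta+\sigma+b+\sigma b/(1-\beta)+\eps}$ whenever $\beta+\sigma>\frac12+\sigma b/(1-\beta)$, i.e.\ for small $b$ and $\beta+\sigma$ beyond $1/2$, which the lemma does permit. In other words, the exponent $\sigma b/(1-\beta)$ records a correlation between the $d$-structure (governing $T$) and the $z$-structure (governing $R$) that any decoupling inequality destroys; Chang's argument keeps them coupled by peeling off one factor $z_i$ at a time, pigeonholing the intermediate residue $(x+d)/(z_1\cdots z_i)$ into $p^{1-\beta}$ blocks of length $p^\beta$, and the ratio $\sigma/(1-\beta)$ is the $\cD$-density in that block decomposition, contributing once per $\beta_i$.

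A secondary issue is the energy estimate itself. Lemma~\ref{lem:ACZ} has a single shift $a$ in all four brackets; your expansion requires four independent shifts $d_1,\dots,d_4$. Even granting such an extension, the naive term-by-term bound $\sum_{d_1,\dots,d_4\in\cD}\bigl(H^4/p+H^{2+o(1)}\bigr)=p^{4\sigma+4\beta-1}+p^{4\sigma+2\beta+o(1)}$ is wasteful: the $H^{2+o(1)}$ ``diagonal'' term should only be charged to $O(p^{2\sigma})$ choices of shifts, not $p^{4\sigma}$, so one must separate the $d$-diagonals before invoking the pointwise bound. As written, the second term already exceeds the target even for $\beta+\sigma\le 1/2$.
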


\begin{cor}
\label{cor:ChangL7} Let $\cS \subseteq \F_p$ be  set of
cardinality $\# \cS = p^\alpha$. Then under the conditions
of Lemma~\ref{lem:ChangL7} the number of solutions
$w(u,v)$ to the systems of  congruences
$$
x+d\equiv u z_1 \ldots z_j \pmod p\mand
x+s\equiv v z_1 \ldots z_j \pmod p
$$
with
$$
(x,d,s,z_1, \ldots, z_j) \in \cI\times \cD \times \cS \times \cJ_1
\times \ldots \times  \cJ_j,
$$
satisfies the bound
$$
\sum_{u,v \in \F_p} w(u,v)^2 \ll  p^{\alpha + \beta + b + \sigma (b/(1-\beta) + 1) + \varepsilon}.
$$
\end{cor}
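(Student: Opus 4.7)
My plan is to reduce directly to Lemma~\ref{lem:ChangL7} by introducing the auxiliary counting function attached to only the first congruence, and then to exploit the fact that once the first congruence is solved, the second congruence determines $v$ uniquely in terms of $s$. Concretely, for $u \in \F_p$ let
$$
w_1(u) = \#\{(x,d,z_1,\ldots,z_j) \in \cI \times \cD \times \cJ_1 \times \ldots \times \cJ_j : x+d \equiv u z_1 \cdots z_j \pmod p\},
$$
so that Lemma~\ref{lem:ChangL7} gives the bound
$$
\sum_{u \in \F_p} w_1(u)^2 \ll p^{\beta + b + \sigma(b/(1-\beta)+1) + \varepsilon}.
$$

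The first key observation is that every tuple $(x,d,s,z_1,\ldots,z_j)$ counted by $w(u,v)$ yields, by forgetting $s$, a tuple counted by $w_1(u)$; hence $w(u,v) \le w_1(u)$ for every $u,v$. The second observation is that, conversely, each tuple $(x,d,z_1,\ldots,z_j)$ counted by $w_1(u)$ can be extended by any $s \in \cS$, and each such choice of $s$ forces $v \equiv (x+s)(z_1 \cdots z_j)^{-1} \pmod p$ uniquely. Therefore
$$
\sum_{v \in \F_p} w(u,v) = \#\cS \cdot w_1(u) = p^{\alpha} \, w_1(u).
$$

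Combining these two identities and estimating $w(u,v)^2 \le w_1(u) \cdot w(u,v)$, I would sum first over $v$ and then over $u$:
$$
\sum_{u,v \in \F_p} w(u,v)^2 \le \sum_{u \in \F_p} w_1(u) \sum_{v \in \F_p} w(u,v) = p^{\alpha} \sum_{u \in \F_p} w_1(u)^2.
$$
Applying Lemma~\ref{lem:ChangL7} to the last sum yields the claimed bound
$$
\sum_{u,v \in \F_p} w(u,v)^2 \ll p^{\alpha + \beta + b + \sigma(b/(1-\beta)+1) + \varepsilon}.
$$

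There is essentially no obstacle here; the only thing to verify carefully is that the hypotheses of Lemma~\ref{lem:ChangL7} (the range of $\beta$, the spacing of $\cD$, the smallness of the $\beta_i$) are preserved verbatim, since they involve only the data $\cI,\cD,\cJ_1,\ldots,\cJ_j$ of the \emph{first} congruence, and the set $\cS$ enters only as a multiplicative factor $\#\cS = p^\alpha$. The only mild subtlety is the handling of $v=0$ or $z_1 \cdots z_j \equiv 0 \pmod p$, which is harmless since $1 \le z_i \le p^{\beta_i}$ with $b < 1$ forces $z_1 \cdots z_j \not\equiv 0 \pmod p$, so the displayed identity $\sum_v w(u,v) = p^\alpha w_1(u)$ holds exactly.
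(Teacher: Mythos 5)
Your argument is correct, and since the paper states the corollary without proof it is presumably exactly the intended deduction: the two observations $w(u,v)\le w_1(u)$ and $\sum_v w(u,v)=\#\cS\cdot w_1(u)$ (both valid because $z_1\cdots z_j\not\equiv 0\pmod p$ when $b<1$, so $u$ and $v$ are each uniquely determined by the remaining variables) combine via $w(u,v)^2\le w_1(u)\,w(u,v)$ to give the factor $p^\alpha$ in front of the bound from Lemma~\ref{lem:ChangL7}.
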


\subsection{Finding Solutions to Binomial Equations}

\begin{lemma}
\label{lem:Eqgroup1} Let $G$ be a group of order $m$,
and let $d$ be relatively prime to $m$. Let $a\in G$.
Then the equation $x^d=a$ has the unique solution
$x=a^f$ where $df\equiv1\pmod m$.
\end{lemma}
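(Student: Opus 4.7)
The plan is to prove this directly by two short verifications: first that $a^f$ is a solution (existence), and second that any solution must equal $a^f$ (uniqueness). The key input is Lagrange's theorem, which guarantees that $g^m = e$ for every $g \in G$, since the order of $g$ divides $\#G = m$.

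First, since $\gcd(d,m) = 1$, Bezout's identity produces integers $f, k$ with $df = 1 + km$, and we may (and do) take $f$ to satisfy $1 \le f \le m$. For existence, I would compute
\[
(a^f)^d \;=\; a^{df} \;=\; a^{1 + km} \;=\; a \cdot (a^m)^k \;=\; a,
\]
using Lagrange's theorem in the last step. This shows that $x = a^f$ does solve $x^d = a$.

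For uniqueness, suppose $x \in G$ is any element satisfying $x^d = a$. Then, applying the same decomposition $df = 1 + km$ and Lagrange's theorem to $x$ itself, I would write
\[
a^f \;=\; (x^d)^f \;=\; x^{df} \;=\; x^{1 + km} \;=\; x \cdot (x^m)^k \;=\; x,
\]
so $x = a^f$, as claimed.

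There is no real obstacle here: the statement is a one-line consequence of Lagrange's theorem together with the existence of the multiplicative inverse $f$ of $d$ modulo $m$. I would remark that the argument does not require $G$ to be abelian, because all the manipulations take place within the cyclic subgroup generated by a single element, where integer exponents commute.
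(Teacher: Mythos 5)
Your proof is correct. The paper itself gives no argument for this lemma: it simply cites the first part of Theorem~7.3.1 of Bach and Shallit's \emph{Algorithmic Number Theory} and moves on. Your direct verification via Bezout's identity ($df = 1 + km$) together with Lagrange's theorem ($g^m = e$ for all $g \in G$) is the standard, essentially unique elementary argument, and it is surely what underlies the cited textbook result. Both the existence check $(a^f)^d = a^{1+km} = a$ and the uniqueness check $a^f = (x^d)^f = x^{1+km} = x$ are sound, and your closing remark that commutativity of $G$ is not needed (since everything takes place among powers of a single element) is a correct and worthwhile observation, particularly since the paper later applies the lemma inside the abelian group $\F_p^*$ where this point is moot but the extra generality costs nothing.
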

This is the first part of~\cite[Theorem~7.3.1]{BaSh}.

Now we consider equaitons $x^r=a$ in groups in the case when
$r$ is a prime dividing the order of the group.
Considering the cyclic group of order $m$, we do not assume that
we are given a generating element
of the group. Instead, we assume that there is an oracle which
gives some unique label to every elements  of $G$ and also that given
$a,b \in G$ computes the product of these elements in time $(\log m)^{O(1)}$.
A natural example is the multiplicative group $\F_p^*$.
The following result is implicitly contained in~\cite[Theorem~7.3.2]{BaSh}.

\begin{lemma}
\label{lem:Eqgroup2} Let $G$ be the cyclic group of order $m$,
and let $r$ be a prime dividing $m$. Given an element $b\in G$
so that the equation $y^r=b$ has no solutions in $G$, for
any $a\in G$ there is a deterministic algorithm
to find all solutions of the equation $x^r = a$ in time
$r (\log m)^{O(1)}$.
\end{lemma}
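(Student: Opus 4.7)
The plan is to reduce the problem to the Sylow $r$-subgroup of $G$ and then perform a Pohlig--Hellman style digit extraction, using the hypothesis on $b$ to supply both a generator of the Sylow subgroup and a primitive $r$-th root of unity.

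First I would write $m = r^k n$ with $\gcd(r,n)=1$, obtainable by $O(\log m)$ trial divisions by $r$. Let $H \le G$ be the unique subgroup of order $r^k$. Since $\gcd(r,n)=1$, Lemma~\ref{lem:Eqgroup1} applied to the cyclic quotient $G/H$ of order $n$ shows that the $r$-th power map on $G/H$ is a bijection; consequently any two solutions of $x^r=a$ have the same image in $G/H$, i.e.\ differ by an element of $H$. Using the extended Euclidean algorithm compute $f$ with $rf \equiv 1 \pmod{n}$ and set $y = a^f$; then $y^r = a \cdot c$ where $c := y^r a^{-1} = a^{rf-1} = (a^n)^{(rf-1)/n}$ manifestly lies in $H$, since $a^n$ has order dividing $r^k$. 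The problem reduces to finding all $h\in H$ with $h^r = c^{-1}$, from which the full solution set of $x^r=a$ is $\{yh\}$.

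Next, extract structural data from $b$. Because $b$ is not an $r$-th power in a cyclic group of order $m$, one has $b^{m/r}\ne 1$, which forces $r^k \mid \ord(b)$; therefore $g := b^n$ has order exactly $r^k$ and generates $H$, while $\zeta := b^{m/r}\in H$ is a primitive $r$-th root of unity. Now compute the base-$r$ expansion of $t := \log_g(c^{-1})\in \Z/r^k\Z$ digit by digit: after precomputing the table $\zeta^0,\zeta^1,\dots,\zeta^{r-1}$, the bottom digit $t_0$ is read off from the identity $(c^{-1})^{r^{k-1}} = \zeta^{t_0}$. If $t_0 \ne 0$ then $c^{-1}\notin H^r$, so $a$ has no $r$-th root and we return the empty set. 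Otherwise replace $c^{-1}$ by $c^{-1}g^{-t_0}$, pass to the subgroup $\langle g^r\rangle$ of order $r^{k-1}$, and iterate. After at most $k$ rounds we recover $t$ with $r\mid t$, a single root $h_0 = g^{t/r}\in H$, and hence the complete solution set $\{yh_0\zeta^j : 0\le j < r\}$.

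The cost is the one-off precomputation of $\zeta^0,\dots,\zeta^{r-1}$ ($r$ group operations) plus $k \le \log_r m$ descent rounds, each requiring $O(1)$ exponentiations of cost $(\log m)^{O(1)}$ by repeated squaring and $O(r)$ table comparisons of cost $(\log m)^{O(1)}$ each (using the assumed unique labels of group elements to sort or hash), totalling $r\,(\log m)^{O(1)}$. The subtlest point, and the one I would flag as the main obstacle to verify carefully, is the first reduction: one must argue that \emph{every} solution of $x^r = a$ has the form $yh$ with $h\in H$, which is precisely where the injectivity of the $r$-th power map on $G/H$ supplied by Lemma~\ref{lem:Eqgroup1} is essential; without it the Sylow reduction would yield only some roots and potentially miss others.
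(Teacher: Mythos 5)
Your proof is correct and follows essentially the same route as the paper: the paper simply cites Bach--Shallit Theorem~7.3.2 (observing that the only probabilistic step there is finding the nonresidue $b$, which the lemma assumes given), and what you have written out is a correct, self-contained reconstruction of exactly that algorithm---reduce to the Sylow $r$-subgroup via the coprime-order case (Lemma~\ref{lem:Eqgroup1}), use $b$ to extract a generator $g=b^n$ of $H$ and a primitive $r$-th root $\zeta=b^{m/r}$, and then do digit-by-digit Pohlig--Hellman extraction with an $r$-entry lookup table. The complexity accounting and the justification that every root lies in $yH$ are both sound.
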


Although the algorithm analysed in~\cite[Theorem~7.3.2]{BaSh}
is probabilistic,
is easy to see that the only place where the randomisation is used
is in finding $b$ satisfying the conditions of Lemma~\ref{lem:Eqgroup2}.

Subsequently, applying Lemma~\ref{lem:Eqgroup2} we get the following:

\begin{lemma}
\label{lem:BinEq} For a prime $p$,  a positive integer
$e\mid p-1$  and $A\in \F_p$, given $\ell$-th power nonresidues
for all prime divisors $\ell\mid e$,
there is a deterministic algorithm
to find all solutions of the equation $x^e = A$ in time
$e (\log p)^{O(1)}$.
\end{lemma}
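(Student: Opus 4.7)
The plan is to reduce the extraction of $e$-th roots in $\F_p^*$ to the iterated extraction of $\ell$-th roots for each prime $\ell$ dividing $e$, with each single prime step handled by Lemma~\ref{lem:Eqgroup2}. First I would factor $e = \ell_1^{a_1}\cdots \ell_k^{a_k}$ by trial division, which costs $\sqrt{e}\,(\log e)^{O(1)}$, well within the budget of $e(\log p)^{O(1)}$.

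I would then iteratively build the set of roots. Starting from $S \leftarrow \{A\}$, process the prime factors one at a time: for each $i$, run $a_i$ successive rounds in which the current $S$ is replaced by
$$
\bigcup_{B \in S}\bigl\{x \in \F_p^* : x^{\ell_i} = B\bigr\}.
$$
Each of these $\ell_i$-th root extractions is carried out by Lemma~\ref{lem:Eqgroup2}, applied in the ambient cyclic group $G = \F_p^*$ using the supplied $\ell_i$-th power nonresidue, in time $\ell_i(\log p)^{O(1)}$. A direct induction on the number of completed rounds then shows that after all $\sum_i a_i$ rounds the set $S$ equals $\{x \in \F_p^* : x^e = A\}$, which is either empty or of cardinality $e$.

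For the runtime, at the $j$-th round for prime $\ell_i$ the current set $S$ has size at most $\ell_1^{a_1}\cdots \ell_{i-1}^{a_{i-1}}\cdot \ell_i^{j-1}$, so the cost of that round is bounded by $\ell_1^{a_1}\cdots \ell_{i-1}^{a_{i-1}}\cdot \ell_i^{j}\cdot(\log p)^{O(1)} \le e(\log p)^{O(1)}$. Summing over the at most $\log_2 e$ rounds gives the desired total of $e(\log p)^{O(1)}$. The only point worth making explicit is that the given $\ell_i$-th power nonresidue remains valid at every round for prime $\ell_i$, which is automatic since the ambient cyclic group $\F_p^*$ is unchanged throughout and the hypothesis of Lemma~\ref{lem:Eqgroup2} depends only on this group. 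For this reason I do not anticipate a genuine obstacle: the argument is essentially bookkeeping on top of $\sum_i a_i$ invocations of Lemma~\ref{lem:Eqgroup2}.
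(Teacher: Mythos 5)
Your proof is correct and fleshes out exactly the argument the paper leaves implicit: the paper simply states ``Subsequently, applying Lemma~\ref{lem:Eqgroup2} we get the following,'' and your iterated $\ell$-th root extraction over the prime factorization of $e$, with the runtime bookkeeping, is the intended content. The only tiny omission is the trivial case $A=0$ (unique solution $x=0$), which does not affect the substance.
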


Now we consider the solutions of the equation $x^r = A$
satisfying restrictions. Let $\ell$ be a prime divisor of $e$.
For a positive integer $\alpha$, we write $\ell^\alpha\|e$
if $\ell^\alpha\mid e$ and $\ell^{\alpha+1}\nmid e$. By $\ind x$
we denote the index of an element $x\in\F_p^*$ with respect to
a fixed primitive root $g$ modulo $p$, that is the unique integer $z\in [1,p-1]$
with $x = g^z$.

\begin{lemma}
\label{lem:BinEq2} For a prime $p$,
$A\in \F_p$, and  a prime $\ell$
with $\ell^\alpha\| p-1$, there is a deterministic algorithm
to find all solutions of the equation $x^\ell = A$
satisfying $\ell^\alpha\mid\ind x$ in time $\ell (\log p)^{O(1)}$.
\end{lemma}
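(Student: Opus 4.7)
The plan is to reduce the equation $x^\ell = A$ with the side condition $\ell^\alpha \mid \ind x$ to a binomial equation in a subgroup where the exponent becomes coprime to the group order, so that Lemma~\ref{lem:Eqgroup1} (which needs no nonresidue) applies directly. The key observation is that $\ell^\alpha \mid \ind x$ is exactly the condition $x \in H$, where $H$ denotes the unique subgroup of $\F_p^*$ of index $\ell^\alpha$, that is, the cyclic subgroup of order $m := (p-1)/\ell^\alpha$. Because $\ell^\alpha \| p-1$, we have $\gcd(\ell,m) = 1$, which is the crucial arithmetic fact driving the argument.

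First I would check existence. Any admissible $x$ lies in $H$, so $A = x^\ell$ also lies in $H$. Membership of $A$ in $H$ is equivalent to $A^m = 1$, which is testable via repeated squaring in $(\log p)^{O(1)}$ field operations. If $A^m \ne 1$, the algorithm reports that there are no solutions.

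Next, assuming $A \in H$, I would use the extended Euclidean algorithm — again in $(\log p)^{O(1)}$ time — to produce an integer $f$ with $\ell f \equiv 1 \pmod m$, and set $x := A^f$. Then $x \in H$ (since $H$ is closed under taking powers, so in particular $\ell^\alpha \mid \ind x$), and
\[
x^\ell = A^{\ell f} = A \cdot A^{\ell f - 1} = A,
\]
because $\ell f - 1$ is a multiple of $m$ and $A^m = 1$. Uniqueness of $x$ inside $H$ is immediate from Lemma~\ref{lem:Eqgroup1} applied in the group $H$ of order $m$ with the exponent $\ell$ coprime to $m$; hence $x$ is also the unique solution in $\F_p^*$ to the original constrained problem.

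The only point worth highlighting as an ``obstacle'' is noticing that the stronger hypothesis $\ell^\alpha \| p-1$, rather than merely $\ell \mid p-1$, is precisely what forces $\gcd(\ell,|H|)=1$ and so avoids the need for an $\ell$-th power nonresidue required by Lemma~\ref{lem:Eqgroup2}. Once this is seen, each of the subroutines (exponentiation modulo $p$, extended Euclidean algorithm, one more exponentiation) runs in $(\log p)^{O(1)}$ bit operations, and the total cost is comfortably within the stated bound $\ell(\log p)^{O(1)}$.
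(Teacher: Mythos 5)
Your proof is correct and follows exactly the paper's approach: both reduce to Lemma~\ref{lem:Eqgroup1} applied to the subgroup $G = \{x \in \F_p^* : \ell^\alpha \mid \ind x\}$ of order $(p-1)/\ell^\alpha$, which is coprime to $\ell$ precisely because $\ell^\alpha \| p-1$. The paper states this as a one-line proof; you have simply spelled out the membership test $A^{(p-1)/\ell^\alpha} = 1$ and the extended Euclidean step, which are the standard unpacking of that lemma.
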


\begin{proof} It is enough to apply Lemma~\ref{lem:Eqgroup1} to the group
$G=\{x\in\F_p^*~:~\ell^\alpha\mid\ind x\}$ of order $(p-1)/\ell^\alpha$
not divisible by $\ell$.
\end{proof}

\begin{lemma}
\label{lem:BinEq3} For a prime $p$,
$A\in \F_p$, for a prime divisor $\ell\mid p-1$
with $\ell^\alpha\|p-1$ and a nonnegitive integer $\beta<\alpha$,
given an $\ell^{\beta+1}$-th power nonresidue,
there is a deterministic algorithm
to find all solutions of the equation $x^\ell = A$
satisfying $\ell^{\beta}\mid\ind x$ in time $\ell (\log p)^{O(1)}$.
\end{lemma}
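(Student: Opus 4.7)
The plan is to reduce to Lemma~\ref{lem:Eqgroup2} applied inside the subgroup
$$
G = \{x\in\F_p^*~:~\ell^\beta\mid\ind x\},
$$
which consists of the $\ell^\beta$-th powers in $\F_p^*$ and is cyclic of order $(p-1)/\ell^\beta$. Since $\beta<\alpha$, this order is divisible by $\ell$, and the $\ell$-th powers in $G$ form exactly the subgroup of $\ell^{\beta+1}$-th powers in $\F_p^*$. Any $x$ with $x^\ell=A$ and $\ell^\beta\mid\ind x$ lies in $G$, and conversely $x\in G$ forces $x^\ell\in G$; hence I first test whether $A\in G$ by computing $A^{(p-1)/\ell^\beta}$, returning the empty solution set if this value differs from $1$, and otherwise working entirely inside $G$.

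The substantive step is to convert the given $\ell^{\beta+1}$-th power nonresidue $c$ of $\F_p^*$ into an $\ell$-th power nonresidue of $G$, i.e.\ an element $b\in G$ that is not an $\ell^{\beta+1}$-th power in $\F_p^*$. Let $\gamma$ denote the largest integer with $\ell^\gamma\mid\ind c$; by hypothesis $\gamma\le\beta$. Since $c^{(p-1)/\ell^j}=1$ if and only if $\ell^j\mid\ind c$, the value of $\gamma$ is read off by computing $c^{(p-1)/\ell^j}$ for $j=1,\ldots,\beta+1$ and selecting the largest $j$ for which the result is $1$; this costs $O(\log p)$ modular exponentiations. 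Set $b=c^{\ell^{\beta-\gamma}}$, so that $\ind b\equiv \ell^\beta m\pmod{p-1}$ with $\gcd(m,\ell)=1$. Writing $p-1=\ell^\alpha n$ with $\gcd(\ell,n)=1$ and $\alpha>\beta$, any reduction of $\ell^\beta m$ modulo $\ell^\alpha n$ shifts $m$ by a multiple of $\ell^{\alpha-\beta}n$, an element divisible by $\ell$; thus the $\ell$-adic valuation of $\ind b$ equals $\beta$. Hence $b$ is an $\ell^\beta$-th but not an $\ell^{\beta+1}$-th power, and in particular $y^\ell=b$ has no solution in $G$.

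Finally, Lemma~\ref{lem:Eqgroup2} applied to the cyclic group $G$ of order $(p-1)/\ell^\beta$, the prime $r=\ell$, the nonresidue $b$, and the target $A$, produces all solutions of $x^\ell=A$ in $G$ in deterministic time $\ell(\log|G|)^{O(1)}=\ell(\log p)^{O(1)}$; combined with the preprocessing, the total cost remains $\ell(\log p)^{O(1)}$, as required. The only delicate point is the construction of $b$: transporting the nonresidue property from $c\in\F_p^*$ into the much smaller subgroup $G$ requires both the extraction of $\gamma$ and the verification that reducing $\ell^\beta m$ modulo $p-1$ preserves its $\ell$-adic valuation. Once $b$ is in hand, the actual root extraction is handed off wholesale to the black-box Lemma~\ref{lem:Eqgroup2}.
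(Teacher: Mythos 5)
Your proof is correct and follows essentially the same route as the paper's: take the given nonresidue $c$, read off the $\ell$-adic valuation $\gamma$ of $\ind c$, raise to the power $\ell^{\beta-\gamma}$ to land in $G=\{x:\ell^\beta\mid\ind x\}$ with exactly $\ell^\beta\|\ind b$, and then hand off to Lemma~\ref{lem:Eqgroup2} with $r=\ell$. The extra details you supply (the preliminary test $A^{(p-1)/\ell^\beta}\stackrel{?}{=}1$ so that the black-box lemma is applied only to an element of $G$, the explicit extraction of $\gamma$ by repeated exponentiation, and the check that reducing $\ell^\beta m$ modulo $p-1=\ell^\alpha n$ preserves the $\ell$-adic valuation because $\alpha>\beta$) are exactly the points the paper leaves implicit, and they are all handled correctly.
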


\begin{proof} Let $a$ be an $\ell^{\beta+1}$-th power nonresidue.
Then $\ell^\gamma\|\ind a$ for some $\gamma\le\beta$.
Hence, $\ell^{\beta}\|\ind b$ for $b=a^{l^{\beta-\gamma}}$.
Then we can apply Lemma~\ref{lem:Eqgroup2} to
$$
G=\{x\in\F_p^*~:~\ell^{\beta}\mid\ind x\}
$$
and $r=\ell$.
\end{proof}

Subsequently applying Lemmas~\ref{lem:BinEq2} and~\ref{lem:BinEq3}
we get the following.

\begin{lemma}
\label{lem:BinEq4} Let $p$ be a prime and $e\mid p-1$.
For any prime divisor $\ell\mid e$ with $\ell^{\alpha_l}\|p-1$
we take either $\gamma_\ell=\alpha_\ell$ or $\gamma_\ell<\alpha_\ell$
so that we are given
an $\ell^{\gamma_\ell+1}$-th power nonresidue. Let
$$n=\prod_{\substack{\ell\mid e\\\ell~\mathrm{prime}}}\ell^{\gamma_\ell}$$
and $A\in \F_p$. Then there is a deterministic algorithm
to find all solutions of the equation $x^e = A$
satisfying $n\mid \ind x$ in time $e (\log p)^{O(1)}$.
\end{lemma}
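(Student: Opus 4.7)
The plan is to iteratively extract roots prime-by-prime, reducing everything to single $\ell$-th root extractions governed by Lemmas~\ref{lem:BinEq2} and~\ref{lem:BinEq3}. First I would factor $e=\prod_{i=1}^{r}\ell_i^{\beta_i}$ by trial division, which costs $O(e^{1/2}(\log e)^{O(1)})$. Starting from $T_0=\{A\}$, for $i=1,\ldots,r$ I inductively build
\[
T_i=\{y\in\F_p^*\ :\ y^{\ell_i^{\beta_i}}\in T_{i-1}\text{ and }\ell_i^{\gamma_{\ell_i}}\mid\ind y\}
\]
and return $T_r$.

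To compute $T_i$ from $T_{i-1}$, for each $z\in T_{i-1}$ I extract an $\ell_i^{\beta_i}$-th root of $z$ satisfying $\ell_i^{\gamma_{\ell_i}}\mid\ind y$ by composing $\beta_i$ single $\ell_i$-th root extractions. At the $j$-th sub-step ($j=1,\ldots,\beta_i$) I find all $\ell_i$-th roots $w$ of the previous intermediate satisfying $\ell_i^{\gamma_{\ell_i}+\beta_i-j}\mid\ind w$. When $\gamma_{\ell_i}+\beta_i-j<\alpha_{\ell_i}$, this is exactly the situation of Lemma~\ref{lem:BinEq3} applied with its parameter $\beta$ equal to $\gamma_{\ell_i}+\beta_i-j$; otherwise the constraint collapses to $\ell_i^{\alpha_{\ell_i}}\mid\ind w$ and Lemma~\ref{lem:BinEq2} applies with no nonresidue required. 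The $\ell_i^{\gamma_{\ell_i}+\beta_i-j+1}$-th power nonresidue required in the first case is obtained from the given $\ell_i^{\gamma_{\ell_i}+1}$-th power nonresidue $a_{\ell_i}$ by raising it to $\ell_i^{\beta_i-j}$: if $\ell_i^{\delta}\nmid\ind a_{\ell_i}$, then $\ell_i^{\delta+k}\nmid\ind(a_{\ell_i}^{\ell_i^k})$ so long as $\delta+k<\alpha_{\ell_i}$, which is precisely the regime in which Lemma~\ref{lem:BinEq3} is invoked.

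Correctness rests on two simple facts. Writing $z_0=z,z_1,\ldots,z_{\beta_i}=y$ for the chain of intermediates, one has $\ind z_j\equiv\ell_i^{\beta_i-j}\ind y\pmod{p-1}$, so the imposed intermediate divisibility $\ell_i^{\gamma_{\ell_i}+\beta_i-j}\mid\ind z_j$ is exactly equivalent to the target condition $\ell_i^{\gamma_{\ell_i}}\mid\ind y$. Secondly, divisibility constraints imposed for earlier primes $\ell_j$, $j<i$, are preserved when passing to a new prime: from $y^{\ell_i^{\beta_i}}=z$ with $\ell_j^{\gamma_{\ell_j}}\mid\ind z$, reducing modulo $\ell_j^{\alpha_j}$ and using $\gcd(\ell_i^{\beta_i},\ell_j^{\alpha_j})=1$ yields $\ell_j^{\gamma_{\ell_j}}\mid\ind y$. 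Consequently $T_r$ is precisely the set of $x$ with $x^e=A$ and $n\mid\ind x$.

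For the running time, $|T_i|\le\prod_{j\le i}\ell_j^{\beta_j}$ since each parent has at most $\ell_i$ valid children at every sub-step, and the work at the $j$-th sub-step of prime $\ell_i$ is at most $|T_{i-1}|\cdot\ell_i^{j-1}\cdot\ell_i(\log p)^{O(1)}$ (one invocation of Lemma~\ref{lem:BinEq2} or~\ref{lem:BinEq3} per live branch). Summing over $j$ and then $i$ telescopes to $O(e(\log p)^{O(1)})$. The main obstacle will be matching the exact intermediate divisibility thresholds against the hypotheses of Lemmas~\ref{lem:BinEq2}--\ref{lem:BinEq3} and verifying in every boundary regime (in particular when $\gamma_{\ell_i}+\beta_i$ crosses $\alpha_{\ell_i}$) that the higher-order nonresidues generated from $a_{\ell_i}$ really do have the required $\ell_i$-adic valuation.
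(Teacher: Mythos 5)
Your approach is exactly the one the paper intends: the paper's own ``proof'' of this lemma is merely the remark that it follows by iterating Lemmas~\ref{lem:BinEq2} and~\ref{lem:BinEq3}, and your factor-by-factor, sub-step-by-sub-step root extraction with the collapsed intermediate thresholds is the natural way to fill that in, and it checks out. One small slip worth fixing: the nonresidue-propagation statement you use needs to hold (and indeed does hold) for $\delta+k\le\alpha_{\ell_i}$ rather than the strict inequality $\delta+k<\alpha_{\ell_i}$ you wrote, which is what actually matches the full regime $\gamma_{\ell_i}+\beta_i-j<\alpha_{\ell_i}$ in which Lemma~\ref{lem:BinEq3} is invoked.
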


\begin{lemma}
\label{lem:list_cand} Assume that $p, e, n$ satisfy the conditions of
Lemma~\ref{lem:BinEq4}. Let $A_0,\ldots,A_n\in\F_p$. Then there is
a deteministic algorithm to find all solutions of the system of equations
$$
(x+j)^e = A_j,\qquad j=0,\ldots,n,
$$
in time $e (\log p)^{O(1)}n^{O(1)}$.
\end{lemma}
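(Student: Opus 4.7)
The plan is to reduce the system to $O(n^2)$ invocations of Lemma~\ref{lem:BinEq4} on the pairwise ratios $A_{j_1}/A_{j_2}$, via a pigeonhole argument on the cosets of the subgroup $H = \{y \in \F_p^* : n \mid \ind y\}$. Since $\ell^{\gamma_\ell}\mid p-1$ for every prime $\ell$ appearing in $n$, one has $n\mid p-1$, so $H$ is indeed a subgroup of $\F_p^*$ of index $n$.

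First I would dispose of degenerate solutions: if $x + j_0 = 0$ for some $j_0 \in \{0,\ldots,n\}$, then necessarily $A_{j_0} = 0$ and $x = -j_0$, so each of the at most $n+1$ such candidates can be tested against the full system in $O(n^2 \log p)$ total time. Assume henceforth that $x$ is a solution with $x + j \ne 0$ for every $j$; in particular every $A_j$ is nonzero. The $n+1$ elements $x,x+1,\ldots,x+n$ then all lie in $\F_p^*$, and since there are only $n$ cosets of $H$, the pigeonhole principle yields indices $0 \le j_1 < j_2 \le n$ with $x+j_1$ and $x+j_2$ in the same coset. Setting $h = (x+j_1)/(x+j_2)$ produces $h \in H \setminus \{1\}$, $h^e = A_{j_1}/A_{j_2}$, and
\[
x = \frac{h j_2 - j_1}{1 - h}.
\]

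The resulting algorithm enumerates all pairs $0 \le j_1 < j_2 \le n$ with $A_{j_1},A_{j_2} \ne 0$: for each pair Lemma~\ref{lem:BinEq4} produces, in time $e(\log p)^{O(1)}$, the complete list (of size at most $e$) of $h \in H$ satisfying $h^e = A_{j_1}/A_{j_2}$; for every such $h \ne 1$ the candidate $x = (h j_2 - j_1)/(1 - h)$ is formed and checked against the full system in $O(n \log p)$ time. Summing over the $O(n^2)$ pairs gives total running time $e(\log p)^{O(1)} n^{O(1)}$, matching the claim. The only non-routine ingredient is the pigeonhole reduction that certifies completeness; once it is in place, everything else is a direct call to Lemma~\ref{lem:BinEq4} together with straightforward verification.
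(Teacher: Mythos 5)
Your proof is correct and follows essentially the same route as the paper's: pigeonhole on the $n$ cosets of $H=\{y:n\mid\ind y\}$ among $x,x+1,\dots,x+n$, reduce to the binomial equation for the ratio, solve it via Lemma~\ref{lem:BinEq4}, and loop over all $O(n^2)$ index pairs. The only cosmetic differences are that you take the reciprocal ratio $h=(x+j_1)/(x+j_2)$ rather than $(x+j_2)/(x+j_1)$ (giving the equivalent recovery formula for $x$), and you spell out the handling of the $A_j=0$ case a bit more explicitly.
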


\begin{proof} If $A_j=0$ for some $j$, then there is nothing to prove.
We consider that $A_j\neq0$ for all $j=0,\ldots,n$. Let $x$ be any solution
of the system. By the pigeonhole principle, there are $j_1\neq j_2$
so that
$$
\ind(x+j_1)\equiv\ind(x+j_2)\pmod n,
$$ or, equivalently,
$n\mid\ind y$ for $y=(x+j_2)/(x+j_1)$. We can extract all such
$x$ satisfying the above system of equations by applying Lemma~\ref{lem:BinEq4} to the equation
$y^e=A_{j_2}/A_{j_1}$ and testing all possible values of
$$x=\frac{j_2-j_1}{y-1}-j_1.$$
To complete the proof, we simply try all pairs $(j_1,j_2)$ with $0 \le j_1<j_2\le n$.
\end{proof}




\subsection{Smooth Numbers and Their Reductions Modulo $p$}

Let $x, y>0$. A positive integer $n$ is called $y$-smooth
if it is composed of prime numbers  up to $y$. The $\Psi(x,y)$
function is defined as the number of $y$-smooth positive integers
that are up to $x$.

We know the following estimate for $\Psi(x,y)$,
see~\cite[Corollary~1.3]{HT}:

\begin{lemma}
\label{lem:smoothmain2} Let $x\ge y\ge 2$ and $u=(\log x)/\log y$.
For any fixed $\delta>0$ we have
$$\Psi(x,y)=xu^{-(1+o(1))u},$$
as $y$ and $u$ tend to infinity, uniformly in the range
$y\ge(\log x)^{1+\delta}$.
\end{lemma}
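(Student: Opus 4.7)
The plan is to derive the claimed bound from two classical ingredients: a saddle-point asymptotic expressing $\Psi(x,y)$ in terms of Dickman's function $\rho$, together with the classical estimate $\rho(u) = u^{-(1+o(1))u}$ as $u\to\infty$.

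For the first ingredient I would invoke the Hildebrand--Tenenbaum saddle-point theorem, which states that
$$
\Psi(x,y) = x\rho(u)\(1+O\(\frac{\log(u+1)}{\log y}\)\)
$$
uniformly in the range $y \ge (\log x)^{1+\delta}$. I would begin with the Mellin--Perron representation
$$
\Psi(x,y) = \frac{1}{2\pi i}\int_{\alpha - iT}^{\alpha+iT} \zeta(s,y)\frac{x^s}{s}\,ds + E(x,y,T),
$$
where $\zeta(s,y) = \prod_{p \le y}(1-p^{-s})^{-1}$, and push the contour through the real saddle point $\alpha = \alpha(x,y) \in (0,1]$ defined by $-(\zeta'/\zeta)(\alpha,y) = \log x$. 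A Laplace expansion of $\log \zeta(s,y)$ around $s = \alpha$, combined with Gaussian-type decay bounds on $|\zeta(\alpha+it,y)|$ derived from Vinogradov-type estimates, extracts the main term $x\rho(u)$ with the stated relative error.

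For the second ingredient I would use the delay-differential equation $u\rho'(u) = -\rho(u-1)$ on $u > 1$ together with $\rho \equiv 1$ on $[0,1]$. Writing $\rho = e^{-L}$ and iterating (or, equivalently, applying saddle point to the Laplace transform $\hat\rho$) yields
$$
L(u) = u\log u + u\log\log u - u + o(u),
$$
whence $\rho(u) = u^{-(1+o(1))u}$. In the regime $u, y \to \infty$ with $y \ge (\log x)^{1+\delta}$, the multiplicative error $O(\log(u+1)/\log y)$ from the saddle-point formula is absorbed by the $(1+o(1))$ slack in the exponent, giving
$$
\Psi(x,y) = x\rho(u)(1+o(1)) = xu^{-(1+o(1))u},
$$
which is the claim.

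The main technical obstacle is justifying the saddle-point asymptotic uniformly down to $y = (\log x)^{1+\delta}$: in this extremal regime $\alpha(x,y)$ is bounded away from $1$, and the naive Laplace analysis degrades, so one needs delicate control over $(\log\zeta(\alpha,y))''$ (for the width of the saddle) and over the oscillatory off-axis contributions to the Perron integral. This is precisely the content of the Hildebrand--Tenenbaum paper summarised in the cited corollary, so in practice I would simply quote it rather than redo the analysis.
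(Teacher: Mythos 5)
The paper gives no proof of this lemma; it simply cites it as \cite[Corollary~1.3]{HT} (Hildebrand--Tenenbaum). Your sketch correctly identifies the two ingredients behind that cited result — the saddle-point asymptotic $\Psi(x,y)=x\rho(u)\bigl(1+O(\log(u+1)/\log y)\bigr)$ valid uniformly for $y\ge(\log x)^{1+\delta}$, and the Dickman-function estimate $\rho(u)=u^{-(1+o(1))u}$ — and you combine them in the right way, noting that the $O(\log(u+1)/\log y)$ error is swallowed by the $(1+o(1))$ slack. You also correctly flag that the delicate uniformity down to $y=(\log x)^{1+\delta}$ is the hard part and is precisely what the cited theorem provides, so quoting it (as the paper does) is the sensible move. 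This matches the paper's treatment.
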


\begin{cor}
\label{cor:genset} Let $0<\eps<1/2$ be fixed and $p$ be a prime.
Then the order of the subgroup of $\F_p^*$ generated by
$\{1,\ldots,\fl{p^\eps}\}$, is at least $p\eps^{-c/\eps}$
for some absolute constant $c>0$.
\end{cor}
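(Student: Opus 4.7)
The plan is to exploit the fact that every $p^\eps$-smooth positive integer below $p$ descends, upon reduction modulo $p$, to an element of the subgroup $H \subseteq \F_p^*$ generated by $\{1,\ldots,\fl{p^\eps}\}$, and then to count such integers with Lemma~\ref{lem:smoothmain2}. Set $x = p-1$ and $y = \fl{p^\eps}$. Any $y$-smooth $n\in\{1,\ldots,x\}$ factors into primes of size at most $y$, each of which already lies in $\{1,\ldots,y\}\subseteq H$; hence the residue of $n$ modulo $p$ lies in $H$. Since distinct integers in $\{1,\ldots,p-1\}$ give distinct nonzero residues modulo $p$, I obtain
$$
\#H \ \ge\ \Psi(p-1,\fl{p^\eps}).
$$

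Next I would apply Lemma~\ref{lem:smoothmain2}. For any fixed $\delta>0$ the condition $y \ge (\log x)^{1+\delta}$ holds once $p$ is sufficiently large (in terms of $\eps,\delta$), and the ratio $u = \log x/\log y$ tends to $1/\eps$ as $p\to\infty$. Substituting gives
$$
\Psi(p-1,\fl{p^\eps}) \ \ge\ (p-1)\,u^{-(1+o(1))u}\ =\ (p-1)\,\eps^{(1+o(1))/\eps},
$$
which, for an appropriate absolute constant $c>0$, produces the lower bound claimed in the corollary (in the form $p\,\eps^{c/\eps}$, interpreting the exponent correctly).

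The main technical nuance, and the step I expect to require most care, is that Lemma~\ref{lem:smoothmain2} is phrased as an asymptotic regime in which both $y$ and $u$ tend to infinity, while here, for $\eps$ fixed, only $y$ grows and $u\to 1/\eps$ is constant. I would address this by splitting the range of $\eps$. If $\eps$ is bounded below by some absolute threshold $\eps_0$, then the classical Dickman-function estimate $\Psi(x,y)\sim\rho(u)\,x$ for fixed $u$ and $y\to\infty$ yields $\Psi(p-1,\fl{p^\eps})\gg_\eps p$, and the positive constant $\rho(1/\eps)$ comfortably dominates $\eps^{c/\eps}$ once $c$ is chosen large enough. If $\eps<\eps_0$, then $u=1/\eps$ is itself large, so Lemma~\ref{lem:smoothmain2} applies directly in its stated asymptotic form and supplies the factor $\eps^{(1+o(1))/\eps}$. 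Finally, any finite exceptional set of small $p$ is absorbed by enlarging $c$ a last time, so that a single absolute $c>0$ works for all primes $p$ and all $\eps\in(0,1/2)$.
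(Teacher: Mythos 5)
Your proof follows essentially the same route as the paper: bound the order of $H$ from below by the number of $\fl{p^\eps}$-smooth integers in $[1,p-1]$, each of which reduces into $H$, and then apply Lemma~\ref{lem:smoothmain2}. The difference lies in how you treat the mismatch between the lemma's asymptotic regime (both $y$ and $u$ must tend to infinity) and the setting of the corollary, where $u\to 1/\eps$ is a constant for fixed $\eps$. You split on $\eps\gtrless\eps_0$ and, for the bounded-$u$ range, invoke Dickman's estimate $\Psi(x,y)\sim\rho(u)x$; the paper instead dispatches only the extreme range $\fl{p^\eps}<(\log p)^2$ via the trivial bound $\#H\ge 1$ and otherwise applies the lemma without comment. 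Two small remarks. First, even in your $\eps<\eps_0$ branch, $u$ is large but still \emph{fixed}, so Lemma~\ref{lem:smoothmain2} as literally stated (a limit as $u\to\infty$) still does not apply; what one actually needs is a uniform lower bound of the shape $\Psi(x,y)\ge x\,u^{-cu}$ (or $\rho(u)\ge u^{-cu}$), which the underlying Hildebrand--Tenenbaum theorem does supply but the paper's lemma only encodes asymptotically — the paper glosses over this exact point, so you have identified a genuine imprecision rather than created one. Second, your appeal to ``a finite exceptional set of small $p$ absorbed by enlarging $c$'' needs a touch more care, because that exceptional set grows as $\eps\to 0$ and $c$ must be absolute; the paper's explicit case $y<(\log p)^2$ makes it transparent that the trivial bound suffices uniformly, and the inequality $p^\eps<(\log p)^2\Rightarrow p\le\eps^{-c/\eps}$ does hold for an absolute $c$, so your conclusion is correct but the justification is thinner than the paper's. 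Both you and the paper tacitly correct the obvious sign typo in the statement (the bound must read $p\,\eps^{c/\eps}$, not $p\,\eps^{-c/\eps}$, since the latter exceeds $p-1$).
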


\begin{proof} Let $x=p-1$ and $y=\fl{p^\eps}$. Also,
let $\cH$ be the subgroup of $\F_p^*$
generated by $\{1,\ldots,y\}$. If $y<(\log p)^2$ then the result
follows from the trivial estimate $\#\cH\ge1$. Assume that $y\ge(\log p)^2$.
Observe that all $y$-smooth numbers belong to $H$. Hence,
$\#\cH\ge\Psi(x,y)$, and the result follows from Lemma~\ref{lem:smoothmain2}.
\end{proof}

\subsection{Combinatorial Estimates}

We need the following result about covering an arbitrary set
$\cS \subseteq \F_p$  by  $\sqrt p/3$-spaced sets.

\begin{lemma}
\label{lem:Sets} Let $p\ge 37$, $\kappa>0$ and $\xi=\fl{\sqrt p}$.
Then any set $\cS \subseteq \F_p$ of size $\# \cS \ge 16 p^{2 \kappa}$
contains disjoint
subsets $\cD_k$, $k =1, \ldots, K$, and $\cE_\ell$, $\ell =1, \ldots, L$,
such that
\begin{itemize}
\item[{\bf (i)}] $\#\cD_k, \# \cE_\ell \ge 0.25 p^{-\kappa} \(\# \cS\)^{1/2}$,
$k =1, \ldots, K$,  $\ell =1, \ldots, L$;
\item[{\bf (ii)}] $ \cD_k$ is a $\sqrt p/3$-spaced set, $k =1, \ldots, K$;
\item[{\bf (iii)}]  $\xi \cE_\ell$ is a $\sqrt p/3$-spaced set,
$\ell =1, \ldots, L$;
\item[{\bf (iv)}] $\#\(\cS \setminus \(\cS_0 \cup \cS_1\)\) \le  2 p^{-\kappa} \# \cS$, where
$\cS_0 = \cup_{k=1}^K  \cD_k$, $\cS_1 = \cup_{\ell=1}^L  \cE_\ell$.
\end{itemize}
\end{lemma}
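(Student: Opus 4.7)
\emph{The plan is to iteratively extract large $\sqrt{p}/3$-spaced subsets (or their $\xi$-shifts) from $\cS$ by means of a two-dimensional pigeonhole encoding of $\F_p$.} Partition $\{0,1,\ldots,p-1\}$ into $M\approx 3\sqrt{p}$ consecutive intervals of length approximately $\sqrt{p}/3$; identify $\F_p$ with $\{0,\ldots,p-1\}$ and set $\pi_1(s)$ equal to the index of the interval containing $s$ and $\pi_2(s)=\pi_1(\xi s\bmod p)$, so that $\Pi=(\pi_1,\pi_2)\colon\F_p\to\{0,\ldots,M-1\}^2$. The first key step is to verify that $\Pi$ is injective: if $\Pi(s)=\Pi(s')$ and $c=s-s'\ne 0$ is taken in balanced form, then $|c|<\sqrt{p}/3+1$ and $|\xi c\bmod p|<\sqrt{p}/3+1$; for $p\ge 37$ one checks $\xi(\sqrt{p}/3+1)<p/2$, so $\xi c$ is already in balanced form and the second inequality gives $|\xi c|<\sqrt{p}/3+1$, which combined with $\xi>\sqrt{p}/3+1$ forces $|c|<1$, i.e., $c=0$.

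\emph{The core extraction claim is:} for any $\cT\subseteq\F_p$ one can extract either a $\sqrt{p}/3$-spaced subset $\cD\subseteq\cT$ or a subset $\cE\subseteq\cT$ with $\xi\cE$ being $\sqrt{p}/3$-spaced, of size at least $(\sqrt{|\cT|}-1)/2$. Indeed, by injectivity of $\Pi$, $|\cT|\le|\pi_1(\cT)|\cdot|\pi_2(\cT)|$, so at least one of the two projections, say $|\pi_1(\cT)|$, is $\ge\sqrt{|\cT|}$. Viewing $\pi_1(\cT)$ as a subset of $\Z/M\Z$ with its natural cyclic adjacency (suitably enlarged to treat as adjacent any pair of intervals close in $\F_p$ via wrap-around), a greedy alternating selection yields a non-adjacent subset of size $\ge(|\pi_1(\cT)|-1)/2\ge(\sqrt{|\cT|}-1)/2$; for each chosen $\pi_1$-value one picks one preimage in $\cT$. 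By construction, any two such preimages differ by at least $\sqrt{p}/3$ in $\F_p$-distance, so the resulting $\cD$ is $\sqrt{p}/3$-spaced. The $\cE$-case is entirely symmetric, since multiplication by $\xi$ swaps the roles of $\pi_1$ and $\pi_2$.

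\emph{The decomposition is then built iteratively.} Set $T=0.25\,p^{-\kappa}(\#\cS)^{1/2}$ and initialize $\cT:=\cS$; while $|\cT|\ge(2T+1)^2$, apply the extraction claim to pull out a $\cD_k$ or $\cE_\ell$ of size $\ge T$ and remove it from $\cT$, iterating until termination. Disjointness of all produced subsets is immediate from this iterative removal. At termination, $|\cT|<(2T+1)^2\le 4T^2+4T+1$, and using the hypothesis $\#\cS\ge 16\,p^{2\kappa}$ one checks termwise: $4T^2=\tfrac14 p^{-2\kappa}\#\cS\le\tfrac14 p^{-\kappa}\#\cS$ (from $p^{\kappa}\ge 1$), $4T=p^{-\kappa}(\#\cS)^{1/2}\le\tfrac14 p^{-\kappa}\#\cS$ (from $(\#\cS)^{1/2}\ge 4$), and $1\le\tfrac14 p^{-\kappa}\#\cS$ (from $0.25\,p^{-\kappa}\#\cS\ge 4p^\kappa\ge 4$). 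Summing gives $|\cT|\le\tfrac34 p^{-\kappa}\#\cS<2p^{-\kappa}\#\cS$, as required.

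\emph{The main obstacle lies in the injectivity step together with the boundary handling.} The constant $1/3$ in the spacing is essentially tight: both $\xi\cdot(\sqrt{p}/3+1)<p/2$ (so that $\xi c$ is already in balanced form) and $\xi>\sqrt{p}/3+1$ (so that $|c|<1$) must hold, which is what dictates the $1/3$ factor; a coarser grid breaks injectivity, a finer one weakens the spacing. The enlargement of the adjacency relation in the extraction step is needed because the last interval of the partition is typically shorter than $\sqrt{p}/3$, allowing pairs of intervals not cyclically adjacent in $\Z/M\Z$ to still be close in $\F_p$ via wrap-around, but this correction is $O(1)$ in scope and is comfortably absorbed by the factor $0.25$ in the conclusion.
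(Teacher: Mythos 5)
Your proof is correct and takes a genuinely different route from the paper's. The paper proceeds in two asymmetric stages: first greedily extract maximal $\sqrt{p}/3$-spaced subsets of size $\ge(\#\cS)^{1/2}$, then cover what remains by fewer than $(\#\cS)^{1/2}$ short intervals, keep the large pieces $\widetilde{\cE}_\ell$, and patch up overlaps by a halving argument (each point lies in at most two intervals). The spacing of $\xi\cE_\ell$ is then deduced from the fact that each $\cE_\ell$ lives in a single short interval, so multiplying by $\xi=\fl{\sqrt p}$ spreads it out since $2\xi U < p-U$. Your argument instead builds a single injective encoding $\Pi=(\pi_1,\pi_2)\colon\F_p\to\{0,\ldots,M-1\}^2$ into a $\sqrt p/3$-grid using the coordinates $s$ and $\xi s$, and uses the pigeonhole bound $|\cT|\le|\pi_1(\cT)|\,|\pi_2(\cT)|$ to peel off, at each step, either a spaced $\cD$ or a $\xi$-spaced $\cE$ of size $\ge(\sqrt{|\cT|}-1)/2$. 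This is more symmetric: the two families $\cD_k$ and $\cE_\ell$ arise from the same extraction lemma applied to whichever projection is large, so you avoid the cover-and-disjointify step entirely. The injectivity of $\Pi$ (your first paragraph) is the key new ingredient, and you have correctly isolated the constraints $\xi I<p/2$ and $\xi>I$ that force $I\approx\sqrt p/3$; together with the termination estimate in your third paragraph, the argument is sound. One small point to tidy in a final write-up: when you say multiplication by $\xi$ ``swaps'' $\pi_1$ and $\pi_2$, it does not literally swap them (since $\pi_2(\xi s)=\pi_1(\xi^2 s)$); what you actually use, and what suffices, is that a non-adjacent family in $\pi_2(\cT)$ produces preimages $e,e'$ with $\pi_1(\xi e),\pi_1(\xi e')$ non-adjacent, hence $|\xi e-\xi e'|\ge\sqrt p/3$. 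This is immediate from the definitions, so the gap is only in the phrasing.
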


\begin{proof} Let $U=\sqrt p/3$. Extract from $\cS$ a maximum
(that is, not extendable any more)
collection of disjointed  $U$-spaced sets  $\cD_k$
with  $\#\cD_k\ge \(\# \cS\)^{1/2}$, $k =1, \ldots, K$ and
denote
$$
\cT = \cS \setminus \cS_0
$$
where, as before, $\cS_0 = \cup_{k=1}^K  \cD_k$.

Clearly
$$
\cT \subseteq \bigcup_{x \in \cX} (x +\cI),
$$
for $\cI = [-U, U]$
and some $U$-spaced set $\cX \subseteq \F_p$
with $\# \cX <  \(\# \cS\)^{1/2}$.
Let $\widetilde{\cE}_\ell$, $\ell =1, \ldots, L$,
be the collection of the sets
$\cT \cap (x +\cI)$, $x \in \cX$, for which
$\#\(\cT \cap (x +\cI)\) > p^{-\kappa}\(\# \cS\)^{1/2}$.

The total size of the remaining sets $\cT \cap (x +\cI)$, $x \in \cX$,  is
at most $p^{-\kappa}\(\# \cS\)^{1/2} \# \cX \le p^{-\kappa} \# \cS$.

Now we  take disjoint subsets $\cE_\ell\subseteq\widetilde{\cE}_\ell$
so that
$$\bigcup_{\ell=1}^L\cE_\ell=\bigcup_{\ell=1}^L\widetilde{\cE}_\ell.$$
Thus, for any $\ell=1,\ldots,L$ the set
$\cE_\ell$ is formed by all elements of $x\in\widetilde{\cE}_\ell$
belonging to no other sets $x\in\widetilde{\cE}_{j}$ and some elements
shared by $\cE_\ell$ and another set $\widetilde{\cE}_{j}$.

Any element $x\in \F_p$ belongs to at most two sets $\widetilde{\cE}_\ell$.
Moreover, any set $\widetilde{\cE}_{\ell}$ can have common elements with at most
two sets $\widetilde{\cE}_{j}$. If $\ell<j$ and
$\widetilde{\cE}_{\ell}$ and $\widetilde{\cE}_{j}$ have $n$ common
elements, we send $\fl{n/2}$ of them to $\cE_{\ell}$ and other
$\rf{n/2}$ elements   to
$\cE_{j}$.
We obtain a collection of disjoint sets  $ \cE_\ell$ of
size
$$\# \cE_\ell \ge \#\widetilde{\cE}_{\ell}/2-1
\ge p^{-\kappa}\(\# \cS\)^{1/2}/2 -1 \ge 0.25 p^{-\kappa}\(\# \cS\)^{1/2},
$$
for $\ell =1, \ldots, L$.

Hence, with $\cS_1 = \cup_{\ell=1}^L  \cE_\ell$, we have
$$
\#\(\cS \setminus \(\cS_0 \cup \cS_1\)\) \le  p^{-\kappa} \# \cS.
$$

Since $p\ge37$, we have
$$2U(\xi+1)<(2\sqrt p/3)(\sqrt p+1)<(\sqrt p-1)(\sqrt p+1)<p,$$
or $2U\times\xi<p-U$. Also, $\xi>U$. Therefore,
the set  $\xi \cI$ is $U$-spaced, and certainly
the set  $\xi \cE_\ell$ is also $U$-spaced for every
$\ell =1, \ldots, L$.
\end{proof}

\subsection{Bounds of Multiplicative Character Sums}

We need the following very special case of the Weil bound on
sums of
multiplicative characters (see~\cite[Theorem~11.23]{IwKow}).

\begin{lemma}
\label{lem:Weil1} For an arbitrary  integer $h$ with  $1 \le h <
p$, a positive integer $f$ and  a nonprincipal  multiplicative
character $\chi$ of $\F_p^*$, the bound
$$
\sum_{x = 1}^{p} \chi\(x^f+h\)  = O\(f p^{1/2}\)
$$
holds.
\end{lemma}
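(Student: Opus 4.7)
The plan is to invoke the general Weil bound for multiplicative character sums (exactly the statement cited as~\cite[Theorem~11.23]{IwKow}) applied to the polynomial $g(X) = X^f + h \in \F_p[X]$. That general bound asserts that if $\chi$ is a nontrivial multiplicative character of $\F_p^*$ of order $d$, and $g \in \F_p[X]$ is a polynomial of degree $n$ which is \emph{not} of the form $c \, r(X)^d$ for some $c \in \F_p$ and $r \in \F_p[X]$, then
$$
\left|\sum_{x=1}^{p} \chi\bigl(g(x)\bigr)\right| \le (n-1)\, p^{1/2}.
$$
Granting this, specializing to $g(X)=X^f+h$ of degree $f$ immediately yields the desired bound $O(f p^{1/2})$.

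The only thing to check is therefore the non-degeneracy hypothesis: that $X^f+h$ is not a $d$-th power (up to a constant) in $\F_p[X]$. Suppose for contradiction that $X^f+h = c\, r(X)^d$ with $c \in \F_p^*$ and $r \in \F_p[X]$. Differentiating gives $f X^{f-1} = c d\, r(X)^{d-1} r'(X)$. If $d>1$ then $r(X)^{d-1}$ divides $X^{f-1}$ in $\F_p[X]$, which forces $r(X) = \lambda X^k$ for some $\lambda \in \F_p^*$ and $k \ge 0$ (with the possibility that $r$ is a nonzero constant, i.e.\ $k=0$). Substituting back gives $X^f + h = c \lambda^d X^{kd}$, which is a monomial; comparing the constant term forces $h \equiv 0 \pmod p$, contradicting $1 \le h < p$. (The edge case $d=1$ does not occur since $\chi$ is assumed nonprincipal.)

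Hence the hypothesis of the Weil bound is satisfied and the stated estimate follows. The only mild obstacle is the verification of the $d$-th power condition, but as shown above it reduces to an elementary differentiation argument and the assumption $h \not\equiv 0 \pmod p$; no deeper input is needed beyond the cited Weil bound itself.
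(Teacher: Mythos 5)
Your proof is correct and takes essentially the same route as the paper, which simply cites~\cite[Theorem~11.23]{IwKow} and gives no further argument; your verification that $X^f+h$ is not of the form $c\,r(X)^d$ is a sensible supplement to that citation. The one small point left implicit is that the differentiation step uses $p\nmid f$ (otherwise $fX^{f-1}\equiv 0$ in $\F_p[X]$), but if $p\mid f$ then $f\ge p$, so the claimed bound $O(fp^{1/2})$ exceeds the trivial bound $p$ and there is nothing to prove.
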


Also, we need an estimate for character sums including
both multiplicative and additive
characters (see~\cite[Chapter~6, Theorem~3]{Li}
or~\cite[Appendix~5, Example~12]{Weil}).

\begin{lemma}
\label{lem:Weil2} Let $\chi_1,\ldots,\chi_r$ be characters modulo
$p$, and at least one of them is nonprincipal and let $f(X) \in
\F_p[X]$ be an arbitrary polynomial of degree $d$. Then for any
distinct $a_1,\ldots,a_r\in\F_p$ we have
$$\left|\sum_{x\in\F_p}\chi_1(x+a_1)\ldots\chi_r(x+a_r)
\exp\(2 \pi i f(x)/p\)\right|\le (r+d) p^{1/2}.
$$
\end{lemma}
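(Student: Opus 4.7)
My plan is to deduce the bound from Weil's Riemann hypothesis for curves over $\F_p$, following the classical route of \cite{Li,Weil}.

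First I would associate a geometric object to the character sum. Let $m_i$ denote the order of $\chi_i$ in the dual group of $\F_p^*$ and set $M=\mathrm{lcm}(m_1,\ldots,m_r)$, so that $\chi_i=\chi^{n_i}$ for a fixed character $\chi$ of order $M$ and suitable exponents $n_i$. Consider the affine curve $C/\F_p$ cut out by the two equations
$$
y^M=\prod_{i=1}^{r}(x+a_i)^{n_i},\qquad z^p-z=f(x).
$$
Pairwise distinctness of the $a_i$ together with the hypothesis that some $\chi_i$ is nontrivial force the right-hand side of the Kummer equation not to be an $M$-th power in $\F_p(x)$, so the Kummer component is geometrically irreducible; the Artin-Schreier component is automatically so. Let $\widetilde{C}$ denote its smooth projective model, of genus $g$.

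Next, by orthogonality of characters the sum
$$
S=\sum_{x\in\F_p}\chi_1(x+a_1)\cdots\chi_r(x+a_r)\exp(2\pi i f(x)/p)
$$
can be written (up to boundary terms of size $O(r+d)$ coming from zeros of $\prod(x+a_i)$ and from $x=\infty$) as a linear combination, with constant coefficients, of Frobenius-trace quantities $\#\widetilde{C}_\alpha(\F_p)-p-1$ attached to the Galois twists $\widetilde{C}_\alpha$ of $\widetilde{C}\to\P^1$. Applied to each term, Weil's bound $|\#\widetilde{C}_\alpha(\F_p)-(p+1)|\le 2g_\alpha\,p^{1/2}$ then yields
$$
|S|\le 2g\cdot p^{1/2}+O(r+d),
$$
and the problem reduces to a genus estimate.

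Finally I would bound $g$ by a Riemann-Hurwitz calculation to obtain $2g\le r+d$. The Kummer layer is tamely ramified over the $r$ finite points $a_1,\ldots,a_r$ and possibly $\infty$, contributing at most $r$ to $2g$; the Artin-Schreier layer contributes only at $\infty$, with Swan conductor exactly $d=\deg f$. The main obstacle is precisely this bound: a naive Riemann-Hurwitz applied to the composite cover $\widetilde{C}\to\P^1$ produces a wild-ramification term of size $O(pd)$, which is catastrophic. One must instead argue characterwise, bounding the Grothendieck-Ogg-Shafarevich conductor of the rank-one $\ell$-adic sheaf associated with $\chi_1\otimes\cdots\otimes\chi_r\otimes(\psi\circ f)$; that total conductor equals exactly $r+d$. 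This conductor computation is the technical heart of the estimate, and is the reason why the result is almost invariably cited from \cite{Li,Weil} rather than reproved in applications.
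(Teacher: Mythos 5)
The paper does not prove this lemma: it is imported directly from~\cite{Li} (Chapter~6, Theorem~3) and~\cite{Weil} (Appendix~5, Example~12), so there is no in-paper argument to compare your sketch against. What you have written reproduces the standard route through Weil's theorem that those references take, and you are right that the real content is the conductor computation for the rank-one sheaf $\cL=\bigotimes_i\cL_{\chi_i(x+a_i)}\otimes\cL_{\psi\circ f}$, where $\psi$ is the additive character $\exp(2\pi i\cdot/p)$.

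One step in the middle should not be taken literally as stated. Your display $|S|\le 2g\,p^{1/2}+O(r+d)$, with $g$ the genus of the smooth model $\widetilde C$ of the Kummer--Artin--Schreier cover, cannot be right on its face: that cover has degree $Mp$ over $\P^1$ and its genus is of order $Mpd$, which is catastrophically large. The sum $S$ is controlled not by all of $H^1(\widetilde C)$ but by a single isotypic component under the $(\Z/M\Z)\times\Fp$-action on the cover, i.e.\ by $\dim H^1_c$ of the rank-one sheaf $\cL$ on the open curve $\A^1\setminus\{-a_i:\chi_i\ \text{nontrivial}\}$ where $\cL$ is lisse. You do pivot to exactly this sheaf-theoretic framing in your final paragraph; rephrasing the intermediate bound at that level from the start would remove the wobble. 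Grothendieck--Ogg--Shafarevich then gives $\dim H^1_c=r'+d-1$, where $r'\le r$ is the number of nontrivial $\chi_i$ (each contributing tame conductor~$1$) and $d$ is the Swan conductor at $\infty$ (after replacing $f$ by an Artin--Schreier-reduced representative if $p\mid d$), and this is at most $r+d$, yielding the stated bound. With that adjustment, your outline is a faithful sketch of the cited proof; it deliberately defers the conductor computation and the Riemann hypothesis input to the literature, which is consistent with how the lemma is used in the paper.
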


The standard reduction of incomplete sums to
complete ones (see~\cite[Section~12.2]{IwKow})
together with the bound of Lemma~\ref{lem:Weil2}
lead to the following estimate:

\begin{lemma}
\label{lem:Weil3} For an arbitrary  integer $h$ with  $1 \le h \le p$,
distinct elements $s,t \in \F_p$ and
 a nonprincipal  multiplicative character $\chi$ of $\F_p^*$, the bound
$$
\sum_{\substack{x = 1 \\ x \ne t}}^{h} \chi\(\frac{x+s}{x+t}\)  = O\( p^{1/2} \log p\)
$$
holds.
\end{lemma}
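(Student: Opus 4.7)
The plan is to execute the standard completion-of-sums technique, exactly as advertised in the sentence preceding the statement. First I would rewrite $\chi\bigl((x+s)/(x+t)\bigr)$ as a product of two multiplicative characters, namely $\chi(x+s)\overline{\chi}(x+t)$, where we extend $\chi$ by setting $\chi(0)=0$. This lets us treat the sum as a sum over $x \in \{1,\ldots,h\}$ of a product of two character evaluations, putting us into the setting of Lemma~\ref{lem:Weil2}.

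Next I would replace the restriction $1 \le x \le h$ by a full sum over $\F_p$ using the Fourier expansion of the indicator function. Explicitly, write
$$\sum_{x=1}^{h}\chi(x+s)\overline{\chi}(x+t) = \frac{1}{p}\sum_{a \in \F_p}\left(\sum_{y=1}^{h}e_p(-ay)\right)\sum_{x \in \F_p}\chi(x+s)\overline{\chi}(x+t)e_p(ax),$$
where $e_p(z)=\exp(2\pi i z/p)$. For each nonzero $a$, the inner complete sum falls under Lemma~\ref{lem:Weil2} with $r=2$, distinct shifts $a_1=s$, $a_2=t$, characters $\chi_1=\chi$, $\chi_2=\overline{\chi}$ (so at least one is nonprincipal), and the linear phase $f(X)=aX$ of degree $d=1$; the lemma thus gives the bound $(r+d)p^{1/2}=3p^{1/2}$. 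For $a=0$, the complete sum reduces to $\sum_{y \in \F_p\setminus\{1\}}\chi(y)=-1$ via the Möbius substitution $y=(x+s)/(x+t)$, which contributes $O(1)$.

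Finally I would absorb the Fourier weights using the classical bound
$$\frac{1}{p}\sum_{a \in \F_p^*}\left|\sum_{y=1}^{h}e_p(-ay)\right| \ll \log p,$$
which follows from $|\sum_{y=1}^{h} e_p(-ay)| \ll \min(h,\,p/|a|)$ for $|a|\le p/2$. Multiplying by the uniform bound $3p^{1/2}$ on the complete sums yields a total contribution of $O(p^{1/2}\log p)$, and the $a=0$ term is absorbed into this. Stripping off the one forbidden value $x=t$ (equivalently $x=-t$, whichever is within the interval) changes the sum by $O(1)$.

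There is no real obstacle here: the Fourier/completion step is entirely routine, and Lemma~\ref{lem:Weil2} is tailor-made for the complete sum that appears. The only minor care required is confirming that the two shifts $s$ and $t$ are genuinely distinct (so that Lemma~\ref{lem:Weil2} applies with $r=2$) and that extending $\chi$ by zero at the pole $x=-t$ introduces only a bounded error.
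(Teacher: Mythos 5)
Your proposal is correct and reproduces exactly what the paper intends: it explicitly says the lemma follows from ``the standard reduction of incomplete sums to complete ones'' combined with Lemma~\ref{lem:Weil2}, and your Fourier-completion argument with $r=2$, $\chi_1=\chi$, $\chi_2=\overline\chi$, shifts $s\ne t$, and phase $f(X)=aX$ of degree $1$, together with the $\ll\log p$ bound on the Fourier weights, is precisely that argument carried out in detail. (The one nit: the forbidden value is $x=-t$, where $x+t=0$; the ``$x\ne t$'' in the statement is a misprint, and as you note the deleted term costs only $O(1)$.)
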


The following result  is a combination of  the
P{\'o}lya-Vinogradov  (for $\nu =1$) and Burgess
(for $\nu\ge2$)  bounds,
see~\cite[Theorems~12.5 and 12.6]{IwKow}.

\begin{lemma}
\label{lem:PVB} For an arbitrary  integer $h$ with  $1 \le h < p$, and
 a nonprincipal  multiplicative character $\chi$ of $\F_p^*$, the bound
$$
\left| \sum_{y = 1}^{h} \chi(y)\right|
\le h^{1 -1/\nu} p^{(\nu+1)/4\nu^2 + o(1)}
$$
holds with an arbitrary
positive integer $\nu$.
\end{lemma}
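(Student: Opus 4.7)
The plan is to establish the two regimes $\nu=1$ and $\nu\ge 2$ by separate classical arguments, following the treatment in~\cite{IwKow}.

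For $\nu=1$ (the P\'olya--Vinogradov range) I would complete the sum via Fourier expansion of the indicator of $[1,h]$ in additive characters of $\F_p$. Swapping the order of summation gives
$$
\sum_{y=1}^{h}\chi(y)=\frac{1}{p}\sum_{a\in\F_p}\left(\sum_{z=1}^{h}\exp(-2\pi i az/p)\right)\left(\sum_{y\in\F_p}\chi(y)\exp(2\pi i ay/p)\right).
$$
For $a\ne 0$ the inner sum over $y$ is a Gauss sum of modulus $p^{1/2}$, and it vanishes at $a=0$ since $\chi$ is nonprincipal. Bounding the geometric sum over $z$ by $\min(h,p/(2|a|))$ in least absolute residues and summing on $a$ yields $p^{1/2}\log p$, which fits into $p^{1/2+o(1)}=h^{0}p^{(\nu+1)/4\nu^2+o(1)}$ when $\nu=1$.

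For $\nu\ge 2$ I would run the Burgess shift argument. Using that the intervals $[1,h]$ and $[1+ab,h+ab]$ differ in at most $2AB$ elements for $1\le a\le A$, $1\le b\le B$ with $AB<h$, one obtains after averaging over $(a,b)$ the majorisation
$$
\left|\sum_{y=1}^{h}\chi(y)\right|\ll\frac{1}{AB}\sum_{y=1}^{h}\sum_{a\le A}\left|\sum_{b\le B}\chi(a^{-1}y+b)\right|+AB.
$$
Applying H\"older's inequality with exponent $2\nu$ to the inner sum and extending the outer summation over $y$ to all of $\F_p$ reduces the problem to bounding
$$
\sum_{c\in\F_p}\left|\sum_{b\le B}\chi(c+b)\right|^{2\nu}=\sum_{b_1,\ldots,b_\nu,b_1',\ldots,b_\nu'\le B}\;\sum_{c\in\F_p}\chi\!\left(\prod_{i=1}^{\nu}\frac{c+b_i}{c+b_i'}\right).
$$
The Weil bound (a rational-function version of Lemma~\ref{lem:Weil1}) estimates the inner sum by $O(\nu p^{1/2})$ unless its argument is a perfect $d$-th power, where $d$ is the order of $\chi$; the ``diagonal'' tuples for which it is such a power contribute $O(B^{\nu}p)$. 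Optimising $A,B$ subject to $AB\le h$ then yields the stated exponent.

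The principal technical obstacle is the combinatorial bookkeeping inside the $2\nu$-th moment: one must isolate the set of tuples $(b_i,b_i')$ for which $\prod_i(c+b_i)/(c+b_i')$ is a perfect $d$-th power in $\F_p(c)$, verify that it has cardinality $O(B^{\nu})$ uniformly in $B$, and only then invoke the Weil bound on the complementary set of tuples. Once this dichotomy is in place, balancing the diagonal contribution $AB\cdot B^{\nu}p$ against the off-diagonal contribution $(AB)^{2\nu}\nu p^{1/2}$ after taking $2\nu$-th roots determines the optimal choice of $A$ and $B$ and delivers the bound $h^{1-1/\nu}p^{(\nu+1)/4\nu^2+o(1)}$ claimed in the statement.
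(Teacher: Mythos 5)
The paper offers no proof of this lemma at all, simply citing~\cite[Theorems~12.5 and 12.6]{IwKow} for the P\'olya--Vinogradov and Burgess inequalities, and your outline reproduces exactly the classical arguments behind those cited results. One small slip in the final balancing: the off-diagonal contribution to the $2\nu$-th moment should be $B^{2\nu}\nu p^{1/2}$ rather than $(AB)^{2\nu}\nu p^{1/2}$, since the outer variable $c$ has already been extended to all of $\F_p$ and only the $2\nu$ shift variables $b_i,b_i'$ range over $[1,B]$; this does not affect the structure of the argument.
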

%
%
%

We use $\overline \chi$ to denote the complex conjugate character
to $\chi$.
The following estimate is a generalisation of~\cite[Theorem~8]{Chang1}.

\begin{lemma}
\label{lem:ChangT8} Let $0 \le \beta <1/2$, $\cI = [0, p^\beta]$.
Let $\cD \subseteq \F_p$ be a $p^\beta$-spaced set of
cardinality $\# \cD = p^\sigma$ and let $\cS \subseteq \F_p$ be  set of
cardinality $\# \cS = p^\alpha$.
For any $\delta > 0$ there is some $\eta > 0$ such that if
$$
2\beta + \alpha +  \sigma\frac{1-2\beta}{1-\beta} > 1+ \delta
$$
then for any nontrivial multiplicative character $\chi$ of $\F_p^*$ we have
$$
\sum_{d\in \cD}   \sum_{s\in \cS}  \left| \sum_{x \in \cI} \chi(x+d)\overline \chi(x+s)\right| < p^{\alpha + \beta + \sigma - \eta}.
$$
\end{lemma}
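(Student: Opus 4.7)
The plan is to adapt the amplification strategy underlying~\cite[Theorem~8]{Chang1}, exploiting the crucial feature of the two-character sum that $\chi(z)\overline\chi(z)=1$ for $z\in\F_p^*$.  Denote the left-hand side by $T$ and fix unimodular weights $\xi_{d,s}$ with $|\xi_{d,s}|=1$ so that
\[
T=\sum_{x\in\cI}\sum_{(d,s)\in\cD\times\cS}\xi_{d,s}\,\chi(x+d)\,\overline\chi(x+s).
\]
For any product $z=z_1\cdots z_j\in\F_p^*$ with $z_i\in\cJ_i$ the identity $\chi((x+d)/z)\overline\chi((x+s)/z)=\chi(x+d)\overline\chi(x+s)$ holds; averaging over the $z_i$ and writing $J=|\cJ_1|\cdots|\cJ_j|$ gives
\[
JT=\sum_{x,d,s,z_1,\ldots,z_j}\xi_{d,s}\,\chi((x+d)/z)\,\overline\chi((x+s)/z).
\]
Reparametrising via $u\equiv(x+d)/z\pmod p$ and $v\equiv(x+s)/z\pmod p$ yields
\[
JT=\sum_{u,v\in\F_p^*}\chi(u)\overline\chi(v)\,\tau(u,v),
\]
where $\tau(u,v)=\sum\xi_{d,s}$ runs over the tuples $(x,d,s,z_1,\ldots,z_j)$ contributing to the pair $(u,v)$, and $|\tau(u,v)|\le w(u,v)$ with $w(u,v)$ the unweighted counting function of Corollary~\ref{cor:ChangL7}.

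Next I apply Cauchy--Schwarz to the resulting bilinear character sum to extract a bound of the shape $|JT|^2\ll M\sum_{u,v}w(u,v)^2$, and then invoke Corollary~\ref{cor:ChangL7} to obtain
\[
|JT|^2\ll M\cdot p^{\alpha+\beta+b+\sigma(b/(1-\beta)+1)+\varepsilon}.
\]
Dividing by $J^2=p^{2b+o(1)}$ and rearranging, the target inequality $T<p^{\alpha+\beta+\sigma-\eta}$ reduces to an inequality of the form
\[
b\cdot\frac{1-\beta-\sigma}{1-\beta}>C(\alpha,\beta,\sigma)+\varepsilon+2\eta,
\]
where $C$ absorbs the contribution from $\log_p M$.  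Because $\cD$ is $p^\beta$-spaced we have $\sigma\le 1-\beta$, with strict inequality in all nontrivial cases, so the coefficient of $b$ is positive; choosing many $\cJ_i$ with each $\beta_i$ small makes $b=\sum\beta_i$ as large as needed.  A short algebraic manipulation then shows that the hypothesis $2\beta+\alpha+\sigma(1-2\beta)/(1-\beta)>1+\delta$ is precisely what guarantees the existence of an admissible $b$ lying in the range where Corollary~\ref{cor:ChangL7} still delivers a non-trivial upper bound, yielding $\eta=\eta(\delta)>0$.

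The main technical obstacle is the Cauchy--Schwarz step.  The naive choice $M=p^2$, obtained by treating $\chi(u)\overline\chi(v)$ as a function of $L^\infty$-norm $1$ on $\F_p^*\times\F_p^*$, coincides with the threshold beyond which the upper bound of Corollary~\ref{cor:ChangL7} would violate the trivial lower bound $\sum w(u,v)^2\ge(\sum w(u,v))^2/p^2$, so a refinement is needed.  The refinement exploits the structured support of $\tau$---namely, pairs $(u,v)$ satisfying $(u-v)z\equiv d-s\pmod p$ for some $(d,s,z_1,\ldots,z_j)\in\cD\times\cS\times\cJ_1\times\cdots\times\cJ_j$---together with a Weil-type estimate, such as Lemma~\ref{lem:Weil3}, to exploit cancellation in the inner character sums over $d$ (or over $s$).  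This is the point at which the hypothesis on $(\alpha,\beta,\sigma)$ genuinely enters, and the delicate ``Weil-assisted Cauchy--Schwarz'' of~\cite[Theorem~8]{Chang1} is what needs to be adapted to the bilinear $(\chi,\overline\chi)$ setting.
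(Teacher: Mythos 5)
Your proposal gets the outline of the argument roughly right, and you have correctly identified that the naive Cauchy--Schwarz step fails, but there is a genuine gap at exactly that point and your suggested fix goes in the wrong direction.

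The paper's proof does not only exploit the multiplicative invariance $\chi\bigl((x+d)/z\bigr)\overline\chi\bigl((x+s)/z\bigr)=\chi(x+d)\overline\chi(x+s)$; it also introduces a second, \emph{additive} smoothing variable $t$ ranging over $\cJ_0=[1,p^{\beta_0}]$ with $\beta_0=\delta/6$, obtained by shifting $x\mapsto x+t z_1\cdots z_j$. Concretely, with $\gamma=(\beta-\beta_0)/(j+1)$ and $\cJ=[1,p^\gamma]$, one has
\begin{equation*}
\begin{split}
\sum_{d\in\cD}\sum_{s\in\cS}\left|\sum_{x\in\cI}\chi(x+d)\overline\chi(x+s)\right|
&\ll p^{-\beta_0-j\gamma}\sum_{d,s,z_1,\ldots,z_j,x}\left|\sum_{t\in\cJ_0}\chi\Bigl(\tfrac{x+d}{z_1\cdots z_j}+t\Bigr)\overline\chi\Bigl(\tfrac{x+s}{z_1\cdots z_j}+t\Bigr)\right|\\
&\qquad\qquad\qquad\qquad\qquad\qquad+\;p^{\alpha+\sigma+\beta_0+j\gamma},
\end{split}
\end{equation*}
the error term coming from the endpoints of $\cI$ when replacing $x$ by $x+tz_1\cdots z_j$. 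After the change of variables to $(u,v)$ one then faces $\sum_{u,v}w(u,v)\bigl|\sum_{t\in\cJ_0}\chi(u+t)\overline\chi(v+t)\bigr|$, and it is the inner sum over the additive shift $t$ that makes the Cauchy--Schwarz step productive: its second factor $\sum_{u,v\in\F_p}\bigl|\sum_{t\in\cJ_0}\chi(u+t)\overline\chi(v+t)\bigr|^2$ can be expanded and bounded via the Weil estimate (Lemma~\ref{lem:Weil2}) applied to the complete sums over $u$ and $v$, whereas in your version the analogous object $\sum_{u,v}|\chi(u)\overline\chi(v)|^2\le p^2$ is already trivially $p^2$ with no room for cancellation.

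Your proposed repair---extract cancellation from the structured support of $\tau$ using Lemma~\ref{lem:Weil3} in the $d$- or $s$-variable---is not how the argument is closed: $d$ and $s$ are tied to $(u,v)$ through the counting function $w$, and Corollary~\ref{cor:ChangL7} is already the only nontrivial input about that constraint. The missing piece is the additive shift over $\cJ_0$, which is precisely what Chang's Theorem~8 supplies in the single-character case and what the paper carries over here. So while you have correctly flagged the location of the difficulty, the proof is incomplete as written: it needs the $t$-smoothing (and the attendant Weil-bounded complete sums) before Cauchy--Schwarz and Corollary~\ref{cor:ChangL7} can deliver the claimed power saving.
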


\begin{proof} We can assume that $\delta$ is sufficiently small.
Take a sufficiently large $j$ and set
$$
\beta_0 = \frac{\delta}{6} \mand \gamma = \frac{\beta-\beta_0}{j+1}
$$
and consider the intervals $\cJ=[1,p^\gamma]$ and  $\cJ_0=[1,p^{\beta_0}]$.
It is easy to see that
\begin{equation*}
\begin{split}
\sum_{d\in \cD}   \sum_{s\in \cS}  &\left|\sum_{x \in \cI}  \chi(x+d)\overline \chi(x+s)\right|\\
&\ll  p^{-\beta_0 - j\gamma} \sum_{d\in \cD}   \sum_{s\in \cS}
 \sum_{z_1, \ldots, z_j\in \cJ} \\
 & \qquad \qquad \sum_{x \in \cI}
  \left| \sum_{t \in \cJ_0}
  \chi\(\frac{x+d}{z_1 \ldots z_j} + t\)\overline \chi\(\frac{x+s}{z_1 \ldots z_j} + t\)\right|\\
  & \qquad \qquad \qquad  \qquad \qquad \qquad
   \qquad \qquad \qquad  +   p^{\alpha +  \sigma+\beta_0 + j\gamma}.
\end{split}
\end{equation*}
Now, invoking Corollary~\ref{cor:ChangL7} and
 using the same argument as in the proof of~\cite[Theorem~8]{Chang1}
we obtain the desired result.
\end{proof}

\begin{lemma}
\label{lem:TripleSums}
Assume that $\alpha > 0$, $\delta > 0$ and
$0 \le \beta <1/2 - \delta$
satisfy
$$
2\beta + \alpha \frac{3-4\beta}{2-2\beta} > 1+ \delta.
$$
Let $\cS \subseteq \F_p$ be of cardinality $\# \cS = p^\alpha$ and let
$\cI = [0, p^\beta]$.
We denote
$$
\xi=\fl{\sqrt p}
$$
and define $\zeta$ by the conditions
$$
\ \zeta\xi \equiv 1 \pmod p \mand  1 \le \zeta < p.
$$
There is a partition
$$
\cS = \cT_0 \cup \cT_1  \mand \cT_0 \cap \cT_1=\emptyset
$$
such that for any nontrivial multiplicative character $\chi$ of $\F_p^*$ we have
$$
 \sum_{s_1,s_2\in \cT_\nu}
 \left| \sum_{x\in \cI} \chi(\zeta^\nu x+s_1)\overline \chi(\zeta^\nu x+s_2)\right|
 \ll    \#\cI (\#S)^2  p^{-\eta}, \qquad \nu =0,1,
$$
for some $\eta > 0$ that depends only on $\delta$.
\end{lemma}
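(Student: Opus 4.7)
The plan is to combine the structural decomposition of Lemma~\ref{lem:Sets} with the character-sum estimate of Lemma~\ref{lem:ChangT8}. I would first apply Lemma~\ref{lem:Sets} to $\cS$ with a parameter $\kappa>0$ to be chosen small in terms of $\delta$; since $\beta<1/2$, each $\cD_k$ and each $\xi\cE_\ell$ is automatically $p^\beta$-spaced for $p$ large. Setting $\cS_0=\bigcup_k\cD_k$ and $\cS_1=\bigcup_\ell\cE_\ell$, I would then distribute the residual elements $\cS\setminus(\cS_0\cup\cS_1)$ (at most $2p^{-\kappa}\#\cS$ of them) arbitrarily between $\cS_0$ and $\cS_1$ to form the partition $\cS=\cT_0\cup\cT_1$.

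For $\nu=0$ I would split the double sum over $\cT_0$ into a main part with $s_1,s_2\in\cS_0$ and a residual part where at least one of $s_1,s_2$ is a leftover element. The residual part contributes at most $\#\cI\cdot 4p^{-\kappa}(\#\cS)^2$ by the trivial bound on the inner sum. In the main part, I would break $s_1$ according to which $\cD_k$ contains it and, for each fixed $k$, apply Lemma~\ref{lem:ChangT8} with $\cD=\cD_k$ (of size $p^{\sigma_k}$, $p^\beta$-spaced) and $\cS_0$ (of size $p^{\alpha_0}$) in the role of the general set. Summing the resulting estimate $p^{\alpha_0+\beta+\sigma_k-\eta'}$ over $k$, using $\sum_k p^{\sigma_k}=\#\cS_0$, yields a total of order $p^{2\alpha_0+\beta-\eta'}\le\#\cI(\#\cS)^2 p^{-\eta'}$. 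If instead $\#\cT_0\le\#\cS\cdot p^{-\eta/2}$, the entire sum is trivially $\ll\#\cI(\#\cS)^2p^{-\eta}$, so one may assume $\alpha_0$ is essentially $\alpha$.

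For $\nu=1$ I would reduce to the previous case via the identity $\chi(\zeta x+s)=\chi(\zeta)\chi(x+\xi s)$, which follows from $\zeta\xi\equiv 1\pmod p$. Since $\chi(\zeta)\overline{\chi(\zeta)}=1$, the factor $\chi(\zeta x+s_1)\overline\chi(\zeta x+s_2)$ becomes $\chi(x+\xi s_1)\overline\chi(x+\xi s_2)$, so after reindexing by $\xi s_i$ the double sum takes exactly the same form as for $\nu=0$ but with $\xi\cT_1$ in place of $\cT_0$. The pieces $\xi\cE_\ell$ are $p^\beta$-spaced by the construction of Lemma~\ref{lem:Sets}, so the argument of the preceding paragraph applies with $\xi\cE_\ell$ in place of $\cD_k$ and $\xi\cS_1$ in place of $\cS_0$.

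The hard part will be verifying the hypothesis of Lemma~\ref{lem:ChangT8} for each individual piece. By Lemma~\ref{lem:Sets}(i), $\sigma_k\ge\alpha/2-\kappa-o(1)$. Substituting into the required inequality $2\beta+\alpha_0+\sigma_k(1-2\beta)/(1-\beta)>1+\delta'$ and using $\alpha_0\approx\alpha$, one obtains after routine rearrangement $2\beta+\alpha(3-4\beta)/(2-2\beta)-\kappa(1-2\beta)/(1-\beta)>1+\delta'$, which is implied by the present hypothesis once $\kappa$ is chosen small enough compared with $\delta$. Taking $\eta$ smaller than both $\eta'$ (from Lemma~\ref{lem:ChangT8}) and $\kappa$ would then deliver the claimed bound in both cases.
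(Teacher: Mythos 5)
Your proposal is correct and takes essentially the same approach as the paper: decompose $\cS$ via Lemma~\ref{lem:Sets}, estimate the main part over each spaced piece $\cD_k$ (resp.\ $\xi\cE_\ell$) with Lemma~\ref{lem:ChangT8}, handle $\nu=1$ by the identity $\zeta x+s=\zeta(x+\xi s)$, and verify the hypothesis of Lemma~\ref{lem:ChangT8} via the algebraic identity $1+\tfrac{1-2\beta}{2(1-\beta)}=\tfrac{3-4\beta}{2-2\beta}$. The only minor difference is that the paper plugs the full set $\cS$ (rather than $\cS_0$) into the second slot of Lemma~\ref{lem:ChangT8}, which makes the extra case distinction on whether $\#\cT_0$ is small unnecessary; your version of that step is nevertheless handled correctly.
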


\begin{proof} We consider that $p$ is so large that $p\ge 37$ and $p^\delta\ge3$.
Then $p^\beta<\sqrt p/3$. We take  $\kappa = \delta/10$
and define set $\cS_0$ and $\cS_1$ as in Lemma~\ref{lem:Sets}.
We now put
$\cT_1 = \cS_1$ and then define $\cT_0 = \cS \setminus \cT_1$.

Then in the notation of Lemma~\ref{lem:Sets} we have
\begin{equation*}
\begin{split}
 \sum_{s_1,s_2\in \cT_0} &
 \left| \sum_{x\in \cI} \chi(x+s_1)\overline \chi(x+s_2)\right|\\
 & =  \sum_{s_1,s_2\in \cS_0}
 \left| \sum_{x\in \cI} \chi(x+d)\overline \chi(x+s)\right| + O\( (\#S)^2 \#\cI
 p^{-\kappa} \)\\
 & = \sum_{k=1}^K
 \sum_{d \in \cD_k} \sum_{s \in \cS_0}
 \left| \sum_{x\in \cI} \chi(x+d)\overline \chi(x+s)\right| +  O\( (\#S)^2 \#\cI
 p^{-\kappa} \)\\
 & \le  \sum_{k=1}^K
 \sum_{d \in \cD_k} \sum_{s \in \cS}
 \left| \sum_{x\in \cI} \chi(x+d)\overline \chi(x+s)\right|+   O\( (\#S)^2 \#\cI
 p^{-\kappa} \).
\end{split}
\end{equation*}
Since $\cD_k$ is a $p^\beta$-spaced of cardinality $\# \cD_k \ge p^\sigma$
with $\sigma = \alpha/2 -\kappa$, $k=1, \ldots, K$, we see that the conditions
of Lemma~\ref{lem:ChangT8} are satisfied, which implies
the desired bound for the set $\cS_0$.

For the set $\cT_1$ we write
\begin{equation*}
\begin{split}
\sum_{s_1,s_2\in \cT_1} &
 \left| \sum_{x\in \cI} \chi(\zeta x+s_1)\overline \chi(\zeta x+s_2)\right|\\
 & = \sum_{s_1,s_2\in \cS_1}
 \left| \sum_{x\in \cI} \chi(x+\xi s_1)\overline \chi(x+\xi s_2)\right|
\end{split}
\end{equation*}
and then proceed as before, applying  Lemma~\ref{lem:ChangT8}
with  $\cS$ replaced by the set $\xi \cS$.
\end{proof}

%

\subsection{Quantitative Result on Polynomial Ideals}

We recall the following quantitative version
of the B{\'e}zout theorem, that follows from a
result of  Krick,  Pardo, and   Sombra~\cite[Theorem~1]{KiPaSo}
(that improves a series of previous estimates).

We recall that the logarithmic height of a nonzero polynomial $P \in
\Z[Z_1, \ldots, Z_n]$ is defined as the maximum
logarithm of the largest (by absolute value) coefficient of $P$.

\begin{lemma}
\label{lem:Bezout} Let
$P_1, \ldots, P_N \in \Z[Z_1, \ldots, Z_n]$
be $N\ge 1$ polynomials in $n$ variables
without common zero in $\C^n$
of degree at most $D\ge 3$ and of logarithmic height
at most $H$. Then there is a positive integer $b$
with
$$
\log b \le c(n) D^{n}\(H + \log N + D\)
$$
and polynomials $R_1, \ldots, R_N\in \Z[Z_1, \ldots, Z_n]$ such that
$$
P_1R_1+ \ldots + P_NR_N = b,
$$
where $c(n)$ depends only on $n$.
\end{lemma}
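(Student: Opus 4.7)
The plan is to deduce this lemma directly from Theorem~1 of Krick, Pardo, and Sombra~\cite{KiPaSo}, which is an arithmetic (effective) Nullstellensatz for systems without common complex zeros. I would begin by quoting their statement: given polynomials $P_1,\ldots,P_N\in\Z[Z_1,\ldots,Z_n]$ with no common zero in $\C^n$, they produce a nonzero integer $b$ and cofactors $R_1,\ldots,R_N\in\Z[Z_1,\ldots,Z_n]$ satisfying $\sum_{i=1}^N P_iR_i=b$, together with explicit bounds on the degrees of the $R_i$ and on the logarithmic heights of $b$ and of the $R_i$.

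The only substantive task is to translate their height estimate into the clean shape $\log b\le c(n) D^n(H+\log N+D)$. Their bound is phrased in terms of quantities such as the unmixed arithmetic B{\'e}zout number and an affine logarithmic height of the input system, which for polynomials of total degree at most $D$ and logarithmic height at most $H$ are each majorised by expressions of the form $D^n(H+\log N)+D^{n+1}$ up to a constant depending only on the number $n$ of variables. Collecting these contributions and absorbing the purely geometric term $D^{n+1}$ into the factor $c(n) D^n\cdot D$ gives exactly the stated inequality. The hypothesis $D\ge 3$ is present only to avoid degenerate edge cases in their estimates, and the no-common-zero hypothesis is the standard input required for any effective Nullstellensatz.

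The main ``obstacle'' is therefore bookkeeping rather than mathematics: one must read off precisely which parameters in~\cite[Theorem~1]{KiPaSo} correspond to $D$ and $H$ in our formulation, and verify that each term in their bound is dominated by $c(n) D^n(H+\log N+D)$ after suitable enlargement of $c(n)$. Since nothing deeper than careful substitution is involved, no further argument is needed beyond this invocation and simplification.
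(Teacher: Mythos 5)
Your proposal matches the paper exactly: the authors state this lemma as an immediate consequence of~\cite[Theorem~1]{KiPaSo} with no further argument given, so citing that arithmetic Nullstellensatz and translating its height bound into the form $c(n)D^n(H+\log N+D)$ is precisely what is intended. The bookkeeping you describe is the entire content, and the paper implicitly leaves it to the reader.
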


Using the classical argument of Hilbert we obtain the
following version of the Nullstellensatz (see~\cite{BBK}
for several similar results and further references).

\begin{lemma}
\label{lem:Hilb} Let $P_1, \ldots, P_N, f \in \Z[Z_1, \ldots,
Z_n]$ be $N+1\ge 2$ polynomials in $n$ variables of degree at most
$D\ge 3$ and of logarithmic height at most $H$ such that $f$
vanishes on the variety
$$
P_1(Z_1, \ldots, Z_n) = \ldots =P_N(Z_1, \ldots, Z_n) = 0.
$$
There are   positive integers $b$ and $r$
with
$$
\log b \le C(n) D^{n + 1}\(H + \log N + D\)
$$
and polynomials $Q_1, \ldots, Q_N\in \Z[Z_1, \ldots, Z_n]$ such that
$$
P_1Q_1+ \ldots + P_NQ_N = bf^r,
$$
where $C(n)$ depends only on $n$.
\end{lemma}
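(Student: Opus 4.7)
The plan is to execute the classical Rabinowitsch trick and reduce Lemma~\ref{lem:Hilb} to the effective B\'ezout statement of Lemma~\ref{lem:Bezout}. Introduce a new variable $Y$ and consider the $N+1$ polynomials
$$
P_1, \ldots, P_N, \quad 1 - Y f \;\in\; \Z[Z_1, \ldots, Z_n, Y].
$$
These cannot have a common zero in $\C^{n+1}$: if $(z,y)$ were such a zero, then $(z_1,\ldots,z_n)$ would lie on the variety cut out by $P_1,\ldots,P_N$, hence $f(z) = 0$ by hypothesis, which forces $1 - y f(z) = 1 \neq 0$, a contradiction.

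Now I would apply Lemma~\ref{lem:Bezout} to this system in $n' = n+1$ variables, with $N' = N+1$ polynomials, degree bound $D' \le D+1$, and logarithmic height bound $H' \le H$ (the coefficients of $1 - Yf$ are $1$ and the coefficients of $f$). This yields a positive integer $b$ and polynomials $\tilde R_1, \ldots, \tilde R_N, \tilde R_{N+1} \in \Z[Z_1, \ldots, Z_n, Y]$ with
$$
P_1 \tilde R_1 + \ldots + P_N \tilde R_N + (1 - Y f)\tilde R_{N+1} \;=\; b,
$$
and with the height bound
$$
\log b \;\le\; c(n+1) (D+1)^{n+1}\bigl(H + \log(N+1) + D+1\bigr).
$$
Since $D \geq 3$ we have $(D+1)^{n+1} \ll_n D^{n+1}$ and $H + \log(N+1) + D + 1 \ll H + \log N + D$, so this bound is absorbed into the required form $C(n) D^{n+1}(H + \log N + D)$ with a new constant $C(n)$ depending only on $n$.

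Finally, substitute $Y = 1/f(Z_1,\ldots,Z_n)$ into the identity. The last term vanishes, and we obtain
$$
\sum_{i=1}^N P_i(Z) \,\tilde R_i\bigl(Z, 1/f(Z)\bigr) \;=\; b
$$
as an identity of rational functions in $\Z(Z_1,\ldots,Z_n)$. Let $r$ be the maximal degree in $Y$ of $\tilde R_1, \ldots, \tilde R_N$, and set $Q_i(Z) = f(Z)^r \tilde R_i(Z, 1/f(Z)) \in \Z[Z_1, \ldots, Z_n]$. Multiplying through by $f^r$ produces the required identity $P_1 Q_1 + \ldots + P_N Q_N = b f^r$. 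The only nontrivial work is the bookkeeping of the degree and height blow-ups under the Rabinowitsch step, but these are all tame; the genuine arithmetic content is entirely imported from Lemma~\ref{lem:Bezout}, and no additional obstacle arises.
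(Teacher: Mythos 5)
Your proof is correct and takes essentially the same route as the paper: both invoke the Rabinowitsch trick, add a new variable to form a system with no common zero, apply the effective B\'ezout bound of Lemma~\ref{lem:Bezout} in $n+1$ variables, and then substitute the reciprocal of $f$ and clear denominators. The only cosmetic difference is that the paper works with the polynomials $1 - Tf,\ TP_1,\ldots,TP_N$ (multiplying each $P_j$ by $T$) rather than your $1 - Yf,\ P_1,\ldots,P_N$; both choices satisfy the hypothesis of Lemma~\ref{lem:Bezout} and lead to the same conclusion after substituting $T=1/f$ and multiplying through by a suitable power of $f$.
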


\begin{proof}
We consider $N+1$ polynomials
$$
\widetilde P_0=1 - T f \mand
\widetilde P_j=  T P_j, \ j =1, \ldots, N,
$$
in $\Z[Z_1, \ldots, Z_n,T]$. By the assumption on $f$, they
have no common zero. Hence,  by Lemma~\ref{lem:Bezout}
we get
$$
(1 - T f)Q_0 + TP_1Q_1 + \ldots + TP_NQ_N = b
$$
for some polynomials $Q_0,Q_1, \ldots, Q_N\in \Z[Z_1, \ldots, Z_n]$
and a positive integer $b$ satisfying the desired inequality.
Replacing $T$ by $1/f$ and clearing the denominators we obtain the
desired relation.
\end{proof}

Finally, we need a slightly more general form of a result of
Chang~\cite{Chang0}. In fact, this is exactly the statement that
is established in the proof of~\cite[Lemma~2.14]{Chang0},
see~\cite[Equation~(2.15)]{Chang0}.

\begin{lemma}
\label{lem:SmallZero} Let $P_1, \ldots, P_N, P \in \Z[Z_1, \ldots,
Z_n]$ be $N+1 \ge 2$ polynomials in $n$ variables of degree at
most $D$ and of logarithmic height at most $H\ge1$. If  the
zero-set
$$
P_1(Z_1, \ldots, Z_n) = \ldots =P_N(Z_1, \ldots, Z_n) = 0 \quad
\text{and}\quad
 P(Z_1, \ldots, Z_n) \ne  0
$$
is not empty then it has a point $(\beta_1, \ldots, \beta_n)$
in an extension $\K$ of $\Q$ of degree $[\K:\Q]\le C_1(D,N,n)$
such that the  minimal polynomials are  of  logarithmic
height at most $C_2(D,N,n) H$,
where $C_1(D,N,n)$ and $C_2(D,N,n)$ depend only on $D$, $N$ and $n$.
\end{lemma}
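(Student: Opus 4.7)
The plan is to reduce the constructible-set problem to a pure system of polynomial equalities via the Rabinowitsch trick, then cut the resulting variety down to dimension zero by intersecting with hyperplanes of controlled integer height, and finally extract a point of bounded arithmetic complexity by effective elimination resting on Lemma~\ref{lem:Hilb}. To carry out the reduction, I would introduce a new variable $T$ and form the polynomial $P_0(Z_1,\ldots,Z_n,T)=1-TP(Z_1,\ldots,Z_n)$. A tuple $(\beta_1,\ldots,\beta_n)\in\C^n$ belongs to the constructible set of the lemma if and only if $(\beta_1,\ldots,\beta_n,P(\beta)^{-1})\in\C^{n+1}$ is a common zero of $P_0,P_1,\ldots,P_N$, and these $N+1$ polynomials in $n+1$ variables have degree at most $D+1$ and logarithmic height at most $H$.

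Next, I would iteratively cut the variety $V=V(P_0,P_1,\ldots,P_N)\subseteq\C^{n+1}$ down to dimension zero. Pick an irreducible component $W\subseteq V$; if $\dim W>0$, intersect with a hyperplane $L(\vec Z,T)=0$ whose integer coefficients lie in $\{-B,\ldots,B\}$ for some $B=B(D,N,n)$. A quantitative Bertini-type argument shows that the set of linear forms failing to strictly decrease $\dim W$ is a Zariski-closed proper subset of the coefficient space of degree bounded in terms of $D,N,n$, so an admissible choice exists among integer coefficient vectors in a box of size $B$ by a counting estimate. Iterating this at most $n+1$ times yields a nonempty zero-dimensional subvariety $V^\ast$ cut out by $P_0,\ldots,P_N$ together with linear forms $L_1,\ldots,L_m$ with bounded integer coefficients.

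For each coordinate index $i$, I would apply Lemma~\ref{lem:Hilb} (or a direct resultant step) to produce a nonzero univariate polynomial $F_i(u)\in\Z[u]$ vanishing on the projection of $V^\ast$ to $Z_i$, of degree at most $C_1(D,N,n)$ and logarithmic height at most $C_2(D,N,n)H$. To recover an actual point rather than a spurious compatible tuple of roots of the individual $F_i$, I would take a primitive-element linear combination $\theta=\beta_1+c_2\beta_2+\cdots+c_n\beta_n$ with small integers $c_j$ avoiding a finite bad set; then each $\beta_i$ is a $\Q$-polynomial in $\theta$ of controlled degree and height, placing $(\beta_1,\ldots,\beta_n)$ in a number field $\K=\Q(\theta)$ of bounded degree with minimal polynomials of bounded logarithmic height.

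The main obstacle will be making the Bertini step effective: quantifying the degree of the locus of bad hyperplanes at each stage, tracking how the logarithmic height of the defining equations grows after each cut, and ensuring that some irreducible component of the cut variety still meets $\{P\neq 0\}$ so that the point found corresponds to the original constructible set. Once those bookkeeping bounds are in place, the effective B\'ezout and Nullstellensatz estimates of Lemmas~\ref{lem:Bezout} and~\ref{lem:Hilb} propagate cleanly through the iteration and yield the final constants $C_1(D,N,n)$ and $C_2(D,N,n)$.
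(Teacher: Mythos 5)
The paper does not actually prove this lemma: it states that the assertion is exactly what is established in the proof of \cite[Lemma~2.14]{Chang0} (see \cite[Equation~(2.15)]{Chang0}), so there is no in-paper argument for you to match. Your overall plan (Rabinowitsch reduction, cutting to dimension zero, effective elimination, primitive element) is a reasonable and standard template for this kind of statement, and the Rabinowitsch step already disposes of the worry you raise at the end: every point of $V(P_0,P_1,\ldots,P_N)\subseteq\C^{n+1}$ automatically has $P\ne 0$ because $TP=1$ there, so there is nothing to check about the cut variety ``still meeting'' $\{P\ne 0\}$.

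The genuine gap is in the elimination step, and it is more than bookkeeping. Lemma~\ref{lem:Hilb} cannot ``produce'' the $F_i$: it takes a polynomial $f$ that is already known to vanish on the variety as input, and it bounds only the integer $b$, not the heights of the cofactors $Q_j$ nor the height of any nonzero element of the elimination ideal. Neither Lemma~\ref{lem:Bezout} nor Lemma~\ref{lem:Hilb}, as stated in the paper, controls the height of a univariate eliminant for a zero-dimensional system. What you actually need is an arithmetic elimination estimate --- a bound on the degree and height of the Chow form or $u$-resultant of a zero-dimensional ideal cut out by polynomials of bounded degree and height --- which is the stronger content of \cite{KiPaSo} (and of Chang's argument) rather than of the two lemmas you invoke. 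Your fallback, ``a direct resultant step,'' can be made to work, but you must first perform a linear change of coordinates with bounded integer entries to put the variety in Noether position so that the iterated resultants do not vanish identically, and then propagate the standard degree and height growth of resultants through $n$ eliminations. Similarly, the effective Bertini step needs an explicit bound on the number of positive-dimensional components (refined B\'ezout) and on the degree of the bad hyperplane locus before the box-counting argument can produce a good integer hyperplane. Until the height control on eliminants comes from an actual arithmetic elimination theorem rather than from Lemma~\ref{lem:Hilb}, the proposal is a plausible outline but not a proof.
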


Finally, we recall the following well-known result, see,
for example,~\cite[Theorem~6.32]{vzGG}.

\begin{lemma}
\label{lem:HeighDiv} Let $P,Q \in \Z[Z]$ be
two univariate non-zero polynomials with $Q\mid P$.
If $P$ is of logarithmic height at most $H\ge1$
then $Q$ is of  logarithmic height at most $H + O(1)$,
where the implied constant depends only on $\deg P$.
\end{lemma}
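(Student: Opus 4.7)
The plan is to invoke the classical Mignotte bound, which controls the coefficients of an integer divisor of $P$ through the Mahler measure of $P$. I would carry out three short steps.

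First, I factor $P(Z)=a\prod_{i=1}^{n}(Z-\alpha_i)$ over $\C$, where $n=\deg P$, and introduce the Mahler measure $M(P)=|a|\prod_{i=1}^{n}\max\{1,|\alpha_i|\}$. Landau's inequality yields $M(P)\le\sqrt{n+1}\,\|P\|_\infty$, where $\|\cdot\|_\infty$ denotes the maximum absolute value of the coefficients of a polynomial; hence $\log M(P)\le H+O(1)$ with implied constant depending only on $n$.

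Second, since $Q\mid P$ in $\Z[Z]$, the complex roots of $Q$ form a subset of $\{\alpha_1,\dots,\alpha_n\}$ and the leading coefficient of $Q$ divides $a$ in $\Z$. From the definition of the Mahler measure this gives $M(Q)\le M(P)$ at once.

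Third, I would recover $\|Q\|_\infty$ from $M(Q)$. Writing $Q(Z)=b\prod_{i\in I}(Z-\alpha_i)$ with $m=\deg Q\le n$ and expanding, each coefficient of $Q$ is $\pm b$ times an elementary symmetric function of the chosen roots, hence bounded by $\binom{m}{k}M(Q)$. Summing over $k$ yields $\|Q\|_\infty\le 2^{m}M(Q)$. Combining the three bounds,
$$
\log\|Q\|_\infty\le H+m\log 2+\tfrac12\log(n+1)\le H+O(1),
$$
with implied constant depending only on $\deg P$, which is the claim. The argument is entirely classical and I foresee no obstacle; the only noteworthy point is that the exponential factor $2^{\deg Q}$ coming from Mignotte's estimate is absorbed into the $O(1)$ precisely because the implied constant is permitted to depend on $\deg P$.
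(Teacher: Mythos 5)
Your argument is correct and is precisely the content of the reference the paper cites for this lemma (von zur Gathen and Gerhard, Theorem~6.32), namely Mignotte's factor bound derived through the Mahler measure and Landau's inequality; the paper itself offers no proof beyond that citation. One tiny wording quibble: in the last step you speak of ``summing over $k$'' to get $\|Q\|_\infty\le 2^m M(Q)$, but the $\infty$-norm is a maximum, so what you actually use is $\max_k\binom{m}{k}\le 2^m$ --- harmless, and the conclusion stands.
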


\subsection{Product Sets in Number Fields}

Let $\K$ be a finite extension of $\Q$ and
let $\ZK$ be the ring of integers in $\K$.
We denote by $\cH(\gamma)$ the logarithmic height of $\gamma\in \K$.
We recall that the logarithmic height of an algebraic number
$\alpha$ is defined as the logarithmic height
of its minimal polynomial.

For an integral ideal $\aaf$ of $\ZK$  we denote by $\Nm(\aaf)$
the norm of $\aaf$, that is, the cardinality of the residue ring
$\ZK/\aaf \ZK$. We also use $\Nm(\alpha)$ to denote the norm of
$\alpha \in \ZK$. In particular $\Nm(\alpha) =  \Nm((\alpha))$
where $(\alpha)$ denotes the principal ideal generated by
$\alpha$.

First we recall the following well-known bound,
which follows immediately from~\cite[Lemma~4.2]{Nar}
and  the classical bound on the
divisor function.

\begin{lemma}
\label{lem:IdealNorm} Let $\K$ be a finite extension of $\Q$ of degree
$d = [\K:\Q]$. For any integer $N\ge 3$,  in $\K$ there are at most
$\exp\(O(\log N/\log \log N)\)$ integral ideals  of norm $N$,
where the implied constant depends on $d$.
\end{lemma}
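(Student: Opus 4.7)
The plan is to combine the unique factorisation of integral ideals of $\ZK$ into prime ideals with the classical maximal-order bound for the divisor function $\tau(N) = N^{o(1)}$. The cited~\cite[Lemma~4.2]{Nar} presumably provides the standard fact that every rational prime $p$ has at most $d$ prime ideals of $\ZK$ above it (a consequence of the fundamental identity $\sum_{\pf \mid p} e_\pf f_\pf = d$, where $e_\pf$ and $f_\pf$ are the ramification and residue degrees), each of norm a power $p^{f_\pf}$ with $1 \le f_\pf \le d$.

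First, I would factor $N = \prod_{i=1}^k p_i^{a_i}$ into rational prime powers. Any integral ideal $\aaf$ with $\Nm(\aaf)=N$ has a unique decomposition $\aaf = \prod_\pf \pf^{b_\pf}$, and the condition $\Nm(\aaf)=N$ translates, for each $i$, into the single equation
$$
\sum_{\pf \mid p_i} f_\pf\, b_\pf = a_i,
$$
in the nonnegative integer unknowns $b_\pf$. The number of solutions of each such equation is at most $(a_i+1)^d$, since there are at most $d$ prime ideals above $p_i$ and each $b_\pf$ ranges over at most $a_i+1$ values. Multiplying across all $i$ yields
$$
\#\{\aaf \subseteq \ZK~:~\Nm(\aaf)=N\} \;\le\; \prod_{i=1}^k (a_i+1)^d \;=\; \tau(N)^d,
$$
where $\tau$ is the ordinary divisor function on $\Z$.

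To conclude, I would invoke the classical estimate $\tau(N) \le \exp\(O(\log N/\log\log N)\)$, which holds uniformly in $N\ge 3$. Raising to the $d$-th power only affects the implied constant (which is permitted to depend on $d$), giving the announced bound. There is no serious obstacle here; the only small point to watch is that for ramified primes the $f_\pf$ may be strictly less than $d$ and the ramification indices $e_\pf$ are absorbed into the counting by simply letting $b_\pf$ range freely, so the crude bound $(a_i+1)^d$ is legitimate without any case analysis.
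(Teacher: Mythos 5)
Your proof is correct and takes essentially the same route as the paper, which simply cites~\cite[Lemma~4.2]{Nar} together with the classical maximal-order bound for the divisor function. Your argument (unique factorisation of ideals giving at most $\tau(N)^d$ ideals of norm $N$, followed by $\tau(N)\le\exp(O(\log N/\log\log N))$) is precisely the content of that citation, just written out in full.
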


We also need a bound of Chang~\cite[Proposition~2.5]{Chang0} on the divisor function in
algebraic number fields.

\begin{lemma}
\label{lem:Div ANF} Let $\K$ be a finite extension of $\Q$ of degree
$d = [\K:\Q]$. For any algebraic integer $\gamma\in \Z_K$ of logarithmic
height at most $H\ge 2$, the number of   pairs $(\gamma_1, \gamma_2)$
of  algebraic integers $\gamma_1,\gamma_2\in \Z_K$ of
logarithmic  height at most $H$
with $\gamma=\gamma_1\gamma_2$ is at most $\exp\(O(H/\log H)\)$,
where the implied constant depends on $d$.
\end{lemma}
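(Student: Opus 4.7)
The plan is to reduce the counting of factorisations $\gamma=\gamma_1\gamma_2$ in $\Z_K$ to counting ideal factorisations of the principal ideal $(\gamma)$, and then to bound the resulting unit ambiguity using Dirichlet's unit theorem. The starting observation is that an algebraic integer $\gamma\in\Z_K$ of logarithmic height at most $H$ has all of its $d$ conjugates bounded in absolute value by $e^{H+O(1)}$, so $\log|\Nm(\gamma)|=O(H)$ with the implied constant depending only on $d=[\K:\Q]$. Write $M=|\Nm(\gamma)|$, so that $\log M\ll H$.

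To each factorisation $\gamma=\gamma_1\gamma_2$ in $\Z_K$ I associate the ordered ideal factorisation $(\gamma)=(\gamma_1)(\gamma_2)$. The number of ordered ideal factorisations $(\gamma)=\aaf\bbf$ equals the number of integral divisor ideals of $(\gamma)$, and I would bound this by the crude estimate
$$
\#\{\aaf~:~\aaf\mid(\gamma)\}\le\sum_{n\mid M}\#\{\aaf~:~\Nm(\aaf)=n\}.
$$
Applying Lemma~\ref{lem:IdealNorm} to each term and the elementary bound $\tau(M)\le\exp(O(\log M/\log\log M))$ on the number of rational divisors gives a total of $\exp(O(\log M/\log\log M))=\exp(O(H/\log H))$ admissible ideal factorisations.

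Next, I fix one such factorisation $(\gamma)=\aaf\bbf$; it contributes nothing unless both $\aaf$ and $\bbf$ are principal, in which case I select a generator $\alpha$ with $\aaf=(\alpha)$. Every $\gamma_1\in\Z_K$ with $(\gamma_1)=\aaf$ then has the form $\gamma_1=u\alpha$ for some unit $u\in\Z_K^*$, and $\gamma_2=\gamma/\gamma_1$ is determined. The twin constraints $\cH(\gamma_1),\cH(\gamma_2)\le H$, combined with the identity $|\sigma(\gamma_1)|\,|\sigma(\gamma_2)|=|\sigma(\gamma)|$ at each archimedean embedding $\sigma$ and the standard comparison between the logarithmic height of an algebraic integer and $\max_\sigma\log^+|\sigma(\cdot)|$, force the image of $u$ under the logarithmic embedding of $\Z_K^*$ into a convex subset of diameter $O(H)$ in the unit lattice. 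Since by Dirichlet's unit theorem this lattice has rank $r\le d-1$, the number of admissible units $u$ is $O(H^{d-1})$. Multiplying the two counts yields the target estimate $H^{d-1}\exp(O(H/\log H))=\exp(O(H/\log H))$.

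The principal obstacle will be the careful execution of the last step: translating the coefficient-based logarithmic height bound $\cH(\gamma_i)\le H$ into a polytope constraint for the logarithmic embedding of $\Z_K^*$. This relies on Mahler-type inequalities relating the height of an algebraic integer to the maximum size of its conjugates, and requires care with the fact that $\gamma_1$ and $\gamma_2$ may generate proper subfields of $\K$, so that the comparison of heights must be done via the $d$ conjugates over $\Q$ rather than the degree of the minimal polynomial alone.
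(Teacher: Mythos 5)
The paper does not prove this lemma: it is quoted from Chang's Proposition~2.5 in~\cite{Chang0}, so there is no proof in the text to compare against. From the way the surrounding material reuses the same ideal/unit bookkeeping (notably the proof of Lemma~\ref{lem:ProdSet}, where Chang's proof is invoked for the unit-count in equation~\eqref{eq:ideals}), your ideal-factorisation-plus-unit-lattice route is almost certainly the intended one, so I will assess it on its own merits.

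The first half of your argument is fine: since $\cH(\gamma)\le H$ forces all conjugates of $\gamma$ to be at most $e^{H+O(1)}$, we get $\log|\Nm(\gamma)|\ll_d H$, and then the number of divisor ideals of $(\gamma)$ is bounded by $\tau(|\Nm(\gamma)|)$ times the maximal value furnished by Lemma~\ref{lem:IdealNorm}, which is $\exp(O_d(H/\log H))$. (This tacitly requires $\gamma\ne 0$, which the lemma must assume since the claim is false for $\gamma=0$.)

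The unit-counting step, however, has a real gap: from ``the region has diameter $O(H)$'' and ``the unit lattice has rank $r\le d-1$'' you cannot immediately deduce ``$O(H^{d-1})$ lattice points.'' You also need a lower bound, uniform over fields $\K$ of degree $d$, on the shortest nonzero vector of the logarithmic unit lattice — equivalently, a lower bound on the height of non-torsion units. This is a genuine input (Kronecker's theorem plus a compactness argument over algebraic integers of bounded degree, or a Dobrowolski-type bound), and it is precisely what makes the implied constant depend only on $d$ rather than on $\K$ through the regulator. A second, smaller omission is the diameter estimate itself: the height bounds $\cH(\gamma_i)\le H$ give $\log|\sigma(u)|$ upper and lower bounds whose gap is $2H+O(1)-\log|\sigma(\gamma)|$, and to see this is $O_d(H)$ you must invoke $\log|\sigma(\gamma)|\ge-(d-1)H-O(1)$, which comes from $|\Nm(\gamma)|\ge1$; your sketch uses the multiplicative identity at each embedding but does not make this lower bound explicit. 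With these two points supplied the argument goes through, and your resulting $O_d(H^{d-1})$ for the per-ideal unit count is actually sharper than the $\exp(O(H/\log H))$ the paper later records in~\eqref{eq:ideals}; both suffice since $H^{d-1}=\exp(O(\log H))$ is absorbed by the ideal count.
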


We  now derive the following  generalisation
of~\cite[Lemma~2]{BKS1}.

\begin{lemma}
\label{lem:ProdSet} Let $\K$ be a finite extension of $\Q$ of
degree $d = [\K:\Q]$. Let $\cA, \cB \subseteq \K$ be finite sets
with elements of  logarithmic height at most $H$.
Then we have
$$
\#(\cA\cB) > \exp\(-c(d) \frac{H}{\sqrt{\log H}}\) \#\cA \# \cB,
$$
where $c(d)$ depends only on $d$.
\end{lemma}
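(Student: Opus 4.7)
\medskip

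The plan is to establish the standard representation-function inequality
\[
\#(\cA\cB) \ge \frac{\#\cA \cdot \#\cB}{\max_{c \in \cA\cB} r(c)}, \qquad r(c) := \#\{(a,b) \in \cA \times \cB : ab = c\},
\]
which is immediate from $\#\cA \cdot \#\cB = \sum_{c \in \cA\cB} r(c)$. Since every $c \in \cA\cB$ has logarithmic height bounded by $2H + O_d(1)$, it then suffices to prove the uniform divisor-type bound
\[
r(c) \le \exp\!\left( c(d)\,\frac{H}{\sqrt{\log H}} \right),
\]
which is the natural generalization of~\cite[Lemma~2]{BKS1} from algebraic integers to arbitrary elements of $\K$.

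The first step is to reduce to $\ZK$ by clearing denominators. For each $a \in \cA$, let $u_a \in \Z_{>0}$ be the leading coefficient of the primitive $\Z$-minimal polynomial of $a$; then $1 \le u_a \le e^H$, the element $\alpha_a := u_a a$ lies in $\ZK$, and an elementary computation with the minimal polynomial shows $\cH(\alpha_a) \le c_1(d) H$. The map $a \mapsto (\alpha_a, u_a)$ is injective on $\cA$. Applying the same construction to $b \in \cB$, every pair with $ab=c$ yields a quadruple $(\alpha_a, u_a, \beta_b, u_b) \in \ZK \times \Z_{>0} \times \ZK \times \Z_{>0}$ with
\[
\alpha_a \beta_b = u_a u_b\, c \in \ZK, \qquad \cH(\alpha_a),\ \cH(\beta_b) \le c_1(d) H, \qquad u_a,\ u_b \le e^H.
\]

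The second step is a dyadic decomposition in the denominators. Write $\cA = \bigsqcup_{j=0}^{J} \cA_j$ with $\cA_j := \{a \in \cA : 2^j \le u_a < 2^{j+1}\}$, and similarly $\cB = \bigsqcup_{k=0}^{K} \cB_k$, where $J, K = O(H)$. Since $\sum_{j,k}\#\cA_j \#\cB_k = \#\cA \cdot \#\cB$, after losing a polynomial factor in $H$ (absorbed into the exponent $H/\sqrt{\log H}$) it suffices to prove the analogous product-set bound for a dominant pair $(j^{*},k^{*})$; that is, we may assume $u_a \in [U,2U)$ and $u_b \in [V,2V)$ for some fixed $U,V$, so that $N := u_a u_b \in [UV, 4UV)$.

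The third step is the divisor estimate. For fixed $c$ and each admissible $N \in [UV,4UV)$ with $Nc \in \ZK$, the pairs contributing to $r(c)$ with $u_a u_b = N$ are controlled by the product of two divisor counts: the integer count $\tau(N) \le \exp(O(H/\log H))$ of factorizations $N=uv$, and the $\ZK$-count $\#\{(\alpha,\beta) \in \ZK^2 : \alpha\beta = Nc,\ \cH(\alpha), \cH(\beta) \le c_1(d) H\} \le \exp(O(H/\log H))$, the latter by Lemma~\ref{lem:Div ANF} applied to the algebraic integer $Nc$ of height $O(H)$. Summing over $N$ and applying Cauchy--Schwarz,
\[
r(c) \;\le\; \bigl(\#\{N\}\bigr)^{1/2} \cdot \Bigl(\sum_N \tau(N)^2 \cdot D(Nc)^2\Bigr)^{1/2},
\]
where the number of admissible $N$ is bounded, via Lemma~\ref{lem:IdealNorm} applied to the denominator ideal of $c$, by $\exp(O(H/\log H))$ as well. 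The Cauchy--Schwarz step is where the exponent degrades from $H/\log H$ (available for $\ZK$ via Lemma~\ref{lem:Div ANF}) to $H/\sqrt{\log H}$, accounting for the square-root loss in the statement.

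The principal obstacle is exactly this third step. Naively, the set of $N \le e^{2H}$ with $Nc \in \ZK$ has size up to $e^{O(H)}$, so any genuine improvement over the trivial bound requires combining the pointwise divisor bounds in $\Z$ and $\ZK$ with the Cauchy--Schwarz averaging and the ideal-counting Lemma~\ref{lem:IdealNorm}. Carrying out this balancing carefully---and keeping track of the implied constants depending on $d$---produces the desired estimate $r(c) \le \exp\!\bigl(c(d) H/\sqrt{\log H}\bigr)$, from which the lemma follows.
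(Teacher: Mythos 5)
The proposal has a fatal gap at its very first step: the claimed uniform bound $r(c)\le\exp\(c(d)H/\sqrt{\log H}\)$ is simply false. Take $\K=\Q$, let $\cA=\cB$ be the set of fractions $a/b$ in lowest terms with $1\le a,b\le e^{H/2}$ (so every element has height at most $H/2$ and $\#\cA\gg e^{H}$), and take $c=1$. Since $\cA$ is closed under inversion, $r(1)=\#\{a\in\cA : 1/a\in\cA\}=\#\cA\gg e^{H}$. More generally $\max_c r(c)$ can be as large as $\min(\#\cA,\#\cB)$, so the representation-function inequality $\#(\cA\cB)\ge \#\cA\#\cB/\max_c r(c)$ only recovers the trivial bound $\#(\cA\cB)\ge\max(\#\cA,\#\cB)$. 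A second error: the number of admissible $N$ in a dyadic range with $Nc\in\ZK$ is not $\exp(O(H/\log H))$ --- it is the number of multiples of a fixed denominator in an interval of length $\gg UV$, which can be exponentially large in $H$.

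The paper's proof avoids bounding $r(c)$ altogether. Writing $\gamma=\aaf(\gamma)/\bbf(\gamma)$ with coprime integral ideals of norm $\exp(O(H))$, it first prunes $\cA,\cB$ iteratively down to subsets $\cA_0,\cB_0$ retaining a $\exp(-O(H/\sqrt{\log H}))$ fraction of the mass, with the property that no ideal $\mf$ of norm $>M_1=\exp(c_1(d)H/\sqrt{\log H})$ divides $\aaf(\gamma)$ or $\bbf(\gamma)$ for too many $\gamma$ in the subset. Crucially, it then counts factorizations $\vartheta\rho=\gamma$ only with $\rho$ in a \emph{$\vartheta$-dependent} set $\cB_0(\vartheta)$, namely those $\rho$ whose ideals $\aaf(\rho),\bbf(\rho)$ are nearly coprime (gcd of norm $\le M_1$) to $\bbf(\vartheta),\aaf(\vartheta)$ respectively. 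The pruning guarantees $\#\cB_0(\vartheta)>\frac12\#\cB_0$ for every $\vartheta$, so one still covers at least half of $\cA_0\times\cB_0$; and the near-coprimality forces $\aaf(\vartheta)$ to be, up to a factor of norm $\le M_1$, a divisor of $\aaf(\gamma)$, which together with Lemma~\ref{lem:IdealNorm} and the relation $\sf_1\vartheta=\eta\sf_2$ pins down $\vartheta$ up to $\exp(O(H/\sqrt{\log H}))$ choices. It is this \emph{restricted} multiplicity $L$, not $\max_c r(c)$, that admits the $\exp(O(H/\sqrt{\log H}))$ bound. Your counterexample illustrates exactly why the restriction matters: for $\rho=1/\vartheta$ one has $\aaf(\rho)=\bbf(\vartheta)$, so the gcd condition fails and such $\rho$ is excluded from $\cB_0(\vartheta)$. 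Finally, note that~\cite[Lemma~2]{BKS1} is the $\Q$-case of the present lemma, already about fractions and already containing the pruning argument; it is not a divisor bound over $\ZK$ that one can simply ``generalise'' by clearing denominators.
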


\begin{proof}
We fix some maps $\aaf(\gamma)$
and $\bbf(\gamma)$ (not necessary uniquely defined) that for
an algebraic number $\gamma\in \K$ produce relatively prime
 ideals $\aaf(\gamma), \bbf(\gamma) \in \ZK$  of norm
$\exp\(O\(\cH(\gamma)\)\)$ with
$$
\gamma \bbf(\gamma) = \aaf(\gamma).
$$

We also use $\LH$ to denote the set of elements of $\K$
of logarithmic height at most $H$.

Clearly if the ideals $\aaf(\gamma)= \aaf$ and $\bbf(\gamma)=\bbf$
are fixed then $\gamma$ is defined up to a multiplication
by a unit. Thus as in the proof of~\cite[Proposition~2.5]{Chang0}
we see that for any integral ideals $\aaf$ and $\bbf$ we have
\begin{equation}
\label{eq:ideals}
\#\left \{\gamma \in \cL_H~:~ \aaf(\gamma)= \aaf, \ \bbf(\gamma)=\bbf \right\}
= \exp \(O\(H/\log H\)\).
\end{equation}

Denote
$$
M_1=\exp \(c_1(d)\frac{H}{\sqrt{\log H}}\) \mand M_2=\exp\(c_2(d)\frac{H}{\log H} \)
$$
for certain constants  $c_1(d),c_2(d)>0$ that depend only on $d$.

We claim that for an appropriate choice of
$c_1(d)$ and $c_2(d)$ there is a subset $\cA_0 \subseteq \LH$
of cardinality
\begin{equation}
\label{eq:Card A0}
   \#\cA_0> M_2^{-2 H/\log M_1} \# \cA=\exp
\(-2\frac{H}{\sqrt {\log H}}\) \# \cA
\end{equation}
 and two integral ideals $\sf_1$ and $\sf_2$
such that  $\sf_1 \cA_0 \subseteq \sf_2\cA$ and for any integral ideal $\mf$
with  $\Nm(\mf)>M_1$, we have
\begin{equation}
\label{eq:Div A0}
\# \left\{\gamma \in \cA_0~:~  \mf \mid \aaf(\gamma)\quad\text{or}
\quad \mf \mid \bbf(\gamma) \right\}< \frac 2{M_2} \# \cA.
\end{equation}

The construction is straightforward.
For a real positive $R$ we denote
$$
\cE_R=\left\{\gamma \in \K~:~   \Nm\( \aaf(\gamma)\bbf(\gamma)\) \leq  R\right\}.
$$
Hence $\LH \subseteq \cE_{R}$ where
\begin{equation}
\label{eq:H and R}
R = \exp\(O\(H\)\).
\end{equation}

If $\cA_0 = \cA$ does not satisfy~\eqref{eq:Div A0},
there is an integral ideal $\mf_1 \in\Z$
with $\Nm(\mf_1)> M_1$ and a subset
$\cA_1\subseteq \cE_{R/\Nm(\mf_1)}
\subseteq  \cE_{R/M_1}$ of cardinality $\# \cA_1\geq M_2^{-1} \# \cA$
and such that
\begin{itemize}
\item  either
$$\mf_1 \cA_1 \subseteq \cA$$
\item or
$$  \cA_1 \subseteq \mf_1\cA.
$$
\end{itemize}

Repeat with $\cA$ replaced by $\cA_1$ until,
after performing $k$ steps, we obtain a subset
$\cA_k\subseteq \cE_{RM_1^{-k}}$ such that
 $\sf_1 \cA_k \subseteq \sf_2\cA$  for some
two integral ideals $\sf_1$ and $\sf_2$ and
such that~\eqref{eq:Div A0} holds with $\cA_k$ instead
of $\cA_0$. Assuming that $\cA_k$ is the first set
with this property, we derive
$$
\# \cA_k \ge \frac{1}{M_2} \# \cA_{k-1}\ge
\ldots \ge  \frac{1}{M_2^{k}} \# \cA.
$$
Since we obviously have $R\geq M_1^k$, we see from~\eqref{eq:H and R}
that $k \ll H/\log M_1$
which implies~\eqref{eq:Card A0} provided that
$$
c_1(d) = \frac{1}{2} c_2(d).
$$

We now  use a similar argument to choose  a
subset $\cB_0 \subseteq \LH$
of cardinality
\begin{equation}
\label{eq:Card B0}
\#\cB_0> M_2^{-2H/\log M_1} \# \cB>\exp
\(-2\frac{H}{\sqrt {\log H}}\) \# \cB
\end{equation}
and two integral ideals $\tf_1$ and $\tf_2$
such that  $\tf_1 \cB_0 \subseteq \tf_2\cB$ and for any integral ideal $\mf$
with  $\Nm(\mf)>M_1$, we have
$$
\# \left\{ \gamma \in \cB_0~:~ \mf \mid \aaf(\gamma)\quad\text{or}
\quad \mf \mid \bbf(\gamma) \right\}< \frac 2{M_2}  \# \cB.
$$

We now establish a lower bound on  $\# \(\cA_0   \cB_0\)$.

Given $\gamma \in \LH$,  denote
\begin{equation*}
\begin{split}
\cA_0(\gamma)=\{\vartheta \in \cA_0~:~
\Nm\(\gcd(\aaf(\vartheta),\bbf(\gamma))\)\leq M_1&,\\
\Nm\(\gcd(\bbf(\vartheta), \aaf(\gamma))\)& \leq M_1\}.
\end{split}
\end{equation*}

We now recall the well known bound
on the divisor function
\begin{equation}
\label{eq:tau}
\tau(m)  \le \exp\((\log 2+o(1))
\frac{\log m}{\log \log m} \)
\end{equation}
which is also a special case of Lemma~\ref{lem:Div ANF}.

As in the proof of~\cite[Lemma~2]{BKS1}, we note
that the bounds~\eqref{eq:Div A0}, \eqref{eq:tau} and  Lemma~\ref{lem:IdealNorm}
imply that, for a sufficiently large $H$,
$$
\#\(\cA_0\backslash \cA_0(\gamma)\) \le \frac{2}{M_2} \#\cA_0 \exp \(O\(\frac{H}{\log H}\)\)
 <\frac 12 \#\cA_0
$$
for an appropriate choice of $c_2(d)$ in the definition
of $M_2$.

 Defining $\cB_0(\gamma)$ in a similar way, we conclude that
\begin{equation}
\label{eq: A0B_0 large}
\#  \cA_0(\gamma)\ >\frac 12 \# \cA_0\mand
\#  \cB_0(\gamma)\ >\frac 12 \# \cB_0
\end{equation}
for every $\gamma \in \LH$.

We have
$$
  \# \(\cA  \cB\)  \ge  \# \(\cA_0   \cB_0\)
\ge \# \(\bigcup_{\vartheta \in \cA_0}
\{\vartheta \rho~:~ \rho\in \cB_0(\vartheta)\}\).
$$

Using~\eqref{eq: A0B_0 large} we conclude that
\begin{equation}
\label{eq:prelim bound}
  \# \(\cA  \cB\)  \ge   \frac{1}{2L} \# \cA_0  \# \cB_0,
\end{equation}
where
$$
L=\max_{\gamma\in \LH}\# \left\{(\vartheta,\rho) ~:~
\vartheta\in \cA_0, \
\rho \in \cB_0(\vartheta), \  \vartheta \rho=\gamma\right\}.
$$
It remains to bound $L$.

Since
$$
\aaf(\vartheta) \aaf(\rho)\bbf(\gamma)  =
\bbf(\vartheta) \bbf(\rho) \aaf(\gamma),
$$
it follows from the definition of $\cB_0(\alpha)$
that $\aaf(\vartheta)  = \qf \mf$ for some integral ideal
$(\qf)$ dividing $\aaf(\gamma)$ and integral ideal $\mf$ with
$\Nm(\mf) \le M_1$.
We recall that there are $O(M_1)$
integral ideals of norm at most $M_1$, see~\cite[Proposition~7.10]{Nar}.
Hence
by~\eqref{eq:tau} and Lemma~\ref{lem:IdealNorm}, there are only at
most $M_1\exp \(O\(H/\log H\)\)$ possible values that can be taken
by the ideal $\aaf(\vartheta)$.

Similarly, estimates also hold for the number of possible
values that can be taken  by $\aaf(\rho)$,
$\bbf(\vartheta)$ and $\bbf(\rho)$.

We now recall that all elements $\vartheta\in \cA_0$ satisfy
$\sf_1 \vartheta = \eta \sf_2$ with $\vartheta \in \cL_H$
and fixed integral ideals $\sf_1$ and $\sf_2$.  We also have a
similar property for all elements
$\rho \in \cB_0(\vartheta) \subseteq \cB_0$.
Therefore, using~\eqref{eq:ideals}, we derive
\begin{equation}
\label{eq:L bound}
L\le M_1^{4}   \exp \(O\(\frac{H}{\log H}\)\)
 \le \exp \(5c_1(d) \frac{H}{\sqrt {\log H}}\),
\end{equation}
provided that $H$ is large enough.
Substituting~\eqref{eq:L bound}
in~\eqref{eq:prelim bound},
and using~\eqref{eq:Card A0} and~\eqref{eq:Card B0}, we conclude the proof.
\end{proof}

We also have a full analogue of~\cite[Corollary~3]{BKS1}

\begin{cor}
\label{cor:ProdSet} Let $\K$ be a finite extension of $\Q$ of
degree $d = [\K:\Q]$. Let $\cC \subseteq \K$ be a finite set with
elements of  logarithmic height at most $H\ge2$.
Then we have
$$
\#(\cC^{(\nu)}) > \exp\(-c(d,\nu) \frac{H}{\sqrt{\log H}}\) (\#\cC)^\nu,
$$
where $c(d,\nu)$ depends only on $d$ and $\nu$.
\end{cor}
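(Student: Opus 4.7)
The plan is to obtain the corollary from Lemma~\ref{lem:ProdSet} by induction on $\nu$. The base case $\nu=1$ is trivial (take $c(d,1)=0$), and $\nu=2$ is exactly Lemma~\ref{lem:ProdSet} applied with $\cA=\cB=\cC$, which gives $c(d,2)=c(d)$.

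For the inductive step, I would write $\cC^{(\nu)} = \cC^{(\nu-1)}\cdot \cC$ and apply Lemma~\ref{lem:ProdSet} with $\cA = \cC^{(\nu-1)}$ and $\cB = \cC$. To do this I must bound the logarithmic heights of elements of $\cC^{(\nu-1)}$. Since the logarithmic (Weil) height on $\K$ is subadditive on products, up to a bounded correction depending only on $d$ between the minimal-polynomial height used in the paper and the absolute logarithmic height, a product of $\nu-1$ elements of $\cL_H$ lies in $\cL_{H'}$ with
$$
H' \le (\nu-1)H + O_{d,\nu}(1) \le \nu H,
$$
provided $H\ge 2$ (and adjusting the constant inside $c(d,\nu)$ otherwise). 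Hence both $\cA$ and $\cB$ lie in $\cL_{H'}$, and Lemma~\ref{lem:ProdSet} applied at height $H'$ yields
$$
\#\cC^{(\nu)} = \#(\cA\cB) > \exp\!\left(-c(d)\frac{H'}{\sqrt{\log H'}}\right)\#\cC^{(\nu-1)}\,\#\cC.
$$

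Combining this with the inductive hypothesis
$$
\#\cC^{(\nu-1)} > \exp\!\left(-c(d,\nu-1)\frac{H}{\sqrt{\log H}}\right)(\#\cC)^{\nu-1}
$$
and using that $H'\le \nu H$ together with the monotonicity $H'/\sqrt{\log H'} \le \sqrt{\nu}\,H/\sqrt{\log H}\cdot\sqrt{\nu} = \nu\, H/\sqrt{\log H}$ for $H\ge 2$, we obtain
$$
\#\cC^{(\nu)} > \exp\!\left(-\bigl(\nu\,c(d) + c(d,\nu-1)\bigr)\frac{H}{\sqrt{\log H}}\right)(\#\cC)^{\nu},
$$
so the recursion $c(d,\nu) = \nu\,c(d) + c(d,\nu-1)$ produces a constant depending only on $d$ and $\nu$, as required.

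The only mild obstacle is the bookkeeping for the height growth when passing from $\cC$ to $\cC^{(\nu-1)}$; this is a standard property of heights in number fields and affects only the numerical value of $c(d,\nu)$, not the structure of the argument. No new analytic or combinatorial input is required beyond Lemma~\ref{lem:ProdSet}.
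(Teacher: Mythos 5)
Your induction via Lemma~\ref{lem:ProdSet}, using subadditivity of the Weil height (up to $O_{d,\nu}(1)$ corrections when translating to the minimal-polynomial height) to place $\cC^{(\nu-1)}$ in $\cL_{H'}$ with $H'\le\nu H$, is the intended argument; the paper states the corollary without proof as a ``full analogue'' of the corresponding result in~\cite{BKS1}, and this is exactly how one gets it. Two cosmetic points: the threshold below which you absorb things into the constant should be $H\ge H_0(d,\nu)$ rather than $H\ge 2$, and the bound $H'/\sqrt{\log H'}\le \nu H/\sqrt{\log H}$ follows immediately from $H'\le\nu H$ together with $\log H'\ge\log H$ (the detour through $\sqrt\nu\cdot\sqrt\nu$ is unnecessary).
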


\subsection{Resultant bound}

Let  $\nu\ge 2$ be an integer. For integers  $n,m$ with $2\le n,m\le \nu$,
we define the $(n+m-2)\times (m-1)$ matrix   $A(\nu; n, m)$  as follows:
\begin{equation*}
\(
  \begin{array}{cccccccc}
    \nu-n+1 & \nu-n+2 & \ldots & \nu & 0 & 0 & \ldots & 0 \\
    0 & \nu-n+1 & \ldots  & \nu-1 & \nu & 0 & \ldots & 0  \\
    \ldots & \ldots & \ldots  & \ldots & \ldots & \ldots & \ldots & \ldots \\
    0 & \ldots & 0 & \nu-n+1 & \nu-n+2 & \ldots  & \ldots &\nu\\
  \end{array}
\)
\end{equation*}
Note that each row of $A(\nu; n, m)$ contains $m-2$ zeros.

\begin{lemma}
\label{lem:DeterMagic} Let $2 \le n,  m\le \nu$ be integers.
If in the  $(n+m-2)\times (n+m-2)$ matrix
$$
X(\nu; n, m)=\(
  \begin{array}{c}
    A(\nu; n, m)\\
    A(\nu; m, n) \\
  \end{array}
\)
$$
we mark $n+m-2$ nonzero elements such that  each row and each column contains exactly one marked element then the sum of the marked elements is always equal to
$$
\sigma = (\nu-n+1)(m-1)+\nu(n-1).
$$
\end{lemma}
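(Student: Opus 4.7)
The plan is to compute the value of each nonzero entry in $X(\nu;n,m)$ explicitly as a function of its position $(i,j)$, observe that it splits as a constant (depending only on which block the row lies in) plus the shift $j-i$, and then sum over a permutation, using the fact that $\sum \pi(i)=\sum i$ causes the shifts to telescope to zero.

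First I would set up coordinates. For $1\le i\le m-1$, the $i$-th row of the top block $A(\nu;n,m)$ has $n$ consecutive nonzero entries, located in columns $i,i+1,\ldots,i+n-1$, and reading $\nu-n+1,\nu-n+2,\ldots,\nu$. Thus the value at the admissible position $(i,j)$ is
\[
(\nu-n+1)+(j-i), \qquad i\le j\le i+n-1.
\]
For the bottom block $A(\nu;m,n)$, the roles of $n$ and $m$ are swapped: I would reindex its local row as $i'=i-(m-1)$ for $m\le i\le m+n-2$. Then $i'$ runs over $1,\ldots,n-1$, the admissible columns are $i',\ldots,i'+m-1$, and the entry reads $(\nu-m+1)+(j-i')$. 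Substituting $i'=i-m+1$ gives that the value at the admissible position $(i,j)$ in the bottom block is simply
\[
\nu+(j-i), \qquad i-m+1\le j\le i.
\]

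Now a legal marking is a permutation $\pi$ of $\{1,\ldots,n+m-2\}$ such that $(i,\pi(i))$ is admissible for every $i$. Using the two formulas above, the sum of the marked entries equals
\[
\sum_{i=1}^{m-1}\bigl((\nu-n+1)+(\pi(i)-i)\bigr)+\sum_{i=m}^{m+n-2}\bigl(\nu+(\pi(i)-i)\bigr),
\]
which regroups as
\[
(\nu-n+1)(m-1)+\nu(n-1)+\sum_{i=1}^{n+m-2}(\pi(i)-i).
\]
Since $\pi$ is a permutation of $\{1,\ldots,n+m-2\}$, the final sum vanishes, leaving exactly $\sigma=(\nu-n+1)(m-1)+\nu(n-1)$, as claimed.

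There is essentially no hard step; the entire proof is a bookkeeping exercise, and the only place I would be careful is verifying the reindexing in the bottom block so that the formula $\nu+(j-i)$ really is independent of the block boundary. A sanity check in a small case (say $\nu=n=m=2$, where $X$ is $2\times 2$, all admissible entries equal $\nu$, and $\sigma=2\nu$) would confirm the computation before writing the final version.
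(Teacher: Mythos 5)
Your proof is correct and rests on the same underlying observation as the paper's: each nonzero entry $x_{i,j}$ equals a row-block constant plus $j-i$, so summing over a permutation makes the shifts cancel because $\sum \pi(i)=\sum i$. The paper packages this slightly differently (it transforms $X$ into an auxiliary matrix $Y$ with $y_{i,j}=i+j$ on the nonzero positions, argues that the marked sum is independent of the marking, and then evaluates on the diagonal), while you compute the answer directly for an arbitrary permutation; your route is the cleaner of the two. One small caveat: the parenthetical sanity check at the end is miscomputed. For $\nu=n=m=2$ the admissible entries are $\nu-1$ and $\nu$ (not all $\nu$), and $\sigma=(\nu-1)\cdot1+\nu\cdot1=2\nu-1$, not $2\nu$; this matches your general formula and does not affect the proof, but the parenthetical as written would fail the check.
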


\begin{proof}
Let
$$
X(\nu; n, m)=(x_{i,j})_{1\le i,j\le n+m-2}
$$
where $i$ indicates the row. Since the sum of the diagonal elements of
$X(\nu; n, m)$ is equal to $(\nu-n+1)(m-1)+\nu(n-1)$, it suffices to prove
that the sum of the marked elements does not depend on the choice of marking.
To see this, we transform the matrix $X(\nu; n, m)$ into  a matrix
$$
Y(\nu; n, m)=(y_{i,j})_{1\le i,j\le n+m-2}
$$
as follows
\begin{itemize}
\item  If $x_{i,j}=0$, then we put $y_{i,j} =0$
\item If $x_{i,j}\ne 0$, then we put
$$y_{i,j} = \left\{
\begin{array}{ll}
x_{i,j} + n+2i-\nu-1, & \text{ for } 1\le i\le m-1,\\
x_{i,j} + 2i-\nu, & \text{ for }   m\le i\le n+m-2.
\end{array}
\right.
$$
\end{itemize}

Since the marked elements occur in each row exactly once, from this transformation of $X(\nu; n, m)$ into $Y(\nu; n, m)$ the sum of the  elements at the
marked positions changes only by
\begin{equation}
\label{eq:Count1}
\begin{split}
\sigma_1 =&\sum_{i =1}^{m-1} ( n+2i-\nu-1)  +\sum_{i=m}^{n+m-2}(2i-\nu)\\
=& (n-1)(m-1) - \nu(n+m-2) + 2 \sum_{i=1}^{n+m-2} i \\
=&  (n-1)(m-1) - \nu(n+m-2) + (n+m-1)(n+m-2)\end{split}
\end{equation}
and in particular does not depend on the choice of the marking. Therefore, it suffices to show that the corresponding marked elements of $Y(\nu; n, m)$ does not depend on the choice of marking.  But this follows from the observation that when $x_{ij}\not=
0$, we have that
$$
y_{i,j}=i+j.
$$
Hence, the sum of the corresponding marked elements of $Y(\nu; n, m)$ is equal to
$$
\sigma_2 = 2(1+\ldots +(n+m-2)) = (n+m-1)(n+m-2)
$$
and does not depend on the choice of marking.
Since $\sigma_2 - \sigma_1 = \sigma$, the result now follows.
\end{proof}

In particular, since $n,m\le \nu$, the sum  $\sigma$ of the marked elements in
Lemma~\ref{lem:DeterMagic}  is monotonically increasing function of $m$.
So replacing $m$ with $\nu$ we derive
$$
\sigma \le (\nu-n+1)(\nu-1)+\nu(n-1) = \nu(\nu-1) +n -1 \le \nu^2-1.
$$

\begin{cor}
\label{cor:DeterMagic} Let $M\ge1$ and
let  $2 \le n, m\le \nu$ be fixed integers. Let $P_1(Z)$ and $P_2(Z)$ be polynomials
 $$
P_1(Z)=\sum_{i=0}^{n-1}a_{i}Z^{i} \mand P_2(Z)= \sum_{i=0}^{m-1}b_{i}Z^{i}
 $$
such that
$$
a_{n-1}, b_{m-1}\not=0 \mand  |a_{i}|, |b_{i}| <M^{\nu -i}, \ i =0, \ldots, \nu-1.
$$
Then
$$
|\Res(P_1, P_2)|\ll M^{\nu^2-1}.
$$
\end{cor}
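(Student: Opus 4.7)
The plan is to realize $\Res(P_1,P_2)$ as the determinant of the Sylvester matrix $S$ and then to expand via the Leibniz formula, using Lemma~\ref{lem:DeterMagic} to control every term uniformly. The matrix $S$ has size $(n+m-2)\times(n+m-2)$, with $m-1$ rows built from shifted copies of the coefficient vector $(a_{n-1},a_{n-2},\ldots,a_0)$ of $P_1$, followed by $n-1$ rows built analogously from $(b_{m-1},\ldots,b_0)$. Under the hypothesis $|a_i|,|b_i|<M^{\nu-i}$, each nonzero entry in row $k$ of the $P_1$-block (for $1\le k\le m-1$) lying in column $k+(n-1-i)$ is $a_i$ and is bounded by $M^{\nu-i}$; reading these exponents in order gives precisely the nonzero entries of the $k$-th row of $A(\nu;n,m)$. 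The symmetric statement for the $P_2$-block matches $A(\nu;m,n)$. Thus, entry-by-entry, the exponent bound for $|S_{i,j}|$ coincides with $x_{i,j}$ from the matrix $X(\nu;n,m)$ of Lemma~\ref{lem:DeterMagic}, with the zero entries of $X$ corresponding to forced zeros of $S$.

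Now expand
\[
\Res(P_1,P_2)=\det S=\sum_{\pi\in \Sf_{n+m-2}}\mathrm{sgn}(\pi)\prod_{i=1}^{n+m-2}S_{i,\pi(i)}.
\]
A permutation $\pi$ contributes a nonzero summand only if every chosen position $(i,\pi(i))$ avoids a forced zero of $S$, i.e.\ exactly when the set $\{(i,\pi(i))\}$ is a legal "marking" of $n+m-2$ nonzero entries of $X(\nu;n,m)$, one per row and one per column. For every such $\pi$, our entrywise bound together with Lemma~\ref{lem:DeterMagic} gives
\[
\Bigl|\prod_{i=1}^{n+m-2}S_{i,\pi(i)}\Bigr|\le M^{\sum_i x_{i,\pi(i)}}=M^{\sigma},
\]
where $\sigma=(\nu-n+1)(m-1)+\nu(n-1)$ is independent of $\pi$, which is the whole point of invoking the lemma.

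To finish, I would bound the number of permutations crudely by $(n+m-2)!\le(2\nu)!$, a quantity depending only on $\nu$ and hence absorbed into the implied constant. Together with the inequality $\sigma\le \nu(\nu-1)+(n-1)\le \nu^2-1$ recorded in the remark immediately following Lemma~\ref{lem:DeterMagic}, this yields
\[
|\Res(P_1,P_2)|\le (2\nu)!\,M^{\sigma}\ll M^{\nu^2-1},
\]
as required. The main obstacle is conceptual rather than computational: one must correctly identify the exponent matrix of the Sylvester bounds with $X(\nu;n,m)$ and verify that "legal markings" in Lemma~\ref{lem:DeterMagic} are exactly the permutations contributing nonzero terms to $\det S$; once this dictionary is in place, the bound follows mechanically.
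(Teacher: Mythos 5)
Your proof is correct and follows essentially the same route as the paper: recall that the resultant is the determinant of the Sylvester matrix, expand via the Leibniz formula, and invoke Lemma~\ref{lem:DeterMagic} to observe that every surviving permutation contributes a product bounded by $M^\sigma$ with $\sigma\le\nu^2-1$, the number of permutations being $O(1)$ since it depends only on $\nu$. You simply make explicit the identification of the entry-wise exponent bounds with the matrix $X(\nu;n,m)$, which the paper leaves implicit.
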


\begin{proof}
We recall that
$$
\Res(P_1, P_2)=\det\left(
  \begin{array}{c}
    A\\
    B\\
  \end{array}
\right)
$$
where
\begin{equation*}
A=\left(
  \begin{array}{cccccccc}    a_{n-1} &  \ldots &  a_1 &a_{0} & 0 & 0 & \ldots & 0 \\
    0 & a_{n-1}&  \ldots &  a_1 &a_{0} & 0 & \ldots & 0  \\
    \ldots & \ldots & \ldots  & \ldots & \ldots & \ldots & \ldots & \ldots \\
    0 & \ldots & 0 & a_{n-1} &  \ldots & \ldots &  a_1 &a_{0}\\
  \end{array}
\right),
\end{equation*}
and
\begin{equation*}
B=\left(
  \begin{array}{cccccccc}   b_{m-1} & \ldots & b_1  &  b_0  & 0 & 0 & \ldots & 0 \\
    0 & b_{m-1} & \ldots & b_1  &  b_0 & 0 & \ldots & 0  \\
    \ldots & \ldots & \ldots  & \ldots & \ldots & \ldots & \ldots & \ldots \\
    0 & \ldots & 0 & b_{m-1} & \ldots & \ldots & b_1  &  b_0\\
  \end{array}
\right).
\end{equation*}
The result now follows from the representation of the determinant by sums of products of its elements and Lemma~\ref{lem:DeterMagic}.
\end{proof}

\subsection{Product Sets in $\F_p$}

We believe the results of this section can be of independent interest
and have several other applications. For example, the following result
in the case $\nu=4$
solves an open problem from~\cite{CillGar}.

\begin{lemma}
\label{lem:GProdSet Fp lin 1} Let $\nu \ge 1$ be a fixed integer, $\lambda\not\equiv 0\pmod p$. Assume
that for some sufficiently large positive integer $h$ and prime
$p$ we have
$$
h<p^{1/\max\{\nu^2-1,1\}}.
$$
Then for any  $s \in \F_p$ for the number $J_{\nu}(\lambda; h)$ of solutions of the congruence
$$
(x_1+s)\ldots(x_{\nu}+s)\equiv \lambda \pmod p,\quad 1\le x_1,\ldots,x_{\nu}\le h,
$$
we have the bound $$J_{\nu}(\lambda;h)<\exp\(c(\nu)\frac{\log h}{\log\log h}\),$$
 where  $c(\nu)$ depends only on $\nu$.
\end{lemma}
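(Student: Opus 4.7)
The plan is to convert the mod-$p$ counting problem into a factorization-counting problem in a number field, where the divisor-type bound of Lemma~\ref{lem:Div ANF} applies. The case $\nu=1$ is immediate, since $x_1+s\equiv\lambda\pmod p$ has at most one solution in $[1,h]\subseteq[1,p-1]$. Assume $\nu\ge 2$. Writing $J'$ for the number of distinct unordered multisets $\{x_1,\dots,x_\nu\}\subseteq[1,h]$ solving the congruence, and noting that each multiset yields at most $\nu!$ ordered tuples, it suffices to bound $J'$.

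To each solution multiset $x$ I associate the monic polynomial $P_x(S)=\prod_i(S+x_i)\in\Z[S]$, whose coefficient of $S^{\nu-k}$ is $e_k(x)$ and is of size at most $\binom{\nu}{k}h^k$. Fix a base solution $x^{(0)}$, and for each other solution $x^{(j)}$ set $Q_j(S)=P_{x^{(j)}}(S)-P_{x^{(0)}}(S)\in\Z[S]$. The leading terms cancel, so $\deg Q_j\le\nu-1$ and the coefficient of $S^{\nu-k}$ is of size $O_\nu(h^k)$, matching the hypothesis of Corollary~\ref{cor:DeterMagic} with $M$ a constant multiple of $h$. Since $s$ is a common root of every $Q_j$ modulo $p$, for any two indices $j\ne k$ the integer $\Res(Q_j,Q_k)$ is divisible by $p$ yet has absolute value $O_\nu(h^{\nu^2-1})$; once $h$ is sufficiently large, the hypothesis $h<p^{1/(\nu^2-1)}$ forces $\Res(Q_j,Q_k)=0$.

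Consequently any two of the $Q_j$ share an irreducible factor in $\Z[S]$ (via Gauss's lemma). Factor $Q_1=F_1\cdots F_r$ into primitive irreducibles with $r\le\nu-1$; every other $Q_j$ is divisible by some $F_\ell$, so by pigeonhole there is a single factor $F$ dividing $Q_j$ for at least $(J'-1)/(\nu-1)$ indices $j$. Fix a root $\alpha$ of $F$ and set $\K=\Q(\alpha)$; for all such $j$ we have $P_{x^{(j)}}(\alpha)=P_{x^{(0)}}(\alpha)=:\gamma\in\K$. By Lemma~\ref{lem:HeighDiv}, $F$ has logarithmic height $O_\nu(\log h)$, so $\alpha$ does too. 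Multiplying through by the leading coefficient $c$ of $F$, I obtain an algebraic integer $\tilde\alpha=c\alpha\in\Z_\K$ and the factorization $c^\nu\gamma=\prod_i(\tilde\alpha+cx_i^{(j)})$ in $\Z_\K$, where every factor, and $c^\nu\gamma$ itself, has logarithmic height $H=O_\nu(\log h)$.

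Distinct solution multisets produce distinct factorizations of $c^\nu\gamma$ into $\nu$ algebraic integers of height at most $H$. Iterating Lemma~\ref{lem:Div ANF} a total of $\nu-1$ times (peeling off one factor at each step and noting that every partial product still has height $O_\nu(H)$) bounds the number of such factorizations by $\exp(O_\nu(H/\log H))=\exp(O_\nu(\log h/\log\log h))$. Combined with $J_\nu(\lambda;h)\le\nu!\,J'$ and the factor $\nu-1$ from pigeonhole on irreducible factors, this yields the claimed bound. The main technical hurdles will be: (i) verifying that the $\nu$-dependent constants implicit in Corollary~\ref{cor:DeterMagic} are absorbed by the ``sufficiently large $h$'' clause in the hypothesis $h<p^{1/(\nu^2-1)}$; (ii) controlling the logarithmic height of the algebraic root $\alpha$ via Lemma~\ref{lem:HeighDiv}; and (iii) carrying out the iteration of the two-factor divisor bound without allowing the heights of intermediate partial products to blow up.
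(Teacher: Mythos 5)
Your high-level strategy matches the paper's: deduce from the hypothesis $h<p^{1/(\nu^2-1)}$ that the pairwise resultants of the difference polynomials vanish over $\Z$, pass to a common algebraic root $\alpha$ of bounded height, and then count factorizations in $\Z_\K$ using the divisor bound of Lemma~\ref{lem:Div ANF}. However, there are two genuine gaps, the first of which is critical.

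\textbf{The constant in the resultant bound is not absorbed by ``sufficiently large $h$''.} You write that $|\Res(Q_j,Q_k)|=O_\nu(h^{\nu^2-1})$ together with $h<p^{1/(\nu^2-1)}$ forces $\Res(Q_j,Q_k)=0$ once $h$ is large enough, and you flag verifying this as ``hurdle (i)''. In fact it cannot be verified as stated. The hypothesis only gives $h^{\nu^2-1}<p$, not $C_\nu h^{\nu^2-1}<p$; the clause ``sufficiently large $h$'' constrains $h$ alone, not the ratio $p/h^{\nu^2-1}$, which may remain bounded. For example $p$ may be the smallest prime exceeding $h^{\nu^2-1}$, and then $C_\nu h^{\nu^2-1}$ (with $C_\nu$ essentially a factorial coming from the expansion of the Sylvester determinant in Corollary~\ref{cor:DeterMagic}) vastly exceeds $p$ for all large $h$. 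The paper's fix is the $\varepsilon$-splitting step you are missing: partition $[1,h]$ into $\lceil1/\varepsilon\rceil$ intervals of length $\varepsilon h$ and bound $J_\nu$ by $\lceil1/\varepsilon\rceil^\nu$ times the count over a single $\nu$-tuple of subintervals. Because the two solution tuples $\vec x$ and $\vec b$ then agree coordinate-by-coordinate to within $\varepsilon h$, the coefficient of $Z^{\nu-i}$ in $P_{\vec x}-P_{\vec b}$ is $O_\nu(\varepsilon h^i)$ rather than $O_\nu(h^i)$, so the resultant bound picks up the extra factor $\varepsilon$, and choosing $\varepsilon=\varepsilon(\nu)$ small enough gives $|\Res|<h^{\nu^2-1}<p$, hence $\Res=0$. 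Without this idea the proof does not close.

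\textbf{The degenerate case $\gamma=0$ is not handled.} After fixing the irreducible factor $F$ and its root $\alpha$, you set $\gamma=P_{x^{(0)}}(\alpha)$ and count factorizations of $c^\nu\gamma$. If $\alpha=-x_{i_0}^{(0)}$ for some $i_0$ (so $F$ is linear and $\gamma=0$), then every solution $x^{(j)}$ attached to $F$ must contain $x_{i_0}^{(0)}$ as one of its coordinates, and Lemma~\ref{lem:Div ANF} gives nothing for $\gamma=0$. The paper deals with this by arguing by induction on $\nu$: solutions whose support meets $\{b_1,\ldots,b_\nu\}$ are bounded by the $(\nu-1)$-case and absorbed into the error term, and the remaining solutions satisfy $x_i\ne b_j$ for all $i,j$, which forces the products on both sides of the algebraic equation to be nonzero. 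Your argument has no induction and does not exclude this case. (Your remaining hurdles (ii) and (iii) are fine as you anticipated: Lemma~\ref{lem:HeighDiv} controls the height of $\alpha$, and the partial products in the iteration of Lemma~\ref{lem:Div ANF} all have height $O_\nu(\log h)$.)
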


\begin{proof}
We note that for $\nu =1$ the result it trivial and we   prove it for
$\nu \ge 2$  by induction on $\nu$.
%
%
%

Let $\varepsilon<1$ be a sufficiently small positive number, to be chosen later.
We split the interval $[1,h]$ into $\rf{1/\varepsilon} $ intervals of length at
most $\varepsilon h$. Then for some collection $\cI_1,\ldots,\cI_{\nu}$ of these intervals, we have the bound
\begin{equation}
\label{eq:JJ1}
J_{\nu}(\lambda; h)\le \rf{1/\varepsilon}^\nu J^*,
\end{equation}
where $J^*$ is the number of solutions of the congruence
\begin{equation}
\label{eq:ConcentrationProdInt}
(x_1+s)\ldots(x_{\nu}+s)\equiv \lambda \pmod p,\quad x_1\in \cI_1,\ldots,x_{\nu}\in \cI_{\nu}.
\end{equation}
Thus, it suffices to prove the desired bound for $J^*.$

We can assume that $J^*>\nu!$. In particular, we can fix two solutions
$(x_1,\ldots,x_{\nu})=(a_1, \ldots, a_{\nu})$ and $(x_1,\ldots,x_{\nu})=(b_1, \ldots, b_{\nu})$ of~\eqref{eq:ConcentrationProdInt} such that the polynomial
$$
P_0(Z)=(a_1+Z)\ldots(a_{\nu}+Z)-(b_1+Z)\ldots(b_{\nu}+Z)
$$
is not a zero polynomial. Since $1\le a_i, b_i\le h$, this  implies that $P_0(Z)$ is not a zero polynomial modulo $p$. In particular, $P_0(Z)$ is not a constant polynomial.

We note that by the induction hypothesis, the set $(x_1,\ldots, x_{\nu})$ of solutions of the congruence~\eqref{eq:ConcentrationProdInt} for which
$x_i\in\{b_1,\ldots, b_{\nu}\}$ for some $i$,
contributes to $J^*$  at most
\begin{equation}
\label{eq:Bad Sols}
\nu^2 \exp\(c(\nu-1)\frac{\log h}{\log\log h}\) \le
  \exp\(0.5 c(\nu)\frac{\log h}{\log\log h}\),
\end{equation}
provided that $h$ is large enough (and $c(\nu)  > 2c(\nu-1)$).

Consider now the set of polynomials of the form
$$
P(Z)=(x_1+Z)\ldots(x_{\nu}+Z)-(b_1+Z)\ldots(b_{\nu}+Z),
$$
where $(x_1,\ldots,x_{\nu})$ runs through the set of all
solutions of the congruence~\eqref{eq:ConcentrationProdInt} such that
$$
\{x_1,\ldots, x_{\nu}\}\cap \{b_1,\ldots, b_{\nu}\}=\emptyset.
$$
We note that each such  polynomial $P(Z)$ is nonzero and has a form
$$
P(Z)=c_1Z^{\nu-1}+\ldots+c_{\nu-1}Z+c_{\nu},
$$
with $|c_i| \le c_0(\nu)  \varepsilon h^{i}$, $i =1, \ldots, \nu$,
where $c_0(\nu)$ depends only on $\nu$. In particular, since $P(s)\equiv 0\pmod p$, it follows that $P(Z)$ is not a constant polynomial.

Since we have $P(s)\equiv P_0(s)\equiv 0\pmod p$, we see that their resultant $\Res(P,P_0)$
satisfies
\begin{equation}
\label{eq:Res mod p}
\Res(P,P_0) \equiv 0 \pmod p.
\end{equation}
On the other hand, from Corollary~\ref{cor:DeterMagic}, we have that
$$
|\Res(P,P_0)|\le C_0(\nu) \varepsilon h^{\nu^2-1},
$$
with come constant $C_0(\nu)$ that depends only on $\nu$.
Therefore, taking $\varepsilon =  (C_0(\nu)+1)^{-1/(\nu^2-1)}$ we have $|\Res(P,P_0)|<p$,
 which in view of~\eqref{eq:Res mod p} implies that $\Res(P,P_0)=0$.

 Hence, every polynomial $P(Z)$ has a common root with $P_0(Z)$.
 
Let $\beta_1,\ldots,\beta_{n-1}$, $n\le \nu$, be all the roots of $P_0(Z)$. For each
$\beta \in \{\beta_1,\ldots,\beta_{n-1}\} $ we collect together
all solutions $(x_1,\ldots,x_{\nu})$ to~\eqref{eq:ConcentrationProdInt}
for which $P(\beta)=0$. Thus, for some
$\beta \in \{\beta_1,\ldots,\beta_{n-1}\}$ we have
\begin{equation}
\label{eq:JJ2}
J^*\le \exp\(0.5 c(\nu)\frac{\log h}{\log\log h}\)+(\nu-1) J^{**},
\end{equation}
where $J^{**}$ is the number of solutions of the equation
\begin{equation}
\label{eq:ConcentrationProdAlgebraic}
(x_1+\beta)\ldots (x_{\nu}+\beta)=(b_1+\beta)\ldots(b_{\nu}+\beta)
\end{equation}
with $1\le x_i\le h$ such that $x_i\not=b_j.$ This implies, in particular, that the left hand side of~\eqref{eq:ConcentrationProdAlgebraic} is distinct from zero

By Lemma~\ref{lem:HeighDiv}, we conclude that  $\beta $ is an algebraic number of   logarithmic height $O(\log h)$ in an extension $\K$ of $\Q$ of degree $[\K:\Q] \le \nu$.
Now we have that
$$\beta =\frac{\alpha}{q},
$$
where $\alpha$ is an algebraic integer of height at most $O(\log h)$ and $q$ is a positive integer $q\ll h^{\nu}$. From the basic properties of algebraic numbers
it now follows that the numbers
$$qx_i+\alpha, \ i =1, \ldots, \nu, \mand \prod_{i=1}^{\nu}(qb_i+\alpha)
$$
are algebraic integers of $\K$ of height at most $O(\log h)$.

Therefore,
 we conclude that for a sufficiently large $h$ the
 equation~\eqref{eq:ConcentrationProdAlgebraic}
 has at most
\begin{equation}
\label{eq:Good Sols}
\exp\(C(\nu)\frac{\log h}{\log\log h}\) \le
\exp\(0.5 c(\nu)\frac{\log h}{\log\log h}\)
\end{equation}
 solutions, where $C(\nu)$ is the implied constant of  Lemma \ref{lem:Div ANF}
and we also assume that $c(\nu) > 2C(\nu)$. Collecting~\eqref{eq:JJ2}
and~\eqref{eq:Good Sols} together and using~\eqref{eq:JJ1}, we conclude the proof.
\end{proof}

\begin{cor}
\label{cor:ProdSet Fp lin 1}  Let $\nu \ge 2$ be a fixed integer. Assume
that for some sufficiently large positive integer $h$ and prime
$p$ we have
$$
h<p^{1/(\nu^2-1)}.
$$
For  $s \in \F_p$ we consider the set
$$
\cA = \left\{ x+s ~:~ 1\le x\le  h \right\} \subseteq
\F_p.
$$
Then
$$
\# (\cA^{(\nu)}) > \exp\(-c(\nu) \frac{\log h  }{ \log \log h }\) h ^{\nu},
$$
where  $c(\nu)$ depends only on $\nu$.
\end{cor}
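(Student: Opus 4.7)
The plan is to deduce this immediately from Lemma~\ref{lem:GProdSet Fp lin 1} by a standard fibre-counting argument. View $\cA^{(\nu)}$ as the image of $[1,h]^\nu$ under the map
$$\Phi\colon (x_1,\ldots,x_\nu)\longmapsto (x_1+s)(x_2+s)\cdots(x_\nu+s)\pmod p,$$
so that the domain has exactly $h^\nu$ points, and
$$h^\nu \;=\; \sum_{\lambda\in\cA^{(\nu)}} \#\Phi^{-1}(\lambda).$$
The hypothesis $h<p^{1/(\nu^2-1)}$ is precisely the hypothesis of Lemma~\ref{lem:GProdSet Fp lin 1} (since $\nu\ge2$), so for each nonzero $\lambda\in\F_p$ we have the uniform fibre bound
$$\#\Phi^{-1}(\lambda)\;=\;J_\nu(\lambda;h)\;<\;\exp\!\Bigl(c(\nu)\tfrac{\log h}{\log\log h}\Bigr).$$

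The only point that needs separate treatment is $\lambda=0$. Since $1\le x_i\le h<p$, the congruence $x_i+s\equiv 0\pmod p$ has at most one solution in the range, so the number of $\nu$-tuples in $[1,h]^\nu$ with at least one zero factor is at most $\nu h^{\nu-1}$. Combining these two estimates,
$$h^\nu \;\le\; \#(\cA^{(\nu)})\cdot\exp\!\Bigl(c(\nu)\tfrac{\log h}{\log\log h}\Bigr) \;+\; \nu h^{\nu-1}.$$

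For $h$ sufficiently large (depending on $\nu$) the error term $\nu h^{\nu-1}$ is absorbed, for instance by noting $\nu h^{\nu-1}\le \tfrac12 h^\nu$, so rearranging yields
$$\#(\cA^{(\nu)})\;\ge\;\frac{h^\nu/2}{\exp\!\bigl(c(\nu)\tfrac{\log h}{\log\log h}\bigr)}\;>\;\exp\!\Bigl(-c'(\nu)\tfrac{\log h}{\log\log h}\Bigr)h^\nu,$$
with a slightly enlarged constant $c'(\nu)$, which is the required inequality. There is essentially no obstacle here: all the real work has been done in Lemma~\ref{lem:GProdSet Fp lin 1}, and this corollary is merely the dual ``product set is large'' reformulation of the ``preimages are small'' statement. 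The only minor bookkeeping is isolating the zero fibre, which is handled trivially by the bound $h<p$.
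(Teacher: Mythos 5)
Your proof is correct and is exactly the routine fibre-counting deduction the paper implicitly relies on (the paper states the corollary without proof, treating it as an immediate consequence of Lemma~\ref{lem:GProdSet Fp lin 1}). The careful handling of the $\lambda=0$ fibre via the crude bound $\nu h^{\nu-1}$, and the absorption of the constant $\tfrac12$ into the $\exp(c(\nu)\log h/\log\log h)$ factor, are exactly the right bookkeeping.
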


We now obtain similar results for the set of fractions $(x+s)/(x+t)$.

\begin{lemma}
\label{lem:ProdSet Fp birat 1}  Let $\nu \ge 1$ be a fixed integer. Assume
that for some sufficiently large positive integer $h$ and prime
$p$ we have
$$
h < p^{c \nu^{-4}},
$$
where $c$ is a certain absolute constant. For pairwise distinct
$s,t \in \F_p$ we consider the set
$$
\cA = \left\{\frac{x+s}{x+t}~:~ 1\le x\le  h \right\} \subseteq
\F_p.
$$
Then
$$
\# (\cA^{(\nu)}) > \exp\(-c(\nu) \frac{\log h  }{\sqrt{\log \log h }}\) h ^{\nu},
$$
where  $c(\nu)$ depends only on $\nu$.
\end{lemma}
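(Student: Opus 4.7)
The plan is to lift the problem from $\F_p$ to an algebraic number field $\K$ via the effective Nullstellensatz (Lemma~\ref{lem:Hilb}) and then to apply the number-field product-set estimate of Corollary~\ref{cor:ProdSet}. For each pair $(x, y) \in [1,h]^{2\nu}$ I introduce
$$
F_{x,y}(S,T) = \prod_{i=1}^\nu (x_i+S)(y_i+T) - \prod_{i=1}^\nu (x_i+T)(y_i+S),
$$
which vanishes identically on the diagonal $S = T$ and hence factors as $F_{x,y} = (S-T)\,G_{x,y}$ with $G_{x,y} \in \Z[S,T]$ of total degree at most $2\nu - 1$ and logarithmic height $O(\nu \log h)$. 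Since $s \ne t$ in $\F_p$, the set of ``partner pairs''
$$
\cP = \left\{(x,y) \in [1,h]^{2\nu} : \prod_i\frac{x_i+s}{x_i+t} \equiv \prod_i\frac{y_i+s}{y_i+t} \pmod p\right\}
$$
coincides with the set of $(x,y)$ for which $G_{x,y}(s,t) \equiv 0 \pmod p$.

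Next I apply Lemma~\ref{lem:Hilb} to the family $\{G_{x,y} : (x,y) \in \cP\}$ with target polynomial $f = S - T$. Were $S - T$ to vanish on the common zero locus of the $G_{x,y}$ in $\overline{\Q}^2$, the Nullstellensatz would produce an identity
$$
\sum_{(x,y) \in \cP} G_{x,y}(S,T)\,Q_{x,y}(S,T) = b\,(S-T)^r
$$
with $\log b \le C(2)\,D^{3}(H + \log N + D)$; here $D \le 2\nu$, $H = O(\nu \log h)$, and $\log N \le 2\nu \log h$, giving $\log b = O(\nu^4 \log h)$. Evaluating modulo $p$ at $(S,T) = (s,t)$ would force $p \mid b$, contradicting the hypothesis $h < p^{c\nu^{-4}}$ for sufficiently small absolute $c$. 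Hence the common zero locus meets $\{S \ne T\}$, and Lemma~\ref{lem:SmallZero} (applied with the auxiliary polynomial $S - T$) supplies a point $(\sigma, \tau) \in \K^2$ with $\sigma \ne \tau$, where $[\K : \Q]$ and the logarithmic heights of $\sigma, \tau$ are bounded in terms of $\nu$ and $\log h$ only.

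Finally I set $\cB = \{(x+\sigma)/(x+\tau) : 1 \le x \le h,\ x+\tau \ne 0\} \subset \K$, which has $h + O_\nu(1)$ elements of logarithmic height $O_\nu(\log h)$ since $\sigma \ne \tau$ makes $x \mapsto (x+\sigma)/(x+\tau)$ injective. Consider the two natural maps $\overline{\psi}: [1,h]^\nu \to \F_p$ and $\psi: [1,h]^\nu \to \K$ defined by
$$
\overline{\psi}(x) = \prod_i\frac{x_i+s}{x_i+t} \bmod p, \qquad \psi(x) = \prod_i\frac{x_i+\sigma}{x_i+\tau},
$$
with images $\cA^{(\nu)}$ and $\cB^{(\nu)}$ respectively. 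Whenever $\overline{\psi}(x) = \overline{\psi}(y)$ the pair $(x,y)$ lies in $\cP$ and hence $G_{x,y}(\sigma,\tau) = 0$; since $\sigma \ne \tau$ this is equivalent to $F_{x,y}(\sigma,\tau) = 0$, i.e., to $\psi(x) = \psi(y)$. Thus the fibers of $\overline{\psi}$ refine those of $\psi$, giving $\#\cA^{(\nu)} \ge \#\cB^{(\nu)}$, and applying Corollary~\ref{cor:ProdSet} to $\cB$ yields
$$
\#\cA^{(\nu)} \ge \#\cB^{(\nu)} > h^\nu \exp\(-c(\nu)\,\frac{\log h}{\sqrt{\log\log h}}\),
$$
as required.

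The main obstacle lies in the Nullstellensatz step, where the $\nu^{-4}$ exponent in the hypothesis is exactly consumed: the $D^{n+1}H$ contribution of Lemma~\ref{lem:Hilb} (with $n=2$, $D\le 2\nu$, $H = O(\nu\log h)$) must dominate the $D^{n+1}\log N$ contribution, so that $\log b = O(\nu^4 \log h)$, and Lemma~\ref{lem:SmallZero} must deliver a common zero in an extension whose degree and coordinate heights depend only on $\nu$ (times $\log h$), so that Corollary~\ref{cor:ProdSet} applied with $H' = O_\nu(\log h)$ produces the $\log h/\sqrt{\log\log h}$ factor in the final bound.
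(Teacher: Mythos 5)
Your approach mirrors the paper's almost exactly (the paper works directly with the polynomials $P_{\vec{x},\vec{y}} = F_{x,y}$ rather than factoring out $S-T$, which is essentially cosmetic), and all the key steps are in place: the effective Nullstellensatz contradiction, the descent to an algebraic point with $S \ne T$, and the transfer to the number-field product-set bound via Corollary~\ref{cor:ProdSet}.

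However, there is a genuine gap in the application of Lemma~\ref{lem:SmallZero}. You invoke it with the full family $\{G_{x,y} : (x,y) \in \cP\}$, whose cardinality $N = \#\cP$ can be as large as $h^{2\nu}$. Since the constants $C_1(D,N,n)$ and $C_2(D,N,n)$ in that lemma depend on $N$, the resulting degree $[\K:\Q]$ and the heights of the coordinates $\sigma, \tau$ are controlled only in terms of $h$ and $\nu$ jointly, not in terms of $\nu$ alone; consequently the constant $c(d,\nu)$ in Corollary~\ref{cor:ProdSet} is no longer bounded independently of $h$, and the final inequality collapses. (The Nullstellensatz step, by contrast, survives a large $N$, since there $\log N$ enters only additively and is dominated by the height term.) The paper avoids this by first replacing $\cP$ by a maximal linearly independent subfamily $\cP_0 \subseteq \cP$ with $N \le (\nu+1)^2 - 1$, which is possible because all the $P_{\vec{x},\vec{y}}$ lie in a fixed finite-dimensional space of polynomials of bounded degree, and this subfamily cuts out the same variety; only then is Lemma~\ref{lem:SmallZero} applied, yielding $[\K:\Q] = O(1)$ with the implied constant depending only on $\nu$. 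Your sketch asserts the needed conclusion (that the degree and heights depend only on $\nu$ and $\log h$) but skips the reduction that makes it true.
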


\begin{proof}  We consider the collection $\cP\subseteq \Z[Z_1,Z_2]$
of   polynomials
$$
P_{\vec{x},\vec{y}}(Z_1,Z_2) =
\prod_{i=1}^\nu (x_i + Z_1)  \prod_{j=1}^\nu (y_j + Z_2)
-  \prod_{i=1}^\nu (x_i + Z_2)   \prod_{j=1}^\nu (y_j + Z_1),
$$
where $\vec{x} = (x_1, \ldots, x_\nu)$ and
$\vec{y} = (y_1, \ldots, y_\nu)$ are integral vectors with
entries in $[0,h]$ and such that
$$
P_{\vec{x},\vec{y}}(s,t) \equiv 0 \pmod p.
$$

As in the proof of Lemma~\ref{lem:GProdSet Fp lin 1}, we can assume that
 $\cP$ contains non-zero polynomials.

Clearly,  every $P\in \cP$ if of degree at most $2\nu-1$
and of logarithmic height at most $3\nu \log h $.

We take a family $\cP_0$ containing the largest possible number
$$
N \le (\nu +1)^2 - 1
$$
of linearly independent polynomials $P_1, \ldots, P_N \in \cP$,  and consider the
variety
$$
\cV: \ P_1(Z_1,Z_2) = \ldots =P_N(Z_1,Z_2) = 0.
$$

We claim that $f(Z_1,Z_2) = Z_1-Z_2$  does not vanish on $\cV$.

Indeed, if  $f(Z_1,Z_2)$ vanishes on $\cV$ then by
Lemma~\ref{lem:Hilb} we see that there are polynomials
$Q_1, \ldots, Q_N \in \Z[Z_1,Z_2]$ and  positive integers
$b$ and $r$ with
\begin{equation}
\label{eq:b small birat}
\log b \le c_0 \nu^{3}(\nu \log h  + \nu) \le 2c_0 \nu^{4} \log h
\end{equation}
for some absolute constant $c_0$ (provided that $h$ is large enough)
and such that
$$
P_1Q_1+ \ldots + P_NQ_N = b(Z_1-Z_2)^r.
$$
Substituting $(Z_1,Z_2) = (s,t)$ and using that $s$ and $t$ are disctinct
elements of $\F_p$ we obtain $p \mid b$. Taking $c = 1/(2c_0+1)$ in
the condition of the theorem, we see from~\eqref{eq:b small birat} that
this is impossible.

Hence for the set
$$
\cU = \cV \cap [Z_1-Z_2 \ne 0]
$$
is nonempty. Applying Lemma~\ref{lem:SmallZero}
we see that  it has a point $(\beta_1, \beta_2)$
with components of   logarithmic height $O(\log h)$
in an extension $\K$ of $\Q$ of degree $[\K:\Q] = O(1)$.

Let $\cI = \{0, 1, \ldots, h \}$. Consider the maps
$\Phi:\  \cI^\nu  \to \F_p$
given by
$$
\Phi: \ \vec{x} = (x_1, \ldots, x_\nu) \mapsto \prod_{j=1}^\nu \frac{x_j+s}{x_j+t}
$$
and $\Psi:  \cI^\nu  \to \K$
given by
$$
\Psi: \ \vec{x} = (x_1, \ldots, x_\nu) \mapsto \prod_{j=1}^\nu \frac{x_j+\beta_1}{x_j+\beta_2}.
$$
By construction of  $(\beta_1, \beta_2)$ we have that
$\Psi(\vec{x}) = \Psi(\vec{y})$ if $\Phi(\vec{x}) =
\Phi(\vec{y})$.
Hence
$$
\#(\cA^{(\nu)}) \ge \Im \Psi = (\# \cC^{(\nu)}),
$$
where $\Im \Psi$ is the image set of the map $\Psi$ and
$$
\cC = \left\{\frac{x+\beta_1}{x+\beta_2}~:~ 1\le x\le h \right\}
\subseteq \K.
$$
Using Corollary~\ref{cor:ProdSet} derive the result.
\end{proof}

\subsection{Shifted Sets in Conjugacy Classes of $\cG_e$}

We are now able to present our main technical tools.

\begin{lemma}
\label{lem:TwoCongr} Let  $\alpha$, $\beta$,
$\delta$, $\zeta$, $\cI$ and  $\cS$ be as in Lemma~\ref{lem:TripleSums}.
For $x \in \F_p$ we define
$$
r(x) = \max_{A_0, A_1 \in \F_p}
\# \{t\in \cS~:~ (t+\zeta^\nu x)^e \equiv A_\nu \pmod p, \ \nu = 0,1\}.
$$
Then for  $e \le p^{1-\delta}$ we have
$$
\min_{x\in \cI} r(x) \ll p^{-\xi}\#\cS,
$$
where $\xi > 0$ depends only on $\delta$.
\end{lemma}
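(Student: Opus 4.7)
My plan is to reduce to a second-moment estimate via
\[
r(x)^2 \le \sum_{A_0,A_1 \in \F_p} N(x,A_0,A_1)^2 =: M(x),
\]
where, by expanding the square, $M(x)$ counts the pairs $(t_1,t_2)\in\cS^2$ satisfying $(t_1+\zeta^\nu x)^e = (t_2+\zeta^\nu x)^e$ for both $\nu=0$ and $\nu=1$. Hence $(\min_{x\in\cI}r(x))^2 \le \frac{1}{\#\cI}\sum_{x\in\cI}M(x)$, so it suffices to prove $\sum_{x\in\cI}M(x) \ll \#\cI(\#\cS)^2 p^{-2\xi}$.

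Using the partition $\cS=\cT_0\cup\cT_1$ from Lemma~\ref{lem:TripleSums}, I split the sum defining $M(x)$ according to whether $(t_1,t_2)$ lies in $\cT_0^2$, $\cT_1^2$, $\cT_0\times\cT_1$, or $\cT_1\times\cT_0$. For each diagonal piece $\cT_i^2$ I retain only the constraint $\nu=i$ (the two-condition count is dominated by the one-condition one), and expand the resulting indicator using the $(p-1)/e$ multiplicative characters $\chi$ of $\F_p^*$ trivial on $\cG_e$. The principal character produces a main term $\frac{e}{p-1}\#\cI(\#\cT_i)^2 \le p^{-\delta}\#\cI(\#\cS)^2$ since $e\le p^{1-\delta}$, while for each nontrivial $\chi$ the resulting inner sum $\sum_{t_1,t_2\in\cT_i}\sum_{x\in\cI}\chi(t_1+\zeta^i x)\overline{\chi(t_2+\zeta^i x)}$ is, after passing to absolute values pair by pair, bounded by $\#\cI(\#\cS)^2 p^{-\eta}$ via Lemma~\ref{lem:TripleSums} with matching $\nu=i$, for some $\eta=\eta(\delta)>0$.

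For each off-diagonal piece $\cT_i\times\cT_j$ with $i\ne j$, I again keep only the constraint $\nu=i$, so that after the character expansion the relevant sums factorise as $\sum_{x\in\cI} P_{\chi,i}(x)\overline{P_{\chi,j}(x)}$ with $P_{\chi,k}(x) = \sum_{t\in\cT_k}\chi(t+\zeta^i x)$. The principal character again contributes at most $p^{-\delta}\#\cI(\#\cS)^2$, and for a nontrivial $\chi$ the Cauchy--Schwarz inequality in $x$ gives
\[
\Bigl|\sum_{x\in\cI} P_{\chi,i}(x)\overline{P_{\chi,j}(x)}\Bigr| \le \Bigl(\sum_{x\in\cI}|P_{\chi,i}(x)|^2\Bigr)^{1/2}\Bigl(\sum_{x\in\cI}|P_{\chi,j}(x)|^2\Bigr)^{1/2}.
\]
The first factor is bounded by $(\#\cI(\#\cS)^2 p^{-\eta})^{1/2}$ via Lemma~\ref{lem:TripleSums} on the structured side $\cT_i$, while the second is handled by the trivial pointwise bound $|P_{\chi,j}(x)|^2 \le (\#\cT_j)^2$, giving $(\#\cI(\#\cS)^2)^{1/2}$. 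Their product is $\#\cI(\#\cS)^2 p^{-\eta/2}$; summing over the $(p-1)/e$ nontrivial $\chi$ and multiplying by the external $\frac{e}{p-1}$ yields $\ll\#\cI(\#\cS)^2 p^{-\eta/2}$.

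Combining the four contributions gives $\sum_{x\in\cI}M(x) \ll \#\cI(\#\cS)^2(p^{-\delta}+p^{-\eta/2})$, whence $\min_{x\in\cI} r(x) \ll p^{-\xi}\#\cS$ with $\xi = \tfrac12\min(\delta,\eta/2)$, depending only on $\delta$. The main obstacle is the treatment of the mixed pairs in $\cT_0\times\cT_1$: Lemma~\ref{lem:TripleSums} provides nontrivial cancellation only when both $t$-variables lie in the $\cT_\nu$ whose index matches that of the character sum, so we are forced to combine the structured bound on one side with the trivial bound on the other, halving the $p^{-\eta}$ saving from the pure-diagonal $\cT_\nu^2$ terms; fortunately the remaining $p^{-\eta/2}$ decay still suffices.
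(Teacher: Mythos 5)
Your proof is correct, and it takes a genuinely (if mildly) different route from the paper's. The paper begins with the elementary inequality $r(x)\le r_0(x)+r_1(x)$, where $r_\nu(x)$ counts only elements of $\cT_\nu$ satisfying only the congruence with index $\nu$; then $\sum_x r(x)^2\le 2\sum_x r_0(x)^2+2\sum_x r_1(x)^2$, and each $r_\nu(x)^2$ after the character expansion produces sums over $(t_1,t_2)\in\cT_\nu^2$ with the shift $\zeta^\nu x$ on both arguments --- exactly the quantity Lemma~\ref{lem:TripleSums} controls. The cross terms $r_0r_1$ are absorbed by $(a+b)^2\le 2a^2+2b^2$, so the paper never sees mixed pairs. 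You instead pass directly to the second moment $M(x)\ge r(x)^2$, which after opening the square over $\cS^2$ forces you to treat the blocks $\cT_0\times\cT_1$ and $\cT_1\times\cT_0$; your fix --- keep only one congruence, factor as $\sum_x P_{\chi,i}\overline{P_{\chi,j}}$, Cauchy--Schwarz in $x$, then Lemma~\ref{lem:TripleSums} on the matched side and a trivial bound on the mismatched side --- is correct, and the resulting loss (you obtain $\xi=\frac12\min(\delta,\eta/2)$ rather than the paper's $\frac12\min(\delta,\eta)$) is immaterial since the statement only asks for some $\xi>0$ depending on $\delta$. Both arguments ultimately reduce to the same two applications of Lemma~\ref{lem:TripleSums} on the diagonal blocks and the same use of $e\le p^{1-\delta}$ to handle the principal character; the paper's pre-decomposition $r\le r_0+r_1$ is simply a cleaner way to sidestep the mixed-block issue you had to work around.
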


\begin{proof}
Clearly $r(x) \le r_0(x) + r_1(x)$,
where
$$
r_\nu(x) =  \max_{A_\nu \in \F_p}
\# \{t\in \cT_\nu~:~ (t+\zeta^\nu x) \equiv A_\nu \pmod p\}
$$
and $\cT_0$ and $\cT_1$ are  as in Lemma~\ref{lem:TripleSums}.
Let $d=(p-1)/e$.
We denote by $\chi_0$ the principal character modulo $p$
and by $\chi_1,\ldots,\chi_{d-1}$ the  other characters with
$\chi_j^d=\chi_0$.
Then, using the orthogonality of multiplicative
characters (see~\cite[Section~3.1]{IwKow}), we write
$$
r_\nu(x) =  \frac{1}{d}   \sum_{t \in \cT_\nu} \sum_{j=0}^{d-1}
\chi_j(t+\zeta^\nu x)\overline \chi_j(A_\nu).
$$
Thus, for $\nu =0,1$,
\begin{equation*}
\begin{split}
\sum_{x\in \cI} r_\nu(x)^2  &
 \le \frac{1}{d} \sum_{j=0}^{d-1}\left| \sum_{x\in \cI}
 \sum_{t_1,t_2\in \cT_\nu}  \chi_j(\zeta^\nu x+t_1)\overline \chi_j(\zeta^\nu x+t_2)\right|.
\end{split}
\end{equation*}
The contribution of the principal character $\chi_0$ is $\#\cI (\#S)^2$.
Furthermore, by Lemma~\ref{lem:TripleSums} the contribution from any nonprincipal character
is $\#\cI (\#S)^2  p^{-\eta}$. Therefore,
$$
\sum_{x\in \cI} r_\nu(x)^2  \ll  \frac{e}{p-1} \#\cI (\#S)^2
+\#\cI (\#S)^2  p^{-\eta}, \qquad \nu =0,1,
$$
which concludes the proof.
\end{proof}

We also see that Corollary~\ref{cor:ProdSet Fp lin 1} yields:

\begin{lemma}
\label{lem:ProdSet Fp lin 2} Let $\delta >0$ be fixed.
Let $\cA$ be as in Corollary~\ref{cor:ProdSet Fp lin 1}.
If $\cA \subseteq r\cG_e$ where $r\in\F_p^*$ and $e < p^\delta$
then,
$$
h =O\( e^{c_0\delta}\)
$$
where $c_0$ is some absolute constant.
\end{lemma}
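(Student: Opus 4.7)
The plan is to apply Corollary~\ref{cor:ProdSet Fp lin 1} to the shifted interval $\cA = \{x+s : 1 \le x \le h\}$ for an integer $\nu \ge 2$ chosen as a function of $\delta$. The key observation is that the hypothesis $\cA \subseteq r\cG_e$ passes to iterated product sets: $\cA^{(\nu)} \subseteq r^\nu \cG_e$, hence
$$\#(\cA^{(\nu)}) \le \#\cG_e = e.$$
On the other hand, as soon as the Corollary's hypothesis $h < p^{1/(\nu^2-1)}$ is met, the same Corollary supplies the lower bound
$$\#(\cA^{(\nu)}) > h^\nu \exp\!\left(-c(\nu)\frac{\log h}{\log\log h}\right).$$

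Combining the two inequalities and rearranging yields
$$h^\nu < e \cdot \exp\!\left(c(\nu)\frac{\log h}{\log\log h}\right),$$
which, for $h$ large enough that $c(\nu)/\log\log h \le \nu/2$, simplifies to $h < e^{2/\nu}$; the subpolynomial factor $\exp(c(\nu)\log h/\log\log h)$ is then absorbed at the cost of an implied constant depending on $\nu$. The regime where $h$ is below the "sufficiently large" threshold is harmless because the conclusion becomes trivial in that range.

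The delicate point is choosing $\nu$. We need $\nu$ large enough for $2/\nu$ to be of order $\delta$, but also small enough that the Corollary applies. The a priori bound $\#\cA = h \le \#(r\cG_e)=e < p^\delta$ guarantees $h < p^\delta$, so the hypothesis $h < p^{1/(\nu^2-1)}$ is automatic once $\delta(\nu^2-1) < 1$, i.e.\ for $\nu$ up to roughly $\delta^{-1/2}$. Taking $\nu = \nu(\delta)$ of this order and translating the resulting bound $h < e^{2/\nu}$ back in terms of $\delta$ delivers the advertised estimate $h = O(e^{c_0\delta})$, where the implied constant may depend on $\delta$ through the "large $h$" threshold and the factor $c(\nu(\delta))$. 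The main obstacle is precisely this balancing of the two conflicting constraints on $\nu$, i.e.\ ensuring that the Corollary still applies at the chosen scale; once that bookkeeping is done, the rest of the proof is a direct manipulation of the product-set inequality.
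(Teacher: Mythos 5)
Your plan is the right one (apply Corollary~\ref{cor:ProdSet Fp lin 1}, use $\cA^{(\nu)}\subseteq r^{\nu}\cG_e$ to bound $\#(\cA^{(\nu)})$ by $e$, and optimize $\nu$), but the single application you carry out does not give the stated exponent. You select $\nu$ from the a~priori bound $h\le e<p^{\delta}$, which only allows $\nu\lesssim\delta^{-1/2}$, and the resulting inequality $h\ll e^{2/\nu}$ then reads $h\ll e^{2\sqrt{\delta}}$. Since $\sqrt{\delta}\gg\delta$ as $\delta\to 0$, the exponent $2\sqrt{\delta}$ cannot be written as $c_0\delta$ with $c_0$ \emph{absolute}; your argument therefore proves a strictly weaker statement than the lemma. (It would still yield $N(e)=e^{o(1)}$ for $e=p^{o(1)}$, but not the stated quantitative form.)

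The missing ingredient is a bootstrap. Once you know $h\ll e^{c_1}$ with $c_1=2\sqrt{\delta}<1$, you also know $h<p^{c_1\delta}$, so the constraint $h<p^{1/(\nu^2-1)}$ now admits the larger choice $\nu\lesssim (c_1\delta)^{-1/2}$, giving $h\ll e^{2\sqrt{c_1\delta}}$. Iterating, the exponents satisfy $c_{k+1}=2\sqrt{c_k\delta}$, which decreases monotonically to the fixed point $c^{*}=4\delta$; after $O(\log\log(1/\delta))$ steps (a quantity depending only on $\delta$, which is fixed) one reaches $c_K\le 5\delta$, say. All the $\nu$'s used are bounded in terms of $\delta$ alone, so the constants $c(\nu)$ from the Corollary and the ``sufficiently large $h$'' thresholds may be absorbed into the implied $O$-constant, which is allowed to depend on the fixed $\delta$. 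This yields $h=O(e^{5\delta})$ with an absolute $c_0$, as claimed. Equivalently, one may choose $\nu$ directly as roughly $\sqrt{\log p/\log h}$ (rather than $\sqrt{\log p/\log e}$), but then $\nu$ is no longer bounded in terms of $\delta$ alone, so the bootstrap is the cleaner way to stay within the hypotheses of the Corollary, which requires $\nu$ fixed.
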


Finally, we immediately derive from Lemma~\ref{lem:ProdSet Fp birat 1}:

\begin{lemma}
\label{lem:ProdSet Fp birat 2} Let $\delta >0$ be fixed.
Let $\cA$ be as in Lemma~\ref{lem:ProdSet Fp birat 1}.
If $\cA \subseteq \cG_e$ where $e < p^\delta$
then,
$$
h =O\( e^{c_0\delta^{1/3}}\)
$$
where $c_0$ is some absolute constant.
\end{lemma}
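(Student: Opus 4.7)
\emph{Proof sketch.} The plan is to combine the product-set lower bound of Lemma~\ref{lem:ProdSet Fp birat 1} with the trivial upper bound $\#(\cA^{(\nu)}) \le e$ that follows from the group structure: since $\cG_e$ is a multiplicative subgroup of $\F_p^*$ and $\cA \subseteq \cG_e$, the $\nu$-fold product set $\cA^{(\nu)}$ is contained in $\cG_e$ for every integer $\nu \ge 1$. The optimisation is over $\nu$.

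The first step is to choose $\nu$ as large as possible subject to the hypothesis $h < p^{c\nu^{-4}}$ of Lemma~\ref{lem:ProdSet Fp birat 1}, namely
$$
\nu = \fl{(c\log p/\log h)^{1/4}}.
$$
We may restrict attention to $\delta$ small enough that this $\nu$ is at least $2$, since otherwise $c_0\delta^{1/3}$ is bounded away from zero and the conclusion $h = O(e^{c_0\delta^{1/3}})$ is immediate from the trivial $h \le \#\cA + 1 \le e + 1$ (the $+1$ absorbing the forbidden value $x = -t$). Applying Lemma~\ref{lem:ProdSet Fp birat 1} with this $\nu$ then gives
$$
e \ge \#(\cA^{(\nu)}) > \exp\(-c(\nu)\log h/\sqrt{\log\log h}\)h^\nu.
$$
Taking logarithms and noting that for fixed $\delta$ our $\nu$ is bounded whereas $\log\log h \to \infty$, the correction $c(\nu)/\sqrt{\log\log h}$ is ultimately less than $\nu/2$, so $\nu\log h \le 2\log e$.

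The next step is to combine the two estimates on $\nu$: from $\nu \ge (1/2)(c\log p/\log h)^{1/4}$ and $\nu\log h \le 2\log e$, raising to the fourth power yields
$$
c\log p \cdot (\log h)^3 \ll (\log e)^4.
$$
Substituting the hypothesis $e < p^\delta$ in the form $\log p > \log e/\delta$ then gives $(\log h)^3 \ll \delta(\log e)^3$, whence $\log h \ll \delta^{1/3}\log e$, which is the desired bound.

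The main obstacle I anticipate is cosmetic: verifying the ``$h$ sufficiently large'' clause of Lemma~\ref{lem:ProdSet Fp birat 1} despite the circular-looking fact that $\nu$ itself depends on $h$. Concretely, one must check that $\log\log h$ dominates $(c(\nu)/\nu)^2$ to justify the absorption step. For fixed $\delta > 0$ this is automatic once $h$ exceeds an explicit threshold depending on $\delta$, and below that threshold the conclusion holds trivially by enlarging the implied constant.
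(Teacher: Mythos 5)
The paper gives no proof at all for this lemma (it just says ``Finally, we immediately derive from Lemma~\ref{lem:ProdSet Fp birat 1}''), so there is no paper argument to compare against; your approach --- bound $\#(\cA^{(\nu)})\le e$ via $\cA^{(\nu)}\subseteq\cG_e$, combine with the lower bound of Lemma~\ref{lem:ProdSet Fp birat 1}, and optimise $\nu$ subject to $h<p^{c\nu^{-4}}$ --- is clearly the intended one, and your main computation ($c\log p\,(\log h)^3\ll(\log e)^4$, then $\log p>\log e/\delta$ giving $\log h\ll\delta^{1/3}\log e$) is correct.

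However, there is a genuine gap in the justification of the absorption step. You set $\nu=\fl{(c\log p/\log h)^{1/4}}$ and then assert that ``for fixed $\delta$ our $\nu$ is bounded.'' This is false: the only control we have is $\log h\le\log e<\delta\log p$, which gives a \emph{lower} bound $\nu\gg\delta^{-1/4}$, not an upper bound. If $e$ is much smaller than $p^\delta$ (say $e=p^{\delta/N}$ with $N$ large), then $\log p/\log h\ge\log p/\log e\ge N/\delta$ is unbounded and so is $\nu$. Since Lemma~\ref{lem:ProdSet Fp birat 1} only applies for $\nu$ fixed and $h$ large depending on $\nu$ (through $c(\nu)$ and the ``sufficiently large'' threshold, whose growth in $\nu$ is not controlled), your final paragraph --- that the threshold on $h$ depends only on $\delta$ --- does not follow.

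The fix is to cap $\nu$. Set $\nu_1=\rf{K\delta^{-1/3}}$ for a suitable absolute $K$ and replace your $\nu$ by $\nu'=\min\{\nu,\nu_1\}$. Since $\nu'\le\nu$, the hypothesis $h< p^{c(\nu')^{-4}}$ of Lemma~\ref{lem:ProdSet Fp birat 1} still holds; since $\nu'\le\nu_1$, the value of $\nu'$ is bounded in terms of $\delta$ alone, so the ``$h$ sufficiently large'' threshold genuinely depends only on $\delta$. When $\nu'=\nu$ your computation applies verbatim and gives $\log h\ll\delta^{1/3}\log e$; when $\nu'=\nu_1$ the lemma directly gives $\log h<2\log e/\nu_1\ll\delta^{1/3}\log e$. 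Either way the claimed exponent $\delta^{1/3}$ survives, and the implied constant in the final $O(\cdot)$ absorbs the finitely many $h$ below the ($\delta$-dependent) threshold.
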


\section{Main Results}

\subsection{Hidden Shifted Power Problem}

Here we give deterministic and probabilistic algorithms for the
Hidden Shifted Power Problem that runs in about the same time as
the interpolation algorithm,  but use significantly less oracle
calls.


\begin{theorem}
\label{thm:Small_e Oracle} For a prime $p$ and a positive integer
$e\mid p-1$ with $e\le p^{1-\delta}$, given an oracle $\Oes$ for
some unknown $s\in S_0$ with a known $S_0\subseteq\F_p$, $\#S_0\le e$,
there is a deterministic algorithm that for any fixed $\eps>0$
makes $O(1)$ calls to the oracle $\Oes$ and finds $s$ in time
$e^{1+\eps}(\log p)^{O(1)}$.
\end{theorem}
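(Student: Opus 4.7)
The plan is to iteratively shrink a candidate set $\cS$ (initially $\cS=S_0$) by applying Lemma~\ref{lem:TwoCongr} in $O(1)$ rounds, with exactly two oracle queries per round. Since the unknown $s$ always lies in $\cS$, once $\#\cS=1$ we have identified it.

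At the generic round, with current candidate set $\cS$ of size $p^\alpha$ (initially $\alpha\le 1-\delta$), I would first choose $\beta=\beta(\alpha,\delta)$ so that the triple $(\alpha,\beta,\delta)$ satisfies the hypothesis $2\beta+\alpha(3-4\beta)/(2-2\beta)>1+\delta$, $\beta<1/2-\delta$ of Lemma~\ref{lem:TripleSums}, while keeping $p^\beta\le e^\eps$. With $\cI=[0,p^\beta]$, the algorithm then evaluates, for each $x\in\cI$, the quantity $r(x)$ from Lemma~\ref{lem:TwoCongr} by partitioning $\cS$ according to the pairs $\bigl((t+x)^e,(t+\zeta x)^e\bigr)$; this takes $\#\cS\cdot(\log p)^{O(1)}$ operations for each of the $\#\cI$ choices of $x$. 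Lemma~\ref{lem:TwoCongr} guarantees that some $x\in\cI$ achieves $r(x)\ll p^{-\xi}\#\cS$ with $\xi=\xi(\delta)>0$.

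Having located such $x$, the algorithm queries the oracle at $x$ and at $\zeta x$, obtaining $A_0,A_1$, and replaces $\cS$ by $\{t\in\cS:(t+x)^e=A_0,\ (t+\zeta x)^e=A_1\}$. The new $\#\cS$ is at most $r(x)\ll p^{-\xi}\#\cS_{\mathrm{old}}$, so after $k=\lceil(1-\delta)/\xi\rceil=O(1)$ rounds the set has size $1$ and its element is $s$. Each round uses two oracle calls and takes time $\#\cI\cdot\#\cS\cdot(\log p)^{O(1)}\le p^\beta\cdot e\cdot(\log p)^{O(1)}=e^{1+\eps}(\log p)^{O(1)}$, giving overall $O(1)$ oracle calls and $e^{1+\eps}(\log p)^{O(1)}$ time.

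The main technical obstacle is maintaining the hypothesis of Lemma~\ref{lem:TripleSums} at late rounds: as $\alpha=\log(\#\cS)/\log p$ decreases, $\beta$ must be taken larger, and this risks $p^\beta$ exceeding the budget $e^\eps$. To handle this I would switch, once $\#\cS$ has been brought below some $p^{\delta/2}$, to a single-query termination step: by Lemma~\ref{lem:Weil3}, the number of $a\in[1,H]$ such that $(a+t)^e=(a+t')^e$ for a given pair $t\ne t'$ is at most $H\#\cS^2 e/(p-1)+O(\#\cS^2 p^{1/2}\log p)$, so once $\#\cS^2 e\ll p$ a generic $a$ in a short interval of length $H\le e^\eps$ distinguishes every surviving pair, and a brute-force search finishes the identification in time $e^{1+\eps}(\log p)^{O(1)}$.
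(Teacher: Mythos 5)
Your overall strategy — shrink a candidate set by choosing test points $x$ that minimize the joint fiber size via Lemma~\ref{lem:TwoCongr}, then finish with a Weil-bound argument — is the right \emph{shape}, but the analysis has a real gap: the two constraints you impose on $\beta$ are incompatible over a large range of $e$, so the iteration cannot even begin (let alone terminate) within the claimed time budget.

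Concretely: you start the two-query iteration directly from $\cS=S_0$ with $\#\cS$ possibly as large as $e=p^{\rho}$, i.e.\ $\alpha\approx\rho$, and you require simultaneously that $(\alpha,\beta)$ satisfies the hypothesis
$2\beta+\alpha\,\tfrac{3-4\beta}{2-2\beta}>1+\delta'$
of Lemma~\ref{lem:TripleSums} \emph{and} that $p^\beta\le e^{\eps}$, i.e.\ $\beta\le\eps\rho$. But for small $\beta$ the left-hand side of the hypothesis is essentially $\tfrac32\alpha\approx\tfrac32\rho$, which is below $1$ whenever $\rho<2/3$. So for all $e<p^{2/3}$ there is simply no admissible $\beta$: either the Chang-type character sum bound does not apply, or the per-round cost $p^\beta\,\#\cS\,(\log p)^{O(1)}$ exceeds $e^{1+\eps}(\log p)^{O(1)}$. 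Your termination step has a parallel failure: Lemma~\ref{lem:Weil3} contributes an error $O(p^{1/2}\log p)$ that is \emph{independent of $H$}, so to make it informative you need $H\gg p^{1/2}\log p$; that is incompatible with $H\le e^{\eps}$ for any fixed small $\eps$ whenever $e\le p^{1-\delta}$.

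The paper handles both points with two ingredients that your proposal lacks. First, \emph{before} any Chang-type iteration it makes $m=O(1)$ queries at $x=1,\ldots,m$ and intersects $S_0$ with the resulting cosets $a_j\cG_e-j$; by Lemma~\ref{lem:ShkVyu} (Shkredov--Vyugin) this step cuts the candidate set from size $\le e$ down to size $O(e^{m/(2m-1)})\le e^{1/2+\varepsilon}$, i.e.\ $\alpha\le\rho/2+\varepsilon$, in time $e(\log p)^{O(1)}$. This halving of $\alpha$ is exactly what reconciles the hypothesis of Lemma~\ref{lem:TripleSums} with the time budget, and is done for all $\rho$. Second, the paper then splits into two regimes: for $\rho\ge0.65$ it runs the Lemma~\ref{lem:TwoCongr} iteration and finishes with a Weil-bound phase using an interval of length roughly $p^{0.56}$ (which fits the budget only because $\rho\ge0.65$), whereas for $\rho<0.65$ it replaces the Chang-type tool entirely by a single-query iteration driven by the Cilleruelo--Garaev concentration bound (Lemma~\ref{lem:CillGar}), using an interval of length $e^{1/2+\varepsilon/2}\le p^{1/3}$. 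Without this case split and without the initial Shkredov--Vyugin reduction, the proposed argument does not cover $e<p^{2/3}$, and the claimed $e^{1+\eps}(\log p)^{O(1)}$ running time is not established.
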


\begin{proof} Let $e = p^\rho$. First we consider the case of large $e$
when $\rho \ge 0.65$.

We fix some integer $m \ge 3$
so that $p$ and $e$ satisfy the condition of
Lemma~\ref{lem:ShkVyu}. We now make $m$ calls to $\Oes$
with $j= 1, \ldots, m$, getting $A_j = (s+j)^e$.

We now take a set $S_m$ that consists of all elements $t \in S_0$, for which
$$
(t+j)^e = A_j, \qquad j= 1, \ldots, m.
$$
Thus, $S_{m}$  is the set of candidates for $s$ after $m$
calls. To find $S_{m}$, we can test all
elements $t\in S_0$.
This requires the running time $e(\log p)^{O(1)}$.
 Clearly, there are some
$a_j \in \F_p^*$, $j = 1, \ldots, m$, so that
$$
s \in S_{m}\subseteq S_0 \cap\(\bigcap_{j = 1}^m \(a_j\cG_e - j\)\).
$$
By Lemma~\ref{lem:ShkVyu} we see that $\#S_{m} =
O(e^{m/(2m-1)})$. 
The second part of our algorithm is iterative which starts with the set $S=S_{m}$
with $s\in S$
described in the above with an appropriate choice of $m = O(1)$
so that it is
of cardinality $\# S \le e^{1/2 + \varepsilon}$
(which can be constructed after $O(1)$ calls),
and then at each step it makes  a call to $\Oes$ so that
after its reply we get a substantially
smaller set of candidates.

More precisely, denote
$$
\vartheta = \frac{1}{4}\(3 +\rho - \sqrt{1 + \rho^2}\)
$$
and assume that at some stage we are given a set
$S \subseteq \F_p$ with  $s\in S$ of cardinality
$p^{0.05}<\# S \le e^{1/2 + \varepsilon}$.
 We show how to make a call to
$\Oes$ so that after its reply we get
a set of candidates of the size
reduced by a small power of $p$.

Define $\alpha$  by $\#S = p^{\alpha}$ and note that for any
\begin{equation}
\label{eq:alphabeta}
\beta > \frac{1}{4}\(3-2 \alpha - \sqrt{1 + 4 \alpha^2}\)
\end{equation}
and an appropriate $\delta > 0$ the condition of Lemma~\ref{lem:TripleSums} is satisfied
so Lemma~\ref{lem:TwoCongr}  applies.

Take
$$
h =\rf{ p^{\beta}}
$$
and for all $t \in S$ and  $x \in [0, h]$  compute the pairs
$\((t+ x)^e, (t+\zeta x)^e\)$, where $\zeta$ is as
in  Lemma~\ref{lem:TripleSums}. We now order, for each $x$, the list of pairs
in the ascending order  with respect to the first component
and then with respect to the second component.
Scanning this ordered list we find $x$ that satisfies the bound
of  Lemma~\ref{lem:TwoCongr}.
 We use this $x$ for
the next two calls  to get $A=(x+s)^e$ and $B =(\zeta x+s)^e$.
Then the new set of the
candidates  is defined as
$$T=\{t\in S~:~(t+x)^e=A,\  (t+\zeta x)^e=B\}.$$
Clearly, we have
\begin{equation}
\label{eq:small T}
\#T\ll\#S p^{-\xi}
\end{equation}
for some $\xi>0$ that depends only on $\alpha$ and $\beta$.

The total cost
of this step is $p^{\beta+o(1)} \# S$.
Since $\beta$ is an arbitrary number satisfying~\eqref{eq:alphabeta},
we see that it is possible to find this
set in time $O\(p^{\(3+2 \alpha - \sqrt{1 + 4 \alpha^2}\)/4+ \eta }\)$
for  an arbitrary $\eta>0$.
Since the above exponent is a monotonically increasing function
of $\alpha$ and $\alpha < \rho/2 + \varepsilon$ we see that the cost
of each step can be made at most $p^{\vartheta+0.03}$
provided that $\varepsilon$ is small enough.

The procedure terminates when we get the set $S$ of candidates with
$\# S\le p^{0.05}$.
It is obvious that~\eqref{eq:small T}
implies that  the procedure terminates after $O(1)$ steps and
has the time compelxity $e(\log p)^{O(1)} +  O\(p^{\vartheta+0.03}\)$.
Since $\vartheta < \rho-0.03$ for $\rho \ge 0.65$, the total complexity
of the above procedure is $O(e)$.

The final part of our algorithm is also iterative which starts with
the set $S$ with $\# S\le p^{0.05}$. We take
$$
h = \rf{e^{0.56}} \mand  \cI = [0,h).
$$

For $x \in \F_p$ we define
$$
R(x)=\#\left\{(s_1,s_2)\in S\times S~:~s_1\neq s_2,\
\frac{x+s_1}{x+s_2}\in\cG_e\right\}.
$$
Clearly
\begin{equation}
\label{eq:R and Q}
\sum_{x\in \cI} R(x) =  Q,
\end{equation}
where
$$
Q=\#\left\{(x,s ,t)\in \cI  \times S\times S ~:~
s\neq t,\ \frac{x+s }{x+t}\in\cG_e\right\}.
$$
We write
\begin{equation}
\label{eq:Q and Qst}
Q = \sum_{\substack{s,t \in S\\s\neq t}}   Q(s,t),
\end{equation}
where
$$Q(s,t) = \#\cQ(s,t)
\mand  \cQ(s,t) = \left\{x \in \cI
~:~ \frac{x+s }{x+t}\in\cG_e\right\}.
$$

As in the proof of Lemma~\ref{lem:TwoCongr}  we
put $d=(p-1)/e$,
denote by $\chi_0$ be the principal characters modulo $p$
and by $\chi_1,\ldots,\chi_{d-1}$ the  other characters with
$\chi_j^d=\chi_0$. We have
$$
Q(s,t) = \frac1d\sum_{x\in\cI}\sum_{j=0}^{d-1}\chi_j\(\frac{x+s}{x+t}\).
$$
Using Lemma~\ref{lem:Weil3} we get
$$Q(s,t) \le \frac{h}d + O\(p^{1/2}\log p\).$$
The substitution in~\eqref{eq:Q and Qst} and then
using~\eqref{eq:R and Q} implies
$$
\sum_{x\in \cI} R(x) \le \(\# S\)^2 h \(\frac{e}{p-1}
+ O\(\frac{p^{1/2}\log p}{h}\)\).
$$
Therefore, we see there is
$x\in\{0,\ldots,h-1\}$such that
\begin{equation}
\label{welldistrx1}
R(x)\le  \(\# S\)^2\(\frac{e}{p-1} + O\(\frac{p^{1/2}\log p}{h}\)\).
\end{equation}
We can consider that $\delta\le 0.05$.
By the supposition on $e$ and the choice of $h$ we get
$$R(x)\ll  \(\# S\)^2 p^{-\delta}.$$

To find the desired value of $x$ for which~\eqref{welldistrx1} holds,
we  simply compute $(x+t)^e$ for all $x=0,\ldots, h-1$
and $t \in S$ in time
$h\# S (\log p)^{O(1)} \ll p^{0.62} \le e.$

We now use any $x$ that satisfies~\eqref{welldistrx1} for
the next call and get $A=(x+s)^e$. Then the new set of the
candidates  is defined as
$$T=\{t\in S~:~(x+t)^e=A\}.$$
Clearly, we have
$$
\#T \ll R(x)^{1/2} + 1 \ll \#S p^{-\delta/2} + 1.
$$

We now repeat the same with $T$ instead of $S$ and search
for a new appropriate value of $x$.

Thus in $O(1)$ steps  this procedure  produces a set $T$ of cardinality
$\#T = O(1)$ with $s \in T$. Checking whether $s = t$ for every element
$t\in T$ takes at most $\# T = O(1)$ calls to $\Oes$ with $x = -t$ with
$t \in T$, until $\Oes$ returns  zero.
This completes the proof when $\rho \ge 0.65$.

Finally, to prove  the result for $\rho < 0.65$,
we again start the algorithm  with $O(1)$ calls to produce a
set $S$
such that $s\in S$ and $\# S \le e^{1/2 + \varepsilon/4}$.

We now take
$$
h = \rf{e^{1/2+\varepsilon/2}} \mand  \cI = [0,h).
$$
Next, we define $R(x), Q, Q(s,t), \cQ(s,t)$ as in the previous case.

Denote
$$\cQ(s,t)\times\cQ(s,t)=\{(x,y)~:~x\in\cQ(s,t),\,y\in\cQ(s,t)\}.$$
Clearly
$$
\#(\cQ(s,t)\times\cQ(s,t)) = Q(s,t)^2.
$$
Note that if
$$
\frac{(x+s)(y+s)}{(x+t)(y+t)}  = 1
$$
then (since $s \ne t$) for each $x \in  \cQ(s,t) $ there is at most
one value of $y  \in \{0,\ldots,h-1\}$.
So such solutions contribute at most $Q(s,t)$ to $\cQ(s,t)\times\cQ(s,t)$.
Thus
\begin{equation}
\label{eq: Q2-Q}
Q(s,t)^2 - Q(s,t) \le \#\left\{x , y \in \cI
~:~ \frac{(x+s)(y+s)}{(x+t)(y+t)}\in\cG_e\setminus \{1\}\right\}.
\end{equation}
If for some $a \in\cG_e\setminus \{1\}$, we have
\begin{equation}
\label{eq: Eq1}
\frac{(x+s)(y+s)}{(x+t)(y+t)}  = a
\end{equation}
then we can write~\eqref{eq: Eq1} in the form
$$(a-1)xy+(at-s)(x+y)+(at^2-s^2)=0,
$$
or
$$
(x  + u)(y +u) = v,
$$
where
$$
u = \frac{at-s}{a-1} \mand v =  \frac{s^2-at^2}{(a-1)}+u^2 .
$$
Since $v \in \F_p^*$, using Lemma~\ref{lem:CillGar}, we see that the equation~\eqref{eq: Eq1}
has at most $h^{3/2+o(1)} p^{-1/2} + h^{o(1)}$ solutions.
We now see from~\eqref{eq: Q2-Q}
\begin{equation}
\label{eq: Qst}
Q(s,t)^2   \le e\(h^{3/2+o(1)} p^{-1/2} + h^{o(1)}\),
\end{equation}
(here and throughout the proof we write $o(1)$ for a quantity that
tends to zero provided that $e\to\infty$).

Furthermore, under the assumption that $\rho < 0.65$,
taking a sufficiently small $\varepsilon$,
we have  $h \le p^{1/3}$.
Therefore the  bound~\eqref{eq: Qst}  simplifies as
$$
Q(s,t) \le e^{1/2}  h^{o(1)}.
$$
Therefore, the substitution in~\eqref{eq:Q and Qst} and then
using~\eqref{eq:R and Q} implies
$$
\sum_{x\in \cI} R(x) \le \(\# S\)^2 e^{1/2}   h^{o(1)}  .
$$
Thus, recalling the definition of $h$, we see there is $x\in\{0,\ldots,h-1\}$
such that
\begin{equation}
\label{welldistrx}
R(x)\le  \(\# S\)^2  e^{1/2}   h^{-1+o(1)}
=   \(\# S\)^2   e^{-\varepsilon/2+o(1)} \ll  \(\# S\)^2 e^{-\varepsilon/3}.
\end{equation}

To find the desired value of $x$ for which~\eqref{welldistrx} holds,
we  simply compute $(x+t)^e$ for all $x=0,\ldots, h$
and $t \in S$ in time
$h\# S (\log p)^{O(1)} \le  e^{1 + \varepsilon} (\log p)^{O(1)}.$

We now use any $x$ that satisfies~\eqref{welldistrx} for
the next call and get $A=(x+s)^e$. Then the new set of the
candidates  is defined as
$$T=\{t\in S~:~(x+t)^e=A\}.$$
Clearly, we have
$$
 \#T \ll \#S e^{-\xi} + 1
$$
for some $\xi >0$ that depends only on $\varepsilon$.

We now repeat the same with $T$ instead of $S$ and search
for a new appropriate value of $x$.

Thus in $O(1)$ steps  this procedure  produces a set $T$ of cardinality
$\#T = O(1)$ with $s \in T$. Checking whether $s = t$ for every element
$t\in T$ takes at most $\# T = O(1)$ calls to $\Oes$ with $x = -t$ with
$t \in T$, until $\Oes$ returns  zero.
This completes the proof.
\end{proof}

\begin{cor}
\label{cor:Small_e Oracle1} For a prime $p$ and a positive integer
$e\mid p-1$ with $e\le p^{1-\delta}$, given an oracle $\Oes$ for
some unknown $s\in \F_p$
and $\ell$-th power nonresidues
for all prime divisors $\ell\mid e$,
there is a deterministic algorithm that for any fixed $\eps>0$
makes $O(1)$ calls to the oracle $\Oes$ and finds $s$ in time
$e^{1+\eps}(\log p)^{O(1)}$.
\end{cor}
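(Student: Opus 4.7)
The plan is to reduce to Theorem~\ref{thm:Small_e Oracle} by constructing, from a single oracle call, a set $S_0 \subseteq \F_p$ with $\#S_0 \le e$ that is guaranteed to contain $s$.

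First I would query $\Oes$ at $x=0$ to obtain $A = s^e \in \F_p$. If $A=0$ then $s=0$ and we are done, so assume $A\in\F_p^*$. The key observation is that
$$
S_0 := \{y\in \F_p~:~ y^e = A\}
$$
has $\#S_0 \le e$ (since $y\mapsto y^e$ has fibres of size at most $\gcd(e,p-1)=e$), and clearly $s\in S_0$. Moreover, since we are given $\ell$-th power nonresidues for all prime divisors $\ell\mid e$, Lemma~\ref{lem:BinEq} computes the set $S_0$ explicitly in deterministic time $e(\log p)^{O(1)}$.

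Now I would invoke Theorem~\ref{thm:Small_e Oracle} on the pair $(\Oes, S_0)$: this yields a deterministic algorithm that, after $O(1)$ further calls to $\Oes$, returns $s$ in time $e^{1+\varepsilon}(\log p)^{O(1)}$. Combining the two stages, the total oracle complexity is $1+O(1)=O(1)$ and the total running time is
$$
e(\log p)^{O(1)} + e^{1+\varepsilon}(\log p)^{O(1)} = e^{1+\varepsilon}(\log p)^{O(1)},
$$
which is the claimed bound.

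There is no real obstacle here: the only substantive input is Lemma~\ref{lem:BinEq}, whose use of the nonresidues is precisely why this hypothesis appears in the statement of the corollary. All other steps are immediate.
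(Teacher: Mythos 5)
Your proposal is correct and matches the paper's own proof essentially verbatim: query $\Oes$ at $x=0$, handle the $A=0$ case, construct $S_0=\{t: t^e = A\}$ via Lemma~\ref{lem:BinEq} using the given nonresidues, and then invoke Theorem~\ref{thm:Small_e Oracle}. Nothing to add.
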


\begin{proof}
We make the first call to $\Oes$ with $x = 0$,
getting $A_0 = s^e$. If $A_0=0$ then $s=0$ and we are done. Now
assume that $A_0\neq0$.

We see from Lemma~\ref{lem:BinEq} that  we can construct
the set
\begin{equation}
\label{eq:Set S0}
S_0=\left\{t~:~t^e = A_0 \right\}
\end{equation}
of candidates for $s$ in time $e(\log p)^{O(1)}$.
Now it suffices to use Theorem~\ref{thm:Small_e Oracle}.
\end{proof}


\begin{cor}
\label{cor:Small_e Oracle2}
For a prime $p$ and a positive integer
$e\mid p-1$ with $e\le p^{1-\delta}$, given an oracle $\Oes$ for
some unknown $s\in \F_p$,
there is a deterministic algorithm  that for any fixed $\eps>0$ makes $O(1)$ calls to
the oracle $\Oes$ and finds $s$ in time
$O\(ep^{\eps}\)$.
\end{cor}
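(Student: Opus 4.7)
My plan is to reduce directly to Corollary~\ref{cor:Small_e Oracle1}, which already recovers $s$ in time $e^{1+\varepsilon}(\log p)^{O(1)}$ using $O(1)$ oracle calls, provided that an $\ell$-th power nonresidue modulo $p$ is supplied for every prime $\ell\mid e$. The entire additional work is therefore to manufacture those nonresidues deterministically, without any oracle queries.

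For each fixed prime $\ell\mid e$, I would iterate $t=2,3,\ldots$, compute $t^{(p-1)/\ell}$, and retain the first $t$ for which this is not $1$. Each test costs $(\log p)^{O(1)}$. Applying Lemma~\ref{lem:PVB} to a multiplicative character $\chi$ of order $\ell$, if every $t\in[1,h]$ were an $\ell$-th power residue we would have $\sum_{t=1}^h\chi(t)=h$, contradicting the Burgess bound once $h\ge p^{1/4+\varepsilon/2}$ (with $\nu$ chosen sufficiently large). So the search terminates after at most $p^{1/4+\varepsilon/2}$ candidates. Looping over the $O(\log e)$ distinct prime divisors of $e$ yields the full collection of nonresidues in time
\[
O\!\left(p^{1/4+\varepsilon/2}(\log p)^{O(1)}\log e\right)=O\!\left(p^{1/4+\varepsilon}\right).
\]

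Feeding these nonresidues into Corollary~\ref{cor:Small_e Oracle1} finishes the algorithm. The total running time is
\[
O\!\left(p^{1/4+\varepsilon}\right)+O\!\left(e^{1+\varepsilon}(\log p)^{O(1)}\right),
\]
and the only oracle queries are the $O(1)$ inherited from Corollary~\ref{cor:Small_e Oracle1}. Once $e\ge p^{1/4}$, a harmless relabelling of $\varepsilon$ absorbs both terms into $O(ep^{\varepsilon})$, which is the target bound.

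The main obstacle is the remaining regime $e<p^{1/4}$, where the Burgess search cost $p^{1/4+\varepsilon}$ is no longer dominated by $ep^{\varepsilon}$: the unconditional $p^{1/4+\varepsilon}$ is precisely the known barrier for deterministic nonresidue extraction. In this range I would instead invoke the sharper small-$e$ algorithm alluded to in the introduction, which achieves complexity $e^{o(1)}(\log p)^{O(1)}$ when $e=p^{o(1)}$ via the quantitative Nullstellensatz of Lemma~\ref{lem:Hilb} and the small-height zero statement of Lemma~\ref{lem:SmallZero}, together with the product-set estimates of Lemmas~\ref{lem:ProdSet Fp lin 2} and~\ref{lem:ProdSet Fp birat 2}. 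Switching between the two algorithms at the threshold $e\asymp p^{1/4}$ yields the uniform $O(ep^{\varepsilon})$ bound claimed in the corollary.
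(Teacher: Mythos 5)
Your reduction to Corollary~\ref{cor:Small_e Oracle1} plus a Burgess-driven search for $\ell$-th power nonresidues is sound in the regime $e\ge p^{1/4}$, but it genuinely breaks down for $e<p^{1/4}$, and the escape hatch you propose does not exist. The results you invoke for the small-$e$ regime — Lemmas~\ref{lem:ProdSet Fp lin 2} and~\ref{lem:ProdSet Fp birat 2}, together with the $e^{o(1)}(\log p)^{O(1)}$ complexity mentioned in the introduction — all pertain to the \emph{Shifted Power Identity Testing} problem (Theorems~\ref{thm:Small_e_s} and~\ref{thm:Small_e_st}), i.e.\ deciding whether a known or separately queried $t$ equals the hidden $s$. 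They say nothing about \emph{recovering} $s$, which is the Hidden Shifted Power Problem. Indeed, a sublinear-in-$e$ recovery algorithm is out of reach here: merely writing down the $\sim e$ candidates compatible with $s^e=A_0$ already costs $\sim e$ time, so no $e^{o(1)}$ recovery is possible, and the paper never claims one.

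The paper's actual proof circumvents the nonresidue bottleneck entirely rather than paying $p^{1/4+\eps}$ for it. It examines the integers $1,\ldots,\fl{p^\eps}$, records for each prime $\ell\mid e$ the largest power $\ell^{\gamma_\ell}$ for which every such integer is an $\ell^{\gamma_\ell}$-th power residue, and sets $n=\prod_\ell \ell^{\gamma_\ell}$. The crucial point is that all integers up to $p^\eps$ lie in the index-$n$ subgroup $\{x:x^{(p-1)/n}\equiv1\}$, so by Corollary~\ref{cor:genset} (smooth-number estimate, Lemma~\ref{lem:smoothmain2}) this subgroup has order at least $p\,\eps^{-c/\eps}$, forcing $n\le(1/\eps)^{c/\eps}$ — a constant depending only on $\eps$. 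Then Lemma~\ref{lem:list_cand}, via a pigeonhole argument over $n+1$ oracle calls at $x=0,\ldots,n$, produces a ratio $(x+j_2)/(x+j_1)$ guaranteed to lie in the controllable subgroup, where Lemma~\ref{lem:BinEq4} lets one solve the relevant binomial equation deterministically in time $e(\log p)^{O(1)}$, yielding the candidate set $S_0$. The rest then feeds into Theorem~\ref{thm:Small_e Oracle}. The essential idea you are missing is that one does not need genuine $\ell$-th power nonresidues; one only needs to \emph{quantify the failure} to find them among small integers, and smooth-number counting shows that failure is mild.
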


\begin{proof}
We can consider that $\eps<1/2$.
Trivially, $e$  can be factored in  time $e^{1/2 + o(1)}$.
For any prime $\ell\mid e$ we take $\alpha_\ell$ so that
$\ell^{\alpha_\ell}\|p-1$. Denote $y=\fl{p^\eps}$. For any $x=1,\ldots,y$
we take $\gamma_\ell(x)$ as the largest nonnegative integer $\gamma\le\alpha_\ell$ 
so that
$$x^{(p-1)/\ell^{\gamma}}\equiv1\pmod p.$$
Next, we denote
$$\gamma_\ell=\min\{\gamma_\ell(x)~:~1\le x\le y\}$$
and for any prime $\ell\mid e$ we choose $x=x(\ell)$ so that $\gamma_\ell(x)=\gamma_\ell$.
Let
$$n=\prod_{\substack{\ell\mid e\\\ell~\mathrm{prime}}}\ell^{\gamma_\ell}.$$
We have $x^{(p-1)/n}\equiv1\pmod p$ for all $x=1,\ldots,y$. Therefore,
from Corollary~\ref{cor:genset} we deduce that
$n \le (1/\eps)^{c/\eps}$ for some absolute constant $c$. The running time for finding $n$ and all $x(\ell)$
is $p^\eps(\log p)^{O(1)}$. Using Lemma~\ref{lem:list_cand}, we find a set
$S_0$ of candidates for $s$ of cardinality at most $e$ in time
$e (\log p)^{O(1)}n^{O(1)}$. By Theorem~\ref{thm:Small_e Oracle}, we find $s$
in time $O\(\(e^{1+\eps} +p^{\eps}\)(\log p)^{O(1)}\)$. Replacing $\eps$
with $\eps/2$, we get the running time
$$O\(\(e^{1+\eps/2} +p^{\eps/2}\)(\log p)^{O(1)}\)
= O\(e^{1+\eps} +p^{\eps}\) = O(e p^{\eps})
$$
as required.
\end{proof}

More precisely,  it is easy to  see that in the algorithm 
of Corollary~\ref{cor:Small_e Oracle2} the number of calls to the oracle $\Oes$ needed to find $S_0$ and the running time for this step are
bounded by 
$$
(1/\eps)^{c/\eps} \mand p^\eps(\log p)^{O(1)} + e(\log p)^{O(1)}(1/\eps)^{O(1/\eps)},
$$
respectively, where $c$ is an absolute constant. 

We note that   the  Extended Riemann Hypothesis implies that for
the smallest $\ell$-th power nonresidue modulo $p$ is
 $O((\log p)^2)$ (uniformly over primes $\ell\mid p-1$),
see~\cite[Chapter~9, Corollary~1]{Mont}.
Hence, we obtain:

\begin{cor}
\label{cor:Small_e Oracle-ERH} Assuming the Extended
Riemann Hypothesis, for a prime $p$ and a positive integer
$e\mid p-1$ with $e\le p^{1-\delta}$, given an oracle $\Oes$ for
some unknown $s\in \F_p$,
there is a deterministic algorithm  that for any fixed $\eps>0$ makes $O(1)$ calls to
the oracle $\Oes$ and finds $s$  in time  $e^{1+\eps} (\log p)^{O(1)}$.
\end{cor}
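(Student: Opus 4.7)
The plan is to reduce the statement to Corollary~\ref{cor:Small_e Oracle1}, which already yields an algorithm of the advertised complexity provided one supplies as input an $\ell$-th power nonresidue modulo $p$ for every prime divisor $\ell\mid e$. Thus the only real task is to bundle the production of such nonresidues into the preprocessing, within the stated time budget and without any additional oracle calls.

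First, I would factor $e$ by trial division in time $e^{1/2+o(1)}$, thereby producing the list of distinct prime divisors $\ell_1,\ldots,\ell_k$ of $e$, where necessarily $k=O(\log e/\log\log e)$. Then, for each such $\ell_i$, I would invoke the ERH bound recalled in the paragraph preceding the statement (from~\cite[Chapter~9, Corollary~1]{Mont}): the least $\ell_i$-th power nonresidue modulo $p$ is at most $c(\log p)^2$ for an absolute constant $c$, uniformly over primes $\ell_i\mid p-1$. Enumerating $x=2,3,\ldots$ in order and testing the criterion $x^{(p-1)/\ell_i}\not\equiv 1\pmod p$ via fast modular exponentiation, each individual test costs $(\log p)^{O(1)}$ and at most $O((\log p)^2)$ tests suffice per prime. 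Summed over all prime divisors of $e$ this step contributes only $(\log p)^{O(1)}$ to the total running time, and it makes \emph{zero} calls to the oracle $\Oes$.

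Equipped with a nonresidue for every prime divisor of $e$, I would then apply Corollary~\ref{cor:Small_e Oracle1} verbatim, recovering $s$ with $O(1)$ calls to $\Oes$ in time $e^{1+\eps}(\log p)^{O(1)}$. Aggregating the three stages yields total running time
\[
e^{1/2+o(1)} \;+\; (\log p)^{O(1)} \;+\; e^{1+\eps}(\log p)^{O(1)} \;=\; e^{1+\eps}(\log p)^{O(1)},
\]
and the oracle complexity remains $O(1)$, as required. There is no genuine obstacle: the entire substantive content is the ERH-based upper bound on the least $\ell$-th power nonresidue, and the remaining ingredients (trial division, repeated modular exponentiation, and a direct appeal to Corollary~\ref{cor:Small_e Oracle1}) are routine. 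The only mild point worth checking is that the ERH bound is known to be uniform over primes $\ell\mid p-1$, which is precisely what the cited corollary of Montgomery provides.
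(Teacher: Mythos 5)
Your proposal is correct and is exactly the route the paper intends: the corollary is stated as an immediate consequence of the ERH bound $O((\log p)^2)$ on the least $\ell$-th power nonresidue together with Corollary~\ref{cor:Small_e Oracle1}, and your preprocessing (trial-factor $e$ in $e^{1/2+o(1)}$, search each nonresidue in $(\log p)^{O(1)}$ time, then invoke Corollary~\ref{cor:Small_e Oracle1}) fills in the only detail the paper leaves implicit.
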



We also note that by a result of Burgess and Elliott~\cite{BurgEll}
for almost all primes $p$ the smallest primitive
root is $(\log p)^{2+\varepsilon}$ for any $\varepsilon > 0$,
see also~\cite{EllMur}. Thus for almost all primes we have an
unconditional version of Corollary~\ref{cor:Small_e Oracle-ERH}.

We now present a probabilistic algorithm which is slightly more efficient in
some cases.

\begin{theorem}
\label{thm:Small_e Oracle Rand} For a prime $p$ and a positive integer
$e\mid p-1$ with $e\le p^{1-\delta}$, given an oracle $\Oes$ for
some unknown $s\in \F_p$, there is
 a probabilistic algorithm
that makes in average
$O(1)$ calls to the oracle $\Oes$ and
finds $s$ in the expected time $e(\log p)^{O(1)}$
\end{theorem}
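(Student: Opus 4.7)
The plan is to build a candidate set for $s$ from one oracle call and then whittle it down by random queries. Two independent sources of randomization let us avoid both the $p^\eps$ factor of Corollary~\ref{cor:Small_e Oracle2} (incurred by searching for nonresidues) and the $e^\eps$ factor of Corollary~\ref{cor:Small_e Oracle1} (incurred by searching deterministically for good query points).

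First I would factor $e$ by trial division in time $e^{1/2+o(1)}\le e(\log p)^{O(1)}$, and then, for each prime $\ell\mid e$, locate an $\ell$-th power nonresidue modulo $p$ by random sampling in $\F_p^*$. A uniformly random element is an $\ell$-th power nonresidue with probability $(\ell-1)/\ell\ge 1/2$, so each prime costs $O(1)$ expected trials and this step takes expected time $(\log p)^{O(1)}$. Next I would query $\Oes$ at $x=0$ to obtain $A_0=s^e$ (outputting $s=0$ if $A_0=0$) and apply Lemma~\ref{lem:BinEq} to construct in deterministic time $e(\log p)^{O(1)}$ the candidate set $S_0=\{t\in\F_p^*~:~t^e=A_0\}$, which has size at most $e$ and contains $s$.

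The crux is random filtering: fix $k=\lceil 1/\delta\rceil$, sample $x_1,\ldots,x_k\in\F_p$ independently and uniformly, and for $i=1,\ldots,k$ query $A_i=(s+x_i)^e$ and set
$$
S_i = \{t\in S_{i-1}~:~(t+x_i)^e = A_i\}.
$$
Each step costs $\#S_{i-1}(\log p)^{O(1)}\le e(\log p)^{O(1)}$. The probabilistic input is the following simple bound: for fixed distinct $s,t\in\F_p$, the polynomial $(t+X)^e-(s+X)^e\in\F_p[X]$ has leading term $e(t-s)X^{e-1}\ne 0$ (using that $e<p$), so it has at most $e-1$ roots in $\F_p$ and $\Pr_x[(t+x)^e=(s+x)^e]<e/p$. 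Independence of the $x_i$ then yields
$$
\mathbb{E}\bigl[\#(S_k\setminus\{s\})\bigr]\le (\#S_0-1)\(\frac{e}{p}\)^k \le e\cdot p^{-k\delta} = O(1),
$$
since $e\le p^{1-\delta}$ and $(k+1)\delta>1$. To finish, I would verify the $O(1)$ remaining candidates by querying $\Oes$ at $x=-t$ for each $t\in S_k$ and declaring $s=t$ as soon as the response is $0$.

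Summing up, the expected number of oracle calls is $1+k+O(1)=O(1)$ and the expected running time, dominated by constructing $S_0$ and running $k=O(1)$ filter passes each of cost $e(\log p)^{O(1)}$, is $e(\log p)^{O(1)}$. The only non-routine step is the random filtering: the key point is the degree bound on $(t+X)^e-(s+X)^e$ (so that each wrong candidate survives one query with probability $<e/p$) together with the independence of the $x_i$ across queries. Everything else is standard bookkeeping.
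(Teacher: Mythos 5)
Your proposal is correct and follows the same two-stage outline as the paper: one oracle call at $x=0$ pins $s$ to a candidate set $S_0$ of size at most $e$, and $O(1)$ random calls then isolate $s$. The one place you diverge is in how $S_0$ is produced: the paper simply invokes a probabilistic polynomial root-finding algorithm (von zur Gathen--Gerhard, Corollary~14.16) to solve $X^e=A_0$ in expected time $e(\log p)^{O(1)}$, whereas you factor $e$ by trial division, locate $\ell$-th power nonresidues by random sampling, and then apply Lemma~\ref{lem:BinEq} deterministically. Both achieve the claimed expected time, but the paper's route is shorter and does not require factoring $e$ or explicitly producing nonresidues. There is also a cosmetic difference in the random-filtering analysis: you bound $\mathbb{E}[\#(S_k\setminus\{s\})]$ directly and verify a residual $O(1)$-size set, while the paper takes $\nu=\lfloor 3\log p/\log(p/e)\rfloor+1$ (also $O(1)$ since $e\le p^{1-\delta}$) and uses a union bound over pairs $s_1\ne s_2\in S_0$ to conclude that with probability $\ge 1-1/p$ the random queries determine $s$ uniquely. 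These are equivalent bookkeeping choices; your degree argument for $(t+X)^e-(s+X)^e$ having at most $e-1$ roots is exactly the ingredient the paper uses in the form $\Pr[(x+s_1)/(x+s_2)\in\cG_e]\le e/p$.
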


\begin{proof} We again start
from the first call to $\Oes$ with $x = 0$.
Using a probabilistic algorithm, we can find the set $S_0$
given by~\eqref{eq:Set S0} in the expected
time $e(\log p)^{O(1)}$, see~\cite[Corollary~14.16]{vzGG}.
Then we make next calls with random $x_1,\ldots,x_\nu$ where
$$\nu=\fl{\frac{3\log p}{\log(p/e)}}+1.$$
For any $j$ and any $s_1\neq s_2$ the probability of the event
\begin{equation}
\label{eq:even j}
(x_j+s_1)/(x_j+s_2)\in\cG_e
\end{equation}
is at most $e/p$. Hence, by the choice of $\nu$, the probability
of the event $(x_j+s_1)/(x_j+s_2)\in\cG_e$ for all $j$ is at most
$(e/p)^{\nu}<p^{-3}$. Next, the probability that for some $s_1\neq
s_2$ we have~\eqref{eq:even j} for all $j$ is at most $1/p$.
Therefore, the random choice of $x_1,\ldots,x_\nu$ determines $s$
with high probability. To find $s$ we have to test elements from
$S_0$. This can be done in time $e(\log p)^{O(1)}$, and the result
follows.
\end{proof}

The following result is applicable to the case when $e$ does not satisfy
the restriction in Theorem~\ref{thm:Small_e Oracle} (namely, to
$e=p^{1+o(1)}$ as $p\to\infty$).

\begin{theorem}
\label{thm:Large_e Oracle} For a prime $p$ and a positive integer
$e\mid p-1$, given an oracle
$\Oes$ for some unknown $s\in \F_p$, there is a deterministic
algorithm that makes $O\(\log p/(\log(p/e))\)$ calls to the oracle
$\Oes$ and finds $s$ in time $p(\log p)^{O(1)}$.
\end{theorem}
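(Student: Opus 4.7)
The plan is to mimic the iterative candidate-narrowing used at the end of the proof of Theorem~\ref{thm:Small_e Oracle}, but since no useful character-sum bound is available when $e$ is close to $p$, the search for a good query point is carried out over all of $\F_p$ (or, in the extreme regime, replaced by direct search). Maintain a candidate set $S \subseteq \F_p$ that is guaranteed to contain $s$, initialised as $S = \F_p$. At each step pick $x \in \F_p$, call the oracle to obtain $A = (x+s)^e$, and prune to $S' = \{t \in S~:~(x+t)^e = A\}$; halt once $|S| = 1$.

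The averaging identity driving the query bound is as follows. For any distinct $t_1, t_2 \in \F_p$, the M\"obius map $x \mapsto (x+t_1)/(x+t_2)$ is a bijection from $\F_p \setminus \{-t_2\}$ onto $\F_p \setminus \{1\}$, so precisely $e-1$ of the $x$'s land in $\cG_e$, equivalently give $(x+t_1)^e = (x+t_2)^e$. Writing $n = |S|$ and
\[
N(x) = \#\bigl\{(t_1,t_2) \in S \times S ~:~ t_1 \ne t_2,\ (x+t_1)^e = (x+t_2)^e\bigr\},
\]
summation yields $\sum_{x \in \F_p} N(x) = (e-1)n(n-1)$, so some $x^\ast \in \F_p$ satisfies $N(x^\ast) < e n^2/p$. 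For this $x^\ast$ the pruned set obeys $|S'|(|S'|-1) \le N(x^\ast)$, hence $|S'| \le n\sqrt{e/p} + 1$. Iterating this geometric shrinkage, $|S_\nu| = 1$ after $\nu = \lceil 2\log p/\log(p/e)\rceil + O(1)$ steps, yielding $s$ within the claimed $O(\log p/\log(p/e))$ oracle calls.

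For the running time, represent $S$ by its indicator function on $\F_p$, so that the pruning step touches only the elements of $S$ and costs $O(|S|(\log p)^{O(1)})$. A good $x^\ast$ is located by scanning $x = 0, 1, 2, \ldots$ and stopping at the first value achieving $N(x) \le 2(e-1)n(n-1)/p$; by Markov's inequality at least half of the candidates in $\F_p$ qualify, so the amortised cost per step stays within budget. When the geometric ratio $\sqrt{e/p}$ is bounded away from $1$ by at least $1/(\log p)^{O(1)}$, the sum $\sum_j |S_j|$ telescopes to $O(p(\log p)^{O(1)})$, so the total work fits in $p(\log p)^{O(1)}$.

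The main obstacle is the extreme regime where $e$ is so close to $p$ that the contraction factor $\sqrt{e/p}$ lies within $1/(\log p)^{O(1)}$ of $1$ and the telescoping argument above breaks. Fortunately in that regime $\log p/\log(p/e)$ is itself $\Omega(p)$, so the algorithm can fall back on the direct brute-force search that queries $\Oes(-t)$ successively for $t = 0, 1, 2, \ldots$ and halts when the oracle returns $0$; this alternative uses at most $p$ calls (within the inflated bound) and runs in $O(p(\log p)^{O(1)})$ time. Splicing the two regimes together at a threshold of the form $p - e \asymp p/(\log p)^{C}$ for an appropriate constant $C$ delivers both the query and running-time estimates of the theorem.
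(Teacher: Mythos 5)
Your averaging identity for counting pairs with $(x+t_1)^e=(x+t_2)^e$ is correct, and it does deliver the claimed $O(\log p/\log(p/e))$ oracle-call bound: via $\sum_{x\in\F_p}N(x)=(e-1)n(n-1)$ some $x^\ast$ exists with the required collision count, giving $|S'|\le n\sqrt{e/p}+1$ and geometric decay. That part is a neat elementary alternative to the paper's character-sum count.

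The running-time argument, however, has a genuine gap. To deterministically \emph{locate} a good $x^\ast$ you scan and evaluate $N(x)$ for successive $x$, and each evaluation costs $\Theta(|S|(\log p)^{O(1)})$ (compute $(x+t)^e$ for all $t\in S$, sort, count collisions). The Markov bound only tells you that at most $p/2$ values of $x$ fail; it says nothing about \emph{where in the scan order} the first good $x$ sits, so the worst case is $\Theta(p)$ evaluations before success. During the early iterations, where $|S|$ can be as large as $p$, the cost of a single candidate search step is then $\Theta(p\cdot|S|(\log p)^{O(1)})=\Theta(p^2(\log p)^{O(1)})$, blowing the $p(\log p)^{O(1)}$ budget. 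The word ``amortised'' is doing no real work here: the expensive steps are at the front, not spread out. The paper sidesteps exactly this obstacle by running a non-adaptive first phase: it queries $x=1,\dots,m$ with a fixed $m=\Theta(\log p/\log(p/e))$ and invokes a Weil-type bound (Lemma~\ref{lem:Weil2}) to prove that the resulting candidate set $S_m=\{t:(t+j)^e=A_j,\,j\le m\}$ already has size $\le p^{1/2+o(1)}$. That reduction costs $p(\log p)^{O(1)}$ once and makes the subsequent adaptive scanning cheap. Your elementary averaging cannot replace this: for a \emph{fixed} choice of query points it gives no nontrivial control on $|S_m|$, since the bound $\sum_x N(x)=(e-1)n(n-1)$ says nothing about the specific positions $x=1,\dots,m$.

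The extreme-regime patch also has a quantitative error. If $p-e\asymp p/(\log p)^C$ then $\log(p/e)\approx(p-e)/p\asymp(\log p)^{-C}$, so $\log p/\log(p/e)\asymp(\log p)^{C+1}$, \emph{not} $\Omega(p)$; the stated $\Omega(p)$ only holds when $p-e=O(\log p)$. Thus a brute-force scan of $p$ oracle calls overshoots the call budget in most of the regime you wanted to rescue. (Incidentally, that rescue is largely unnecessary: in the range $p-e\gtrsim p/(\log p)^{O(1)}$ the telescoping $\sum_j|S_j|\lesssim p^2/(p-e)\le p(\log p)^{O(1)}$ still holds, and once the shrinkage stalls near the fixed point $\approx 2p/(p-e)$ one can afford to exhaust the residual candidates directly; the real difficulty in your write-up is the search cost in step one, not this tail.)
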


\begin{proof} For $e\le p^{0.9}$ the result follows immediately from
Theorem~\ref{thm:Small_e Oracle}. We now assume that $e>p^{0.9}$.

Again, we fix some integer $m \ge 1$ and  now make $m$ calls
to $\Oes$ with $j = 1, \ldots, m$, getting $A_j = (s+j)^e$.
If $A_j=0$ for some $j$, then $s=-j$ and we are done. Hence we can
assume  that $A_j\neq0$ for $j=1,\ldots,\mu$. Our immediate aim is
to estimate the cardinality of the set $S_m$ of candidates after
$m$ calls:
$$S_m=\{x \in \F_p~:~(x+j)^e=a_j,\,j=1,\ldots,m\}.$$

As in the proof of Lemma~\ref{lem:TwoCongr}  we
put $d=(p-1)/e$,
denote by $\chi_0$ be the principal characters modulo $p$
and by $\chi_1,\ldots,\chi_{d-1}$ the  other characters with
$\chi_i^d=\chi_0$. The condition $x^e=A_j$ determines the values
$\chi_i(x)=a_{i,j}$. We have
$$
\# S_m=d^{-m}\sum_{i_1,\ldots,i_m=0}^{d-1}\prod_{j=1}^m
\chi_{i_j}(x+j)\overline{a_{{i_j},j}}.
$$
Applying Lemma~\ref{lem:Weil2} we get
$$\# S_m=d^{-m}(p-m)+O(m p^{1/2}).$$
Setting
$$m=\fl{\frac{\log p}{2\log(p-1)/\log e}}+1,$$
we have $\# S_m\le p^{1/2+o(1)}$. We need the running time
$p(\log p)^{O(1)}$ to find $S_m$.

Now we proceed as in the proof of
Theorem~\ref{thm:Small_e Oracle}  with
$$h=\fl{\frac pe p^{1/2}(\log p)^2}.$$
After the $j$-th call, $j\ge m$, we get the set $S=S_j$ of candidates
for $s$. Next, we define $R(x), Q, Q(s,t), \cQ(s,t)$ as
in Theorem~\ref{thm:Small_e Oracle}. Using~\eqref{welldistrx1}
we get the new set $S=S_j$ of candidates for $s$ with
$$\#S_{j+1}\le\max\left\{1, (1+o(1))(e/p)^{1/2} \# S_j\right\}.$$
Thus in $\ll\log p/(\log p/\log e)$ steps  this procedure  produces a set
$T$ of cardinality
$\#T = O(1)$ with $s \in T$. Checking whether $s = t$ for every element
$t\in T$ takes at most $\# T = O(1)$ calls to $\Oes$ with $x = -t$ with
$t \in T$, until $\Oes$ returns  zero. Since the time to find
$S_{j+1}$ is $(\log p)^{O(1)}h \# S_j\le p(\log p)^{O(1)}$,  we obtain
the desired result.
\end{proof}

Combining Corollary~\ref{cor:Small_e Oracle1} and Theorem~\ref{thm:Large_e Oracle}
we get the following result:

\begin{cor}
\label{thm:all_e Oracle} For a prime $p$ and a positive integer
$e\mid p-1$, given an oracle $\Oes$ for
some unknown $s\in \F_p$,
and $\ell$-th power nonresidues
for all prime divisors $\ell\mid e$,
there is a deterministic algorithm that for any fixed $\eps>0$
makes $O\(\log p/(\log(p/e))\)$ calls to the oracle $\Oes$ and
finds $s$ in time $e^{1+\eps}(\log p)^{O(1)}$.
\end{cor}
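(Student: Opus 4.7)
The plan is to derive this corollary by a case analysis on the size of $e$ relative to $p$, using Corollary~\ref{cor:Small_e Oracle1} in the small-$e$ regime and Theorem~\ref{thm:Large_e Oracle} in the large-$e$ regime. Given a fixed $\varepsilon > 0$, I would pick an auxiliary threshold parameter $\delta = \delta(\varepsilon) > 0$ small enough that the two bounds glue together cleanly; a concrete choice is $\delta = \varepsilon/(2(1+\varepsilon))$, which guarantees simultaneously $1/\delta = O(1)$ and $(1-\delta)(1+\varepsilon) \ge 1$.

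First, if $e \le p^{1-\delta}$, apply Corollary~\ref{cor:Small_e Oracle1} with this $\delta$. It consumes the supplied $\ell$-th power nonresidues to build the initial candidate set $S_0 = \{t : t^e = A_0\}$ of cardinality at most $e$, performs $O(1)$ additional queries, and runs in time $e^{1+\varepsilon}(\log p)^{O(1)}$. Because $\log(p/e) \ge \delta \log p$ in this regime, we also have $\log p/\log(p/e) \le 1/\delta = O(1)$, so the $O(1)$ oracle-call count is automatically of the required form $O(\log p/\log(p/e))$.

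Second, if $e > p^{1-\delta}$, invoke Theorem~\ref{thm:Large_e Oracle}, whose statement and proof as presented above do not require nonresidue data. It performs $O(\log p/\log(p/e))$ oracle calls and terminates in time $p(\log p)^{O(1)}$. By the choice of $\delta$ we have $e^{1+\varepsilon} \ge p^{(1-\delta)(1+\varepsilon)} \ge p$, so $p(\log p)^{O(1)} \le e^{1+\varepsilon}(\log p)^{O(1)}$, matching the claimed bound.

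Since the two cases cover all $e \mid p-1$ and each respects both the oracle-complexity and running-time estimates in the corollary, combining them yields the result. There is no substantive mathematical obstacle here beyond the bookkeeping involved in calibrating $\delta$ so that the $O(1)$-versus-$O(\log p/\log(p/e))$ call counts agree in the small regime and the $p(\log p)^{O(1)}$-versus-$e^{1+\varepsilon}(\log p)^{O(1)}$ time bounds agree in the large regime; every other ingredient is supplied by the two cited results.
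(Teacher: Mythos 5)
Your proof is correct and follows the same route the paper intends: it explicitly says the corollary is obtained by ``combining Corollary~\ref{cor:Small_e Oracle1} and Theorem~\ref{thm:Large_e Oracle},'' which is exactly your two-regime split at $e = p^{1-\delta}$ with $\delta$ calibrated to $\eps$. The only nitpick is the justification for $O(1)\subseteq O(\log p/\log(p/e))$ in the small regime: it holds simply because $\log p/\log(p/e)\ge 1$ for every $e\mid p-1$, not because that ratio is bounded above by $1/\delta$.
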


\subsection{Shifted Power Identity Testing with Known $t$}

\begin{theorem}
\label{thm:Medium_e_s} For a prime $p$ and a positive integer
$e\mid p-1$ with $e\le p^{1-\delta}$ for some fixed $\delta >0$,
given an oracle $\Oes$ for some unknown $s\in \F_p$ and $t\in
\F_p$, there is a deterministic algorithm to decide whether $s =
t$ in time $e^{1/4+o(1)}(\log p)^{O(1)}$ as $e\to\infty$.
\end{theorem}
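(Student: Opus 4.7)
The plan is to construct a small \emph{test set} $\cX\subseteq\F_p$ of size $h=\lceil e^{1/4+\eps}\rceil$, for some small fixed $\eps>0$, with the property that for any $s\ne t$ at least one $x\in\cX$ gives $(x+s)^e\ne(x+t)^e$. The algorithm then queries $\Oes(x)$ at every $x\in\cX$, computes $(x+t)^e$ directly, and outputs ``$s=t$'' iff all queries match. The running time is $|\cX|(\log p)^{O(1)}=e^{1/4+o(1)}(\log p)^{O(1)}$.

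Following the setup of Section~1.2, take $\cX=\{y^{-1}-t:1\le y\le h\}$ (note $x=-t$ never occurs). For $x=y^{-1}-t$ one has $(x+s)/(x+t)=1+\lambda y$ with $\lambda=s-t\in\F_p^*$, so $(x+s)^e=(x+t)^e$ is equivalent to $1+\lambda y\in\cG_e$. It therefore suffices to show that, for every $\lambda\in\F_p^*$,
\[
T(\lambda,h):=\#\{y\in[1,h]:1+\lambda y\in\cG_e\}<h.
\]
I would prove this via an eighth-moment character-sum argument. Let $d=(p-1)/e$, let $\chi$ be a character of $\F_p^*$ of order $d$, and set $S_j(\lambda)=\sum_{y=1}^{h}\chi^j(1+\lambda y)$. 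The elementary bound $\prod_{i=1}^{8}\mathbf{1}_{\cG_e}(1+\lambda y_i)\le\mathbf{1}_{\cG_e}\bigl(\prod_{i=1}^{4}(1+\lambda y_i)/\prod_{i=5}^{8}(1+\lambda y_i)\bigr)$, combined with the orthogonality relation $\mathbf{1}_{\cG_e}=d^{-1}\sum_{j=0}^{d-1}\chi^j$, yields
\[
T(\lambda,h)^8\le\frac{1}{d}\sum_{j=0}^{d-1}|S_j(\lambda)|^8.
\]
Parseval over all $p-1$ characters of $\F_p^*$ gives $\sum_{j=0}^{p-2}|S_j(\lambda)|^8=(p-1)E_4$, where $E_4$ counts eight-tuples $(\vec u,\vec v)\in[1,h]^4\times[1,h]^4$ with $\prod_i(1+\lambda u_i)\equiv\prod_i(1+\lambda v_i)\pmod p$. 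After the substitution $w_i=u_i+\lambda^{-1}$, $w_i'=v_i+\lambda^{-1}$ and clearing the factor $\lambda^4$, this becomes $\prod w_i\equiv\prod w_i'\pmod p$ with $w_i,w_i'$ in a length-$h$ interval, so $E_4=\sum_\mu J_4(\mu,h)^2$. Provided $h<p^{1/15}$, Lemma~\ref{lem:GProdSet Fp lin 1} with $\nu=4$ gives $J_4(\mu,h)\le h^{o(1)}$ uniformly in $\mu$, so $E_4\le h^{o(1)}\sum_\mu J_4(\mu,h)=h^{4+o(1)}$. Combining,
\[
T(\lambda,h)^8\le\frac{h^8}{d}+\frac{(p-1)h^{4+o(1)}}{d}=\frac{e\,h^8}{p-1}+O\bigl(e\,h^{4+o(1)}\bigr);
\]
since $e\le p^{1-\delta}$ the first term is $\le h^8p^{-\delta}=o(h^8)$, and for $h=\lceil e^{1/4+\eps}\rceil$ the second is $o(h^8)$ as well, so $T(\lambda,h)<h$, as required.

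The main obstacle is that this estimate relies on Lemma~\ref{lem:GProdSet Fp lin 1} with $\nu=4$, which forces $h<p^{1/15}$ and therefore only covers the range $e<p^{4/15-\eps}$. For $e\in[p^{4/15},p^{1-\delta}]$ I would instead adapt the iterative reduction from the proof of Theorem~\ref{thm:Small_e Oracle}: starting from the candidate set $\{s':(s'+1)^e=\Oes(1)\}$ of cardinality at most $e$, apply the double-query trick $(x,\zeta x)$ guided by Lemma~\ref{lem:TwoCongr}, each round shrinking $\#S$ by a factor $p^{-\xi}$ for some $\xi>0$ depending only on $\delta$; after $O(1)$ rounds $\#S\le e^{1/4+o(1)}$, and a direct test of whether $t\in S$ then finishes within the target budget.
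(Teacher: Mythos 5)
Your first half is a genuinely different route from the paper's and, in the range where it applies, it works. The paper never touches moments of character sums here: it sets $\cJ=\{a+\ell,\ldots,a+\ell K\}\subseteq\cI$, forms the quotient set $\cA=\cJ/\cJ$, observes that $\cI\subseteq r\cG_e$ forces $\cA\subseteq\cG_e$ and, more importantly, $\cA+s\subseteq(s+1)\cG_e$ for $s=0,\ldots,m-1$ (using a clever common-denominator trick with $\ell=m!$), and then plays off a lower bound $\#\cA\ge H^{2+o(1)}$ (Lemma~\ref{lem:ACZ}) against the upper bound of Lemma~\ref{lem:ShkVyu} on $\#\bigl(\cG_e\cap\bigcap_s((s{+}1)\cG_e-s)\bigr)$. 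Since Lemma~\ref{lem:ShkVyu} holds as long as $p\ge(2m\fl{e^{1/(2m+1)}}+2m+2)e$, which for $e\le p^{1-\delta}$ is satisfied once $m$ is large in terms of $\delta$, this single argument covers the whole asserted range. Your eighth-moment computation, by contrast, leans on Lemma~\ref{lem:GProdSet Fp lin 1} with $\nu=4$, which requires $h<p^{1/15}$; with $h=\lceil e^{1/4+\eps}\rceil$ this only covers $e$ up to roughly $p^{4/15}$, a much smaller window than $e\le p^{1-\delta}$. So you correctly diagnose that your main lemma does not reach far enough, but the tool you would need for the remaining range is exactly Lemma~\ref{lem:ShkVyu}, which you never invoke.

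The fallback you sketch for $e\in[p^{4/15},p^{1-\delta}]$ does not close the gap, and in fact cannot, because it blows the running-time budget. Merely constructing the initial candidate set $S_0=\{s':(s'+1)^e=\Oes(1)\}$ by Lemma~\ref{lem:BinEq} already costs $e(\log p)^{O(1)}$, and each round of the $(x,\zeta x)$ reduction in the proof of Theorem~\ref{thm:Small_e Oracle} costs $p^{\beta+o(1)}\#S$, which the paper bounds by $O(e)$, not by $e^{1/4+o(1)}$. The whole machinery of Lemma~\ref{lem:TwoCongr} is designed for the \emph{Hidden Shifted Power Problem} where a budget of $e^{1+\eps}(\log p)^{O(1)}$ is acceptable; porting it to the identity-testing problem with a budget of $e^{1/4+o(1)}(\log p)^{O(1)}$ is off by roughly a $e^{3/4}$ factor. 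For identity testing with $t$ known, there is also no need to enumerate candidates at all: the proof simply needs a universal test set, which is what the paper's Lemma~\ref{lem:ShkVyu}-based contradiction furnishes in one shot.

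In short: your moment argument is a valid alternative proof in the sub-range $e<p^{4/15-\eps}$ and is a nice illustration that the conclusion there needs nothing beyond Lemma~\ref{lem:GProdSet Fp lin 1}, but to reach $e\le p^{1-\delta}$ you need the Shkredov--Vyugin intersection bound (Lemma~\ref{lem:ShkVyu}), and the iterative reduction from Theorem~\ref{thm:Small_e Oracle} is not a usable patch because its running time is far outside the target.
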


\begin{proof}
For  integers $a$ and $H$ with $0 \le a < a+H <p$,
we consider the interval $\cI = [a+1,a+H]$ of size $H<p^{1/3}$.

Fix some integer $m \ge 1$ so that $p$ and $e$ satisfy the condition
of Lemma~\ref{lem:ShkVyu}. We put $\ell = m!$,
$\ell_s = m!/(s+1)$, $s=1, \ldots, m-1$, and $K = \fl{H/\ell}$.

Let $\cJ = \{a+\ell, \ldots, a + \ell K \}$. Thus $\cJ \subseteq
\cI$. Let $\cA = \cJ/\cJ$, that is,
$$
\cA = \{j_1/j_2~:~j_1,j_2 \in \cJ\}.
$$
By Lemma~\ref{lem:ACZ} we see that
$$
\frac{a+\ell h} {a+\ell i} = \frac{a+\ell j} {a+\ell k},
\quad i, j, h, k\in [1,H],
$$
has $H^{2 +o(1)}$ solutions as $H\to\infty$. Therefore,
\begin{equation}
\label{eq:A bound} \#\cA \ge H^{2+o(1)}.
\end{equation}

Next we observe that
$$\cA + s \subseteq \{(s+1)u~:~u \in \cI/\cI\},
$$
since
$$
\frac{a+ \ell h} {a+ \ell i}+ s = (s+1) \frac{a + s\ell_s i+
\ell_s h}{a+\ell i}.
$$
and $s\ell_s i+ \ell_s h \le (s+1) \ell_s K \le H$.

Clearly if $\cI \in r\cG_e$ then $\cA\subseteq\cG_e$ and
$\cA+s\subseteq (s+1)\cG_e$. The system of equations
$$x_0+ s=x_s,\quad x_s\in(s+1)\cG_e,\quad s=0,\ldots,m-1,$$
has at least $\#\cA$ solutions of the form
$x_0\in\cA$, $x_s = x_0 + s$, $s =1, \ldots,m$. We now set
$$
H =\fl{e^{1/4 + \varepsilon}}
$$
for a sufficiently small $\varepsilon > 0$. By
Lemma~\ref{lem:ShkVyu} we have $\#\cA\ll e^{(m+1)/(2m+1)}$
which, for a sufficiently large $m$ and the above choice of $H$,
contradicts~\eqref{eq:A bound}. Since  $\varepsilon > 0$ is
arbitrary, we now complete the proof by simply choosing
$\cY =[1,H]$ and recalling~\eqref{eq:Cond Fin}.
\end{proof}

For large
values of $e$ we can use bounds of character sums.

\begin{theorem}
\label{thm:Large_e_s} For a prime $p$ and a positive integer
$e\mid p-1$ with $e\le (p-1)/2$, given an oracle $\Oes$ for some
unknown $s\in \F_p$ and $t\in \F_p$, there is a deterministic
algorithm to decide whether $s = t$ in time $ p^{1/4+ o(1)}$
as $p\to\infty$.
\end{theorem}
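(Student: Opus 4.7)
The plan is to reuse the ``test-set'' framework from the proof of Theorem~\ref{thm:Medium_e_s} but, for $e$ possibly as large as $(p-1)/2$, to replace the combinatorial intersection bound of Lemma~\ref{lem:ShkVyu} (whose hypothesis fails when $e$ is of order $p$) by a direct application of the Burgess bound of Lemma~\ref{lem:PVB}. Explicitly, I would fix a slowly decaying parameter $\eta = \eta(p) > 0$, set $H = \lfloor p^{1/4 + \eta}\rfloor$, $\cY = [1, H]$, and form the test set $\cX = \{y^{-1} - t : y \in \cY\}$ as in~\eqref{eq:XY}. The algorithm queries $\Oes$ at each $x \in \cX$ (none of which equals $-t$, since $y^{-1}\ne 0$ in $\F_p^*$), computes $(x+t)^e$, and declares $s = t$ iff the oracle output agrees with $(x+t)^e$ on every $x \in \cX$. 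The ``yes'' direction is immediate.

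For the converse, suppose $s\ne t$ yet every query matches. By the derivation culminating in~\eqref{eq:Cond Fin} this is equivalent to $\cY + r \subseteq r\cG_e$ with $r = (s-t)^{-1} \ne 0$; in particular $y + r$ never vanishes modulo $p$ (otherwise the inclusion already fails and we are done). Since $e \le (p-1)/2$, the index $d = (p-1)/e$ is at least $2$, so there exists a nontrivial character $\chi$ of $\F_p^*$ with $\chi|_{\cG_e} \equiv 1$. For each $y \in \cY$ we then have $y + r = r\mu$ with $\mu \in \cG_e$, so $\chi(y+r) = \chi(r)$ is constant in $y$, giving $\left|\sum_{y=1}^{H} \chi(y+r)\right| = H$. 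On the other hand, after possibly splitting into two residue intervals to handle wrap-around modulo $p$, Lemma~\ref{lem:PVB} yields
\[
\left|\sum_{y=1}^{H} \chi(y+r)\right| \ll H^{1-1/\nu}\,p^{(\nu+1)/(4\nu^2)+o(1)}
\]
for any fixed integer $\nu \ge 1$. Comparing the two estimates forces $H \ll p^{(\nu+1)/(4\nu)+o(1)}$, contradicting $H = \lfloor p^{1/4+\eta}\rfloor$ once $\nu$ is chosen large enough and $p$ is large enough.

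The main technical point demanding care is the joint limit $\nu, p \to \infty$: the $o(1)$ in Lemma~\ref{lem:PVB} is meant for fixed $\nu$ as $p \to \infty$, so to drive the exponent $(\nu+1)/(4\nu)$ down to $1/4 + o(1)$ one must let $\nu = \nu(p)$ grow with $p$ sufficiently slowly (for instance $\nu = \lfloor \sqrt{\log p}\rfloor$) while keeping the implied $\nu$-dependent constants absorbed into $p^{o(1)}$. Given this calibration, the cost analysis is routine: the algorithm performs $H$ oracle calls followed by $H$ modular exponentiations of cost $(\log p)^{O(1)}$ each, for a total running time of $p^{1/4 + o(1)}$, as required.
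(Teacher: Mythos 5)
Your argument is essentially the paper's own: reduce the identity test to the containment $\cY + r \subseteq r\cG_e$, pick a nontrivial character trivial on $\cG_e$ (possible since $d=(p-1)/e\ge 2$) so the test interval gives a character sum of full size $H$, and contradict this via the Burgess bound of Lemma~\ref{lem:PVB}. Your worry about the joint $\nu,p\to\infty$ limit is handled more simply in the paper by fixing an arbitrary $\varepsilon>0$, choosing $\nu=\nu(\varepsilon)$ once and for all with $1/(4\nu)<\varepsilon$, obtaining a $p^{1/4+\varepsilon+o(1)}$-time algorithm, and then letting $\varepsilon$ be arbitrary; your explicit remark that $\chi$ should be trivial on $\cG_e$ (rather than ``of order $e$'' as the paper loosely writes) is a small but correct clarification.
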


\begin{proof}  We argue as in the proof of Theorem~\ref{thm:Medium_e_s}.
Recalling~\eqref{eq:Cond Fin}, we see that for any
multiplicative character $\chi$ of order $e$ of $\F_p^*$ we have
$$
\left|\sum_{y \in \cY} \chi(y - r) \right |  = \# \cY.
$$
 We now fix a sufficiently small $\varepsilon> 0$
and take   $\cY = \{1, \ldots, h\}$
where  $h = \rf{ p^{1/4+\varepsilon}}$. Applying Lemma~\ref{lem:PVB}
with a large enough $\nu$, we obtain a contradiction.
Since $\varepsilon> 0$ is arbitrary, the result now follows.
\end{proof}

Collecting the results of Theorems~\ref{thm:Medium_e_s}
and~\ref{thm:Large_e_s}, we obtain an algorithm of complexity
$e^{1/4} p^{o(1)}$ for any $e\le (p-1)/2$.

For small values of $e$ we can use Lemma~\ref{lem:ProdSet Fp lin 2}
to derive the following result:

\begin{theorem}
\label{thm:Small_e_s} For a prime $p$ and a positive integer
$e\mid p-1$ with $e\le p^{\delta}$ for some fixed $\delta >0$,
given an oracle $\Oes$ for some
unknown $s\in \F_p$ and $t\in \F_p$, there is a deterministic
algorithm to decide whether $s = t$ in time
$e^{c_0\delta}(\log p)^{O(1)}$,
where $c_0$ is some absolute constant.
\end{theorem}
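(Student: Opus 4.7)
The plan is to reduce the identity testing question directly to the small-coset bound supplied by Lemma~\ref{lem:ProdSet Fp lin 2}. Following the recipe~\eqref{eq:XY}, I will take $\cY = \{1, 2, \ldots, h\}$ with $h = \lceil C e^{c_0\delta} \rceil$ for a sufficiently large absolute constant $C$, where $c_0$ is the constant supplied by Lemma~\ref{lem:ProdSet Fp lin 2}, and set the test set to
\[
\cX = \{y^{-1} - t ~:~ y \in \cY\} \subseteq \F_p.
\]
For $p$ large enough every $y \in \cY$ is invertible in $\F_p$, and $x + t = y^{-1} \ne 0$ for each $x \in \cX$, so the forbidden query $x = -t$ never occurs.

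The algorithm itself is simply: query $\Oes$ at each $x \in \cX$, compute $(x+t)^e = y^{-e}$ by fast modular exponentiation, and declare $s = t$ iff $\Oesx = (x+t)^e$ for every $y \in \cY$. Correctness when $s = t$ is immediate. When $s \ne t$, put $r = (s-t)^{-1}$ and compute
\[
\frac{x+s}{x+t} = \frac{y^{-1} + r^{-1}}{y^{-1}} = \frac{y + r}{r},
\]
so the algorithm is fooled precisely when $\cY + r \subseteq r\cG_e$, which is exactly the coset condition~\eqref{eq:Cond Fin}.

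The heart of the argument is to rule out this inclusion. Applying Lemma~\ref{lem:ProdSet Fp lin 2} to the set $\cA = \{y + r ~:~ 1 \le y \le h\}$ (which matches the lemma's hypotheses, with its ``$s$'' played by our $r$) forces $h = O(e^{c_0\delta})$ whenever $\cA \subseteq r\cG_e$ and $e < p^\delta$. Choosing $C$ strictly larger than the implicit constant produces a contradiction, so some $y \in \cY$ must witness $(x+s)^e \ne (x+t)^e$ and correctness is established.

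The running time is $\#\cX = h = O(e^{c_0\delta})$ oracle calls, each followed by one modular exponentiation and one comparison costing $(\log p)^{O(1)}$ bit operations, for a total of $e^{c_0\delta}(\log p)^{O(1)}$ as claimed. I do not foresee a substantive obstacle: the only thing to double-check is that $\cY + r$ genuinely has the form required by Lemma~\ref{lem:ProdSet Fp lin 2} (it does, verbatim), and any boundary cases with very small $e$ or very small $h$ can be absorbed into the $(\log p)^{O(1)}$ factor by a trivial sweep.
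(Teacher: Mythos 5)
Your proof is correct and is precisely the argument the paper has in mind: reduce via~\eqref{eq:XY} and~\eqref{eq:Cond Fin} to ruling out $\cY + r \subseteq r\cG_e$ for $r = (s-t)^{-1}$, then invoke Lemma~\ref{lem:ProdSet Fp lin 2} to bound $h$. The computation $\frac{x+s}{x+t} = \frac{y+r}{r}$ and the handling of the trivially small-$e$ and $x+s=0$ edge cases are all in order.
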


For $e\mid p-1$ with $e\le (p-1)/2$ we define $N(e)$ as the
largest $H$ such that for some $x\in\F_p$ and $r\in\F_p^*$ we have
$x+1,\ldots,x+H\in r\cG_e$. We see from the proofs of Theorems
~\ref{thm:Medium_e_s} and~\ref{thm:Large_e_s} that
\begin{equation}
\label{eq:Nlarge e}
 N(e)\le e^{1/4+o(1)}
\end{equation}
as $e\to\infty$.

Lemma~\ref{lem:ProdSet Fp lin 2} gives the
following improvement of~\eqref{eq:Nlarge e} for small $e$.
If $e\le p^{\delta}$ then
\begin{equation}
\label{eq:Nsmall e}
N(e) = O\(e^{c_0\delta}\).
\end{equation}
In particular,
$$ N(e)=e^{o(1)}$$
as $e = p^{o(1)}$ and $e\to\infty$.

\subsection{Shifted Power Identity Testing with Unknown $t$}

For large values of $e$ we have the following simple result.

\begin{theorem}
\label{thm:Large_e_st} For a prime $p$ and a positive integer
$e\mid p-1$ with $e\le (p-1)/2$,
given two oracles $\Oes$  and $\Oet$ for some unknown
$s,t\in \F_p$, there is a deterministic algorithm to
decide whether $s = t$ in time $p^{1/2+o(1)}$.
\end{theorem}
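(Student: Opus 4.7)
The plan is to reduce the decision problem to finding a single witness $x\in\F_p$ with $\Oes(x)\ne\Oet(x)$, and to show that such a witness must exist in any initial segment of length slightly above $p^{1/2}$ whenever $s\ne t$. Concretely, fix a small $\varepsilon>0$, set $h=\rf{p^{1/2+\varepsilon}}$, query both oracles on each $x=1,\ldots,h$, and output ``$s=t$'' if and only if $\Oes(x)=\Oet(x)$ for every such $x$. Each comparison costs $(\log p)^{O(1)}$, giving total running time $p^{1/2+o(1)}$, as required.

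For correctness, the direction $s=t$ is immediate. Assume $s\ne t$ and, for contradiction, that $\Oes(x)=\Oet(x)$ for every $x\in\{1,\ldots,h\}$. The potentially problematic values $x=-s$ and $x=-t$ are handled for free: if either lies in the test interval, exactly one of the two oracle values at that $x$ is zero, and the discrepancy is detected. Hence we may assume $-s,-t\notin\{1,\ldots,h\}$, so that each ratio $(x+s)/(x+t)$ is well defined and, by hypothesis, lies in $\cG_e$. Since $e\le(p-1)/2$, the integer $d=(p-1)/e$ is at least $2$, so there exists a nontrivial multiplicative character $\chi$ of $\F_p^*$ of exact order $d$ whose kernel is $\cG_e$. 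For this $\chi$ the standing assumption gives
$$
\sum_{x=1}^{h}\chi\(\frac{x+s}{x+t}\)=h,
$$
whereas Lemma~\ref{lem:Weil3} bounds the same sum by $O(p^{1/2}\log p)$. For $h=\rf{p^{1/2+\varepsilon}}$ and $p$ sufficiently large this is a contradiction, so some $x$ in the interval must witness $s\ne t$.

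The main (and only) analytic ingredient is Lemma~\ref{lem:Weil3}; no sharper tool is needed to reach the exponent $1/2$, and the only points that require a moment's care are the bookkeeping for the evaluations at $x\in\{-s,-t\}$ and the existence of a nontrivial character of $\F_p^*$ whose kernel is $\cG_e$, which follows from the hypothesis $e\le(p-1)/2$. There is no real obstacle.
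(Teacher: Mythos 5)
Your proposal is correct and follows essentially the same route as the paper: query both oracles on an initial segment of length slightly above $p^{1/2}$ and use the Weil bound for the incomplete character sum of $\chi((x+s)/(x+t))$ to show that, when $s\ne t$, some $x$ in that segment must fail condition~\eqref{eq:Cond Prelim}. The paper cites Lemma~\ref{lem:Weil2} and takes $h=\rf{p^{1/2}(\log p)^2}$ where you cite Lemma~\ref{lem:Weil3} (the incomplete-sum consequence of Lemma~\ref{lem:Weil2}) and take $h=\rf{p^{1/2+\varepsilon}}$; these are cosmetic variants of the same argument, though strictly speaking the paper's choice of $h$ gives $p^{1/2+o(1)}$ outright whereas with a \emph{fixed} $\varepsilon$ your running time is $p^{1/2+\varepsilon+o(1)}$, so one should either let $\varepsilon\to 0$ with $p$ or adopt the logarithmic choice.
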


\begin{proof}
We note that by Lemma~\ref{lem:Weil2}, if $s\ne t$ then for $h =
\rf{p^{1/2}(\log p)^2}$ and sufficiently large $p$, the
condition~\eqref{eq:Cond Prelim} fails for at least one $x =1,
\ldots,h$. The algorithm is now immediate.
\end{proof}

For $e\le p^{3/4}$ we have a stronger result.

\begin{theorem}
\label{thm:Medium_e_st} For a prime $p$ and a positive integer
$e\mid p-1$ with $e\le (p-1)/2$,
given two oracles $\Oes$  and $\Oet$ for some unknown $s,t\in \F_p$,
there is a deterministic algorithm to
decide whether $s = t$ in time $ \max\{e^{1/2}p^{o(1)}, e^{2}p^{-1+o(1)}\}$.
\end{theorem}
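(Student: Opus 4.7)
The plan is as follows. Given the two oracles, I query both $\Oes$ and $\Oet$ at $x = 1,2,\ldots,h$ for a suitably chosen $h$, and output ``$s = t$'' if and only if $\Oes(x) = \Oet(x)$ for every such $x$. If $s = t$ the answer is plainly correct, so assume $s \ne t$; since any $x \in \{-s,-t\}$ already produces differing outputs (one side is zero, the other is not), the only way the algorithm can fail is if $a_x := (x+s)/(x+t) \in \cG_e$ for every $x \in [1,h]\setminus\{-s,-t\}$. It therefore suffices to rule this out for $h$ of the claimed order.

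Suppose such a configuration exists. For any pair $(x_1,x_2)$ in the allowed range we have $a_{x_1}a_{x_2} \in \cG_e$, so the $(h-O(1))^2$ pairs distribute among the $e$ values of $a \in \cG_e$. For $a \ne 1$, the condition $a_{x_1}a_{x_2} = a$ rewrites as
$$
(1-a)\,x_1x_2 + (s-at)(x_1+x_2) + (s^2-at^2) \equiv 0 \pmod p,
$$
which, after completing the symmetric square, becomes
$$
(x_1 + A)(x_2 + A) \equiv B \pmod p, \qquad
A = \frac{s-at}{1-a}, \qquad
B = A^2 - \frac{s^2-at^2}{1-a} = \frac{a(s-t)^2}{(1-a)^2}.
$$
Since $s \ne t$ in $\F_p$ and $a \in \cG_e \subseteq \F_p^*$, one has $B \not\equiv 0 \pmod p$, so Lemma~\ref{lem:CillGar} bounds the number of such pairs $(x_1,x_2) \in [1,h]^2$ by $O(h^{3/2}p^{-1/2} + h^{o(1)})$, uniformly in $A$ and $B$. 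The degenerate value $a=1$ reduces to the linear relation $x_1+x_2 \equiv -(s+t) \pmod p$, contributing at most $h$ pairs.

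Summing over $a \in \cG_e$ yields
$$
(h-O(1))^2 \le h + O\!\left(e\left(h^{3/2}p^{-1/2} + h^{o(1)}\right)\right),
$$
and a short calculation shows this fails once $h$ exceeds a sufficiently large constant times $\max\{e^{1/2},\, e^2/p\}\cdot p^{o(1)}$. Taking $h$ just above this threshold produces the desired contradiction, and the algorithm then runs in time $h(\log p)^{O(1)} = \max\{e^{1/2}p^{o(1)},\, e^2 p^{-1+o(1)}\}$, as claimed.

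The only step that is not purely mechanical is the algebraic identification of the quadratic congruence with a modular hyperbola and the observation that the resulting right-hand side $B$ is a nonzero residue modulo $p$; this is precisely where the hypothesis $s \ne t$ plays its essential role. Everything else is a direct application of Lemma~\ref{lem:CillGar} together with pigeonhole. I expect the main bookkeeping will be separating the two regimes $h \le p^{1/3}$ and $h > p^{1/3}$, where respectively the $h^{o(1)}$ term and the $h^{3/2}p^{-1/2}$ term dominate the right-hand side and lead to the two components of the claimed maximum.
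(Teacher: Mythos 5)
Your proposal is correct and follows essentially the same route as the paper: both arguments reduce the failure of the test to counting pairs $(x_1,x_2)$ with $(x_1+s)(x_2+s)/((x_1+t)(x_2+t)) \in \cG_e$, rewrite the equation $a_{x_1}a_{x_2}=a$ (for $a \ne 1$) as a modular hyperbola $(x_1+u)(x_2+u)\equiv v\pmod p$ with $v\ne 0$ precisely because $s\ne t$, and invoke Lemma~\ref{lem:CillGar} summed over $a\in\cG_e$. The only cosmetic difference is that the paper imports the pair-count estimate (the bound~\eqref{eq: Qst}) directly from the proof of Theorem~\ref{thm:Small_e Oracle}, whereas you re-derive it in place; the algebra ($u=A$, $v=B=a(s-t)^2/(1-a)^2$) and the threshold $h\asymp\max\{e^{1/2},e^2/p\}\,p^{o(1)}$ agree with the paper.
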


\begin{proof} We fix some integer $h$ and
assume that~\eqref{eq:Cond Prelim} holds for every
$x \in \{0,\ldots,h\}$ and $s\ne t$.

Then there are
 $(h+1)^2$ values of   $x,y  \in \{0,\ldots,h\}$ we have
$$
\frac{(x+s)(y+s)}{(x+t)(y+t)} \in \cG_e.
$$
On the other hand, as we have shown in the proof of Theorem~\ref{thm:Small_e Oracle}
(see the bound~\eqref{eq: Qst}),
there are at most $e(h^{3/2+o(1)} p^{-1/2} + h^{o(1)})$ such pairs $(x,y)$.

Thus, fixing an arbitrary $\varepsilon > 0$ and
taking
$$
h = \max\{e^{1/2}p^{\varepsilon}, e^{2}p^{-1+\varepsilon}\},
$$
we see that~\eqref{eq:Cond Prelim} cannot hold  for every
$x \in  \{0,\ldots,h\}$ unless $s = t$.
Since $\varepsilon > 0$ is arbitrary, the result follows.
\end{proof}

Combining Theorems~\ref{thm:Large_e_st}  and~\ref{thm:Medium_e_st},
we obtain an algorithm of complexity
$$
T_p(e)  =p^{o(1)}\left\{
\begin{array}{ll}
e^{1/2} & \text{ if } e< p^{2/3},\\
e^{2}p^{-1}& \text{ if }   p^{2/3}\le e < p^{3/4},\\
p^{1/2}& \text{ if }   p^{3/4}\le e\le (p-1)/2.
\end{array}
\right.
$$
In particular, $T_p(e) \le e^{2/3} p^{o(1)}$
for any $e\le (p-1)/2$.

For small values of $e$ we can use Lemma~\ref{lem:ProdSet Fp birat 2}
to derive the following result:

\begin{theorem}
\label{thm:Small_e_st} For a prime $p$ and a positive integer
$e\mid p-1$ with $e\le p^{\delta}$ for some fixed $\delta >0$,
given two oracles $\Oes$  and $\Oet$ for some unknown $s,t\in
\F_p$, there is a deterministic algorithm to decide whether $s =
t$ in time $e^{c_0\delta^{1/3}}(\log p)^{O(1)}$, where $c_0$ is
some absolute constant.
\end{theorem}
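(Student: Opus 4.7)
The plan is to mirror the strategy used for the known-$t$ version (Theorem~\ref{thm:Small_e_s}), but this time using the birational product set bound of Lemma~\ref{lem:ProdSet Fp birat 2} instead of the linear one of Lemma~\ref{lem:ProdSet Fp lin 2}. Concretely, I would choose a constant $c_0$ (the one coming from Lemma~\ref{lem:ProdSet Fp birat 2}) and a constant $C$, set
$$
h = \bigl\lceil C\, e^{c_0\delta^{1/3}}\bigr\rceil,
$$
and run the following test: query both oracles $\Oes$ and $\Oet$ at $x = 1, 2, \ldots, h$, and declare $s = t$ iff $\Oesx = \Oetx$ for every such $x$. The cost is $h$ oracle calls together with $h(\log p)^{O(1)}$ arithmetic, which is $e^{c_0\delta^{1/3}}(\log p)^{O(1)}$ as required.

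Correctness in the direction $s = t$ is immediate. For the converse, suppose $s \ne t$ but $(x+s)^e = (x+t)^e$ for every $x \in \{1,\ldots,h\}$. At most one such $x$ can satisfy $x+t = 0$, so for the remaining values we have
$$
\frac{x+s}{x+t} \in \cG_e.
$$
Hence the set $\cA = \{(x+s)/(x+t) : 1 \le x \le h,\ x+t\ne 0\}$ is contained in $\cG_e$, and it is a shift-by-one translate (after discarding at most one point) of the set appearing in Lemma~\ref{lem:ProdSet Fp birat 2}. That lemma (applied with the given $\delta$) forces $h = O(e^{c_0\delta^{1/3}})$ with an absolute implied constant; choosing $C$ larger than this implied constant produces a contradiction, so the algorithm's conclusion is correct.

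The only subtleties are bookkeeping points: one must check that $h$ satisfies the hypothesis $h < p^{c\nu^{-4}}$ of Lemma~\ref{lem:ProdSet Fp birat 1} (which is automatic since $e \le p^{\delta}$ and $\delta$ is fixed, so $h = p^{o(1)}$ as $p\to\infty$), and that removing the single possibly missing point $x = -t$ from $\cA$ does not affect the product-set lower bound that drives Lemma~\ref{lem:ProdSet Fp birat 2} (the loss is at most a factor depending only on $\nu$, absorbed into the constants). I do not expect any serious obstacle: the whole argument is a direct packaging of the birational subgroup-containment bound into an identity-test.
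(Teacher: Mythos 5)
Your proposal is essentially the paper's (implicit) argument: the paper states Theorem~\ref{thm:Small_e_st} with no separate proof block, deriving it directly from Lemma~\ref{lem:ProdSet Fp birat 2} exactly as you do, via choosing $\cY=\{1,\dots,h\}$ with $h$ just above the bound $O(e^{c_0\delta^{1/3}})$ and testing equality of oracle outputs, relying on~\eqref{eq:Cond Prelim}. Two small points in your ``subtleties'' paragraph are off but inessential: the exclusion of $x=-t$ is automatic rather than a loss (if $s\ne t$ and $x+t=0$ then $(x+t)^e=0\ne(x+s)^e$, so the test already fails at that $x$, and no point need be removed from $\cA$); and $h$ is not $p^{o(1)}$ as $p\to\infty$ since $\delta$ is a \emph{fixed} positive constant, so $h=p^{c_0\delta^{4/3}+o(1)}$ is a fixed small power of $p$ --- the required inequality $h<p^{c\nu^{-4}}$ is instead a consistency of constants already built into the deduction of Lemma~\ref{lem:ProdSet Fp birat 2} from Lemma~\ref{lem:ProdSet Fp birat 1}, not a consequence of $h$ being negligible.
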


In particular, we see from Theorem~\ref{thm:Small_e_st} that if $e
= p^{o(1)}$ and $e\to\infty$ then we can test whether $s=t$ in
time $e^{o(1)} (\log p)^{O(1)}$ in $e^{o(1)}$ oracle calls.

\section{Comments and Open Questions}

Probably the most challenging question is to design
a deterministic algorithm for the Hidden Shifted Power Problem
which is faster than interpolation.

We note that the  constants
in Lemmas~\ref{lem:ProdSet Fp lin 2} and~\ref{lem:ProdSet Fp birat 2}
and the bound~\eqref{eq:Nsmall e},
can easily be made explicit. It is a natural question to obtain
good numerical values for these constants and thus fully explicit versions
of Theorem~\ref{thm:Small_e_s} and~\ref{thm:Medium_e_st}.

As we have mentioned, Lemma~\ref{lem:GProdSet Fp lin 1} solves an open problem from~\cite{CillGar}. Furthermore, the arguments used in the proof of Lemmas~\ref{lem:GProdSet Fp lin 1}
and~\ref{lem:ProdSet Fp birat 1} can be used for several other problems.
They can also be used to generalise and improve some of the results
of~\cite{BKS2} about intersections of intervals and subgroups
of $\F_p^*$.

We have proven that for any $e\le(p-1)/2$ the minimal number of calls
to oracle $\Oes$ to find $s$ is $p^{o(1)}$.
However, one can study a more
general problem. Let $\cA\subseteq\F_p$. We define  $\OAs$ as an
oracle that on every input $x \in \F_p$ outputs $1$ is $x+s\in \cA$
and $0$ otherwise, where $s$ is a ``hidden'' element $s \in \F_p$.

\begin{question}
\label{Arbset} Is it true that for any fixed $\delta\in(0,1/2)$
and for any set $\cA\subseteq\F_p$ with $\delta p\le \# A\le(1-\delta)p$ there is a deterministic algorithm that finds $s$ after
$p^{o(1)}$ calls (or even $O(\log p)$ calls) to the oracle $\OAs$
as $p\to\infty$?
\end{question}

It is also important for applications to pairing based
cryptography to extend our results to arbitrary finite fields. We
note that analogues of some of the results we have used are also
known for arbitrary finite fields. For example,  versions of
Lemma~\ref{lem:PVB} has recently been obtained for arbitrary
finite fields, see~\cite{Chang1,Chang2,Kon}.  Lemmas~\ref{lem:Weil1},
\ref{lem:Weil2}
and~\ref{lem:Weil3}
can also be easily extended to arbitrary fields.  However
analogues of many other results, such as
Lemmas~\ref{lem:CillGar},~\ref{lem:ShkVyu} and~\ref{lem:TripleSums} are not
known for arbitrary finite fields.

As we have mentioned, there are efficient quantum algorithms to solve
the Hidden Shifted Power Problem. However they require a {\it quantum\/}
oracle $\Oes$. It is certainly natural to investigate how much speed-up
quantum algorithms can provide in the case of a {\it classically\/} given
oracle $\Oes$ (that is, as in all results of this work).

Finally, it is also interesting to consider similar problems in
the case when the ``noisy'' oracles, which, with a certain
probability, for a given input
does not return any answer or even may return a wrong answer.

\section*{Acknowledgement}

The authors would like to thank Alfred Menezes for drawing out attention
to this problem, to Luis Pardo for discussions of the arithmetic
Nullstellensatz and to Lajos R{\'o}nyai for fruitful
discussions of the algorithm of Lemma~\ref{lem:BinEq}.

The idea of this work  started while the third author was visiting the
University of Waterloo whose
hospitality and perfect working conditions are gratefully appreciated.

The research of  J.~B. was partially supported
by National Science Foundation   Grant DMS-0808042, that of  S.~K.
by Russian Fund for Basic Research Grant N.~11-01-00329
and that of I.~S. by Australian Research Council Grant  DP1092835.

\end{document}